\newtheorem{theorem}{Theorem}[section]
\newtheorem{proposition}{Proposition}[section]
\newtheorem{lemma}{Lemma}[section]
\newcommand{\beqa}{\begin{eqnarray}}
\newcommand{\eeqa}{\end{eqnarray}}
\newcommand{\rf}[1]{(\ref{#1})}
\newcommand{\la}{\lambda}
\numberwithin{equation}{section}
\begin{document}

\bigskip \begin{flushright}
YITP-SB-12-25
\end{flushright}

\bigskip

\begin{center}
\textbf{\Large Antiperiodic dynamical 6-vertex model I:\\ Complete spectrum by
SOV, matrix elements of the identity\\ on separate states and connections to
the periodic 8-vertex model}

\vspace{50pt}

{\large G.~Niccoli}\footnote{%
YITP, Stony Brook University, New York 11794-3840, USA,
niccoli@max2.physics.sunysb.edu}

\vspace{50pt}

\vspace{50pt}
\end{center}

\begin{itemize}
\item[] {\small \textbf{Abstract}\thinspace \thinspace \thinspace The spin-1/2
highest weight representations of the dynamical 6-vertex and the standard
8-vertex Yang-Baxter algebra on a finite chain are considered in this paper.
In particular, the integrable quantum models associated to the corresponding transfer matrices under antiperiodic boundary conditions for the dynamical 6-vertex and periodic boundary conditions for the 8-vertex case are here analyzed.

For the antiperiodic dynamical 6-vertex transfer matrix defined on chains with an odd number of sites, we
adapt the Sklyanin's quantum separation of variable (SOV) method and explicitly construct the SOV representations from the original space of the
representations. In this way we provide the complete characterization of the eigenvalues
and the eigenstates proving also the simplicity of its spectrum. Moreover,
we characterize the matrix elements of the identity on separated states of
this model by determinant formulae. The matrices entering in these
determinants have elements given by sums over the SOV spectrum of the
product of the coefficients of the separate states. This SOV analysis is
done without any need to be reduced to the case of the so-called elliptic
roots of unit and the results here derived define the required setup to
extend to the dynamical 6-vertex model the approach recently developed in 
\cite{FarXYZGMN12-SG}-\cite{FarXYZN12-2} to compute the form factors of the
local operators in the SOV framework, these results will be presented in a
future publication.

For the periodic 8-vertex transfer matrix, we prove that its
eigenvalues have to satisfy a fixed system of equations. In the case of a
chain with an odd number of sites, this system of equations is the same
entering in the SOV characterization of the antiperiodic dynamical 6-vertex
transfer matrix spectrum. This implies that the set of the periodic 8-vertex
eigenvalues is contained in the set of the antiperiodic dynamical 6-vertex
eigenvalues. A criterion is introduced to find simultaneous
eigenvalues of these two transfer matrix and associate to any of such
eigenvalues one nonzero eigenstate of the periodic 8-vertex transfer matrix
by using the SOV results. Moreover, a preliminary discussion on the
degeneracy occurring for odd chains in the periodic 8-vertex transfer matrix
spectrum is also presented.}
\end{itemize}

\newpage
\tableofcontents \newpage

\section{Introduction}

In this paper we analyze two classes of lattice integrable quantum models
characterized in the quantum inverse scattering method (QISM) \cite%
{FarXYZSF78}-\cite{FarXYZIK82} by monodromy matrices which are solutions of
the dynamical Yang-Baxter equation w.r.t. the 6-vertex dynamical R-matrix
and of the (standard) Yang-Baxter equation w.r.t. the 8-vertex R-matrix,
respectively. The representation theory of the dynamical 6-vertex
Yang-Baxter algebra was introduced by Felder in \cite{FarXYZFelder94} by the
so-called theory of the elliptic quantum groups, see also \cite%
{FarXYZFelderV96}. There, it was recognized that the known Boltzmann weights
defining the SOS (solid on solid) statistical models \cite{FarXYZBa72-2}
when opportunely reorganized in a $4\times 4$ matrix define the R-matrix
solution of the dynamical 6-vertex Yang-Baxter algebra. The prototypical
elements in these classes of integrable quantum models are constructed by
defining representations of the corresponding monodromy matrices on chains
of 2-dimensional representations (spin-1/2 quantum chains). Under
homogeneous limits these representations define in the dynamical 6-vertex
case the SOS model \cite{FarXYZBa72-2}, \cite{FarXYZBa72-1}-\cite{FarXYZJMO}
while in the 8-vertex case the spin-1/2 XYZ quantum chain \cite{FarXYZH28}-%
\cite{FarXYZLM66}. It is worth recalling that the monodromy matrices of these models are related by the Baxter's {\it intertwining vectors} and the spectral problems of the transfer matrices under periodic boundary conditions have been analyzed by Bethe ansatz and Q-operator techniques\footnote{For the Q-operator construction see \cite{FarXYZBa72-2,FarXYZBa72} and also
the series of papers \cite{FarXYZFMcCoy03}-\cite{FarXYZFMcCoy08}.} in the
case of chains with an even number of sites.  In \cite{FarXYZBa72-1, FarXYZBa72-2, FarXYZBa72-3}, R. Baxter has introduced the intertwining vectors, also called gauge transformations\footnote{Let us comment that historically R. Baxter has used a vectorial representation for these transformations, e.g. see equation (3.3) of \cite{FarXYZBa72-2}, which explains the original use of the terminology intertwining vectors. Here, we use the terminology gauge transformations, also used in \cite{FarXYZFT79}, to refer to the matrix representation of the same transformations presented in \rf{8V-6VD-GT0}.}, in order to be able to use Bethe ansatz techniques to
analyzed the spectral problem (eigenvalues \& eigenstates) of the 8-vertex
transfer matrix reducing it to one of 6-vertex type. The use of gauge
transformations allows in particular to overcome the problem of the absence
of reference states opening the possibility to analyze the 8-vertex spectral
problem by using the algebraic Bethe ansatz (ABA) \cite{FarXYZSF78}-\cite%
{FarXYZFST80}, as pioneered by Faddeev and Takhtajan in \cite{FarXYZFT79}
while ABA analysis for the SOS model with periodic boundary conditions has
been developed in \cite{FarXYZFelderVa96}. However, it is worth remarking
that, a part the general problem
related to the proof of the completeness of the spectrum description\footnote{%
In fact, for the periodic 8-vertex transfer matrices the completeness of the
spectrum description is verified only by some numerical analysis \cite%
{FarXYZBa02}.}, the analysis by Bethe ansatz methods of the spectrum of these models
leads to the introduction of two constrains. The first constrain is on the number of sites of the quantum
chains which has to be even. This is required to obtain the commutativity of
the dynamical 6-vertex transfer matrix which holds only for the reduction to
the total spin zero-eigenspace under periodic boundary conditions. The
second constrain is on the allowed values of the coupling constant $\eta $
of the 8-vertex transfer matrix which has to be restricted to the so-called 
\textit{cyclic values or elliptic roots of unit} (i.e. when $\eta $ belongs
to an integer square lattice with steps the periods of the theta functions).
This is required in order to construct 8-vertex transfer matrix eigenstates
by finite sums of the dynamical 6-vertex ones. In addition to the previously
described constrains in the algebraic Bethe ansatz framework the lack of a
scalar product analogue to the Slavnov's formula \cite%
{FarXYZSlav89,FarXYZSlav97,FarXYZKitMT99} is the first fundamental missing
step toward the computation of matrix elements of local operators. It is
then clear the need to overcome these problems in order to compute
correlation functions.

In the present paper we implement a modified version of Sklyanin's quantum
separation of variables (SOV) \cite{FarXYZSk1}-\cite{FarXYZSk3}. In
particular, we derive the complete characterizations of the spectrum of the
antiperiodic\footnote{This quantum integrable model has been introduced in \cite{FarXYZDJMO}.}
dynamical 6-vertex transfer matrix defined on chains with an odd number of sites. Moreover, we compute the matrix elements of the
identity for general \textit{separate states}\footnote{See Section \ref{FarXYZseparate-states} for the definition.} which apply in
particular for the eigenstates of the antiperiodic dynamical 6-vertex
transfer matrix. Let us comment that the existing results \cite%
{FarXYZFelder-Schorr-99,FarXYZSchorr} for the antiperiodic dynamical
6-vertex model are mainly restricted to the construction of the functional
separation of variables of Sklyanin. In this functional version an SOV
representation of the dynamical 6-vertex Yang-Baxter algebra is defined on a
space of symmetric functions leading only to the description of the wave
functions of the transfer matrix eigenstates. In fact, the explicit
construction of the SOV representation as well as of the transfer matrix
eigenstates in the original representation space of the quantum chain were so far missing.

For the periodic 8-vertex transfer matrix, we prove that the set of
all its eigenvalues is contained in the set of the solutions to an
inhomogeneous system of $\mathsf{N}$ quadratic equations in $\mathsf{N}$
unknown, where $\mathsf{N}$ is the number of sites of the chain. In the case $%
\mathsf{N}$ odd this system coincides with the one entering in the SOV
characterization of the antiperiodic dynamical 6-vertex transfer matrix
spectrum and so the set of the periodic 8-vertex eigenvalues is proven to be
contained in the antiperiodic dynamical 6-vertex one. Let us recall that the
analysis of the odd $\mathsf{N}$ case is of particular interest for the
periodic 8-vertex transfer matrix as in this case the Bethe ansatz analysis
of \cite{FarXYZBa72-1, FarXYZBa72-2, FarXYZBa72-3,FarXYZFT79} does not
apply. We use the Baxter's gauge transformations to further relate the
periodic 8-vertex and the antiperiodic dynamical 6-vertex transfer matrices,
allowing to analyze the 8-vertex spectral problem in terms of the SOV
characterization here derived. It is worth to stress that in the \textit{%
dynamical quantum space }$\mathbb{\bar{D}}_{\mathsf{(6VD)},\mathsf{N}}$,
characterized by the condition that the antiperiodic dynamical 6-vertex
transfer matrix is a one parameter family of commuting operators, these
gauge transformations are not invertible operators. Nevertheless, we are able
to use them to get a sufficient criterion which allows us to select
simultaneous eigenvalues of the antiperiodic dynamical 6-vertex and the
periodic 8-vertex transfer matrix and to associate to any one of these
eigenvalues one corresponding nonzero 8-vertex eigenstate. In the paper we
will explain as the non invertible character of the gauge transformations on
the dynamical quantum space $\mathbb{\bar{D}}_{\mathsf{(6VD)},%
\mathsf{N}}$ is a natural requirement as these transformations link transfer
matrices with different degeneracy properties. Indeed, while the
antiperiodic dynamical 6-vertex transfer matrix is proven here to have
simple spectrum the periodic 8-vertex one has degeneracy even for completely
general inhomogeneities. A preliminary analysis of this degeneracy issue is
here presented by analyzing explicitly the periodic 8-vertex transfer matrix
spectrum for chains with one and three sites.

The results derived in the present paper represent the first fundamental
step in an approach to solve integrable quantum models which can be
considered as the generalization to the SOV framework of the Lyon group
method \cite{FarXYZKitMT99}, \cite{FarXYZMaiT00}-\cite{FarXYZKKMNST08}. The
use of SOV is a strength point of our approach as it works for a large class
of integrable quantum models, under simple conditions it leads to the
complete construction of both the eigenvalues and the eigenstates of the
transfer matrix and the simplicity of the spectrum can be easily shown in
this framework. Moreover, the analysis developed in the present paper and
that implemented previously in \cite{FarXYZGMN12-SG}-\cite{FarXYZN12-2}
suggest that this approach can lead to an universal representation of both
the spectrum and the dynamics of a class of integrable quantum models which
were not entirely solvable with other methods\footnote{%
Like the algebraic Bethe ansatz, the coordinate Bethe ansatz \cite%
{FarXYZBe31}, \cite{FarXYZBaxBook} and \cite{FarXYZABBBQ87}, the Baxter
Q-operator method \cite{FarXYZBaxBook} and the analytic Bethe ansatz \cite%
{FarXYZRe83-1}-\cite{FarXYZRe83-2}.}. Indeed, this is the case for all the
key integrable quantum models analyzed so far in \cite{FarXYZGMN12-SG}-\cite%
{FarXYZN12-2}. More in detail, in \cite{FarXYZN12-0} and \cite{FarXYZN12-1},
the XXZ spin-1/2 quantum chain\footnote{%
Let us comment that previous results on this model with antiperiodic
boundary conditions were mainly given by the Q-operator construction \cite%
{FarXYZBBOY95} and the functional separation of variables \cite%
{FarXYZSk2,FarXYZNWF09}. Moreover see \cite{FarXYZG08} for the eigenvalue
analysis of the XXZ spin-1/2 chain with general antiperiodic boundary
conditions by a functional method based on the Yang-Baxter algebra; method
also applies to open chain for general integrable boundary conditions.} \cite%
{FarXYZH28}-\cite{FarXYZLM66} and the higher spin-s XXX quantum chain%
\footnote{%
Instead in the periodic chain matrix elements of local operators were
compute in the ABA framework in \cite{FarXYZK01,FarXYZCM07}.}, both under
antiperiodic boundary conditions, have been characterized and the form
factors of the local spin operators have been represented in a determinant
form. Similar results for the form factors have been obtained previously by
this approach in \cite{FarXYZGMN12-SG,FarXYZGMN12-T2} for the lattice
quantum sine-Gordon model \cite{FarXYZFST80,FarXYZIK82}, the chiral Potts
model \cite{FarXYZBS90}-\cite{FarXYZTarasovSChP} and the $\tau _{2}$-model 
\cite{FarXYZBa04}. These results are obtained by using as background the
complete SOV spectrum characterization constructed in \cite{FarXYZNT-10}-%
\cite{FarXYZN-11} for the lattice quantum sine-Gordon model and in\footnote{%
See \cite{FarXYZGIPS06}-\cite{FarXYZGIPS09} for a first analysis by SOV
method of the $\tau _{2}$-model and some results on form factors in the
restricted case of the generalized Ising model.} \cite{FarXYZGN12} for the
general cyclic representations of the 6-vertex Yang-Baxter algebra
corresponding to the $\tau _{2}$-model and the chiral Potts model. Moreover,
in \cite{FarXYZN12-2} the SOV setup has been implemented and the matrix
elements of some interesting quasi-local string of local operators have been
computed for the integrable quantum model associated to the spin-1/2
representations of the reflection algebra \cite{FarXYZGau71}-\cite%
{FarXYZGZ94}, under quite general non-diagonal boundary conditions. In all
these models the matrix elements of local operators on \textit{separate
states} are characterized by determinant formulae written as simple
modifications of those of the identity. The main differences in all these
formulae are only due to model dependent features, like the nature of the
spectrum of the quantum separate variables and the form of the SOV
reconstruction of local operators. Let us comment that in the literature
there exist previous results on matrix elements of local operators which
even if developed by different approaches made use of quantum separation of
variables. The results in Smirnov's paper \cite{FarXYZSm98} are of special
interest; there, for the quantum integrable Toda chain \cite{FarXYZSk1}, the
form factors of a conjectured\footnote{%
There this conjecture is required by the absence of a direct SOV
reconstruction of local operators. Later a reconstruction has been given in 
\cite{FarXYZOB-04} w.r.t. a new set of quantum separate variables defined by
a change of variables on the original Sklyanin's ones.} basis of local
operators are derived in Sklyanin's SOV framework by determinant formulae
which confirm the universal picture outlined. It is also worth pointing out
that the form factors\footnote{%
Form factors which have been rederived in \cite{FarXYZJMS11-03} also by
exploiting previous results established in the series of papers \cite%
{FarXYZJMS09-02}-\cite{FarXYZJMS11-02} for the infinite volume limit of the
XXZ spin-1/2 chain.} of the restricted sine-Gordon model at the
reflectionless points in the S-matrix formulation\footnote{%
See \cite{FarXYZA.Zam77}-\cite{FarXYZM92} and references therein.} \cite%
{FarXYZBBS96,FarXYZBBS97} admit once again determinant representations and
the connection with SOV is established on the basis of the semi-classical
analysis of \cite{FarXYZBBS96}, used there also as a tool to overcome the
problem\footnote{%
This is a longstanding problem in the S-matrix formulation which has been so
far addressed by exploiting the description \cite{FarXYZZam88}-\cite%
{FarXYZGM96} of massive IQFTs as (superrenormalizable) perturbations of
conformal field theories \cite{FarXYZVi70}-\cite{FarXYZDFMS97} by relevant
local fields. Several results are known which allows to classify the local
fields of massive theories (i.e. the solutions to the form factor equations 
\cite{FarXYZKW78}-\cite{FarXYZD04}) in terms of those of the ultraviolet
conformal field theories, see for example \cite{FarXYZCM90}-\cite%
{FarXYZJMT03} and the series of works \cite{FarXYZDN05-1}-\cite{FarXYZD09}.}
of the local field identification.

\section{The dynamical 6-vertex models: spectrum and elementary matrix
elements}

\subsection{The dynamical 6-vertex models}

In the following, we introduce an operator $\tau $ whose eigenvalues on the
space of the representation coincide with the dynamical parameter $t$. The
aim is to recover a separate description for the dynamical parameter which
will be particularly useful in the SOV description of the antiperiodic
dynamical 6-vertex spectral problem.

\subsubsection{Representation spaces of dynamical and spin operators\label{FarXYZRep-space}}

Let us introduce a couple of \textit{dynamical}\ operators $\tau $ and $\mathsf{T}_{\tau }^{\pm }$ which satisfy the following commutation relations:
\begin{equation}
\mathsf{T}_{\tau }^{\pm }\tau =(\tau \pm \eta )\mathsf{T}_{\tau }^{\pm },
\label{FarXYZDyn-op-comm}
\end{equation}%
and $\mathsf{N}$ copies of (local spin) $sl(2)$ generators $S_{n}^{a}$ and
let us impose the following commutation relations:%
\begin{equation}
\lbrack S_{n}^{a},S_{m\neq n}^{b}]=[S_{n}^{a},\tau ]=[S_{n}^{a},\mathsf{T}%
_{\tau }^{\pm }]=0\text{ \ \ }\forall n,m\in \{1,...,\mathsf{N}\}\text{ and }%
a,b=x,y,z.
\end{equation}%
Then the space of the representation of these dynamical and spin operators
can be chosen as it follows:%
\begin{equation}
\mathbb{D}_{\mathsf{(6VD)},\mathsf{N}}\equiv \mathbb{D}_{\mathsf{N}}\otimes 
\underset{\mathsf{N}}{\underbrace{\mathbb{C}^{2}\otimes \otimes \otimes 
\mathbb{C}^{2}}},
\end{equation}%
where $\mathbb{D}_{\mathsf{N}}$\ is the space of the representation of the
dynamical operators which is infinite dimensional in our definition\footnote{%
Note that in the root of unit case, we can define also a cyclic
representation for the operator $\tau $. This point in the present formalism
will be described elsewhere anyhow for the antiperiodic chain this cyclicity
condition does not play a fundamental role as it was in the periodic case
for the application of algebraic Bethe ansatz to the 8-vertex model.}. Here,
we have chosen for all the $sl(2)$ generators the spin-1/2 representation;
i.e. a 2-dimensional local quantum space $\mathbb{C}^{2}$ is associated to
any site of the chain and the local spin generators are represented by the $%
2\times 2$ Pauli matrices $\sigma _{n}^{a}$.

Moreover, we introduce the following definition of left (covectors) and
right (vectors) $\tau $-eigenbasis of $\mathbb{D}_{\mathsf{N}}^{\mathcal{L}}$
and $\mathbb{D}_{\mathsf{N}}^{\mathcal{R}}$, respectively:%
\begin{equation}
\langle t(a)|\equiv \langle t(0)|\mathsf{T}_{\tau }^{a},\text{ \ \ \ }%
|t(a)\rangle \equiv \mathsf{T}_{\tau }^{-a}|t(0)\rangle ,\text{ \ }\forall
a\in \mathbb{Z}\text{,}  \label{FarXYZt-dyn-sp}
\end{equation}%
with:%
\begin{equation}
\langle t(a)|\tau =t(a)\langle t(a)|,\text{ \ \ \ }\tau |t(a)\rangle
=t(a)|t(a)\rangle \text{,\ \ \ \ \ }t(a)\equiv -\frac{\eta }{2}a\text{\ \ \ }%
\forall a\in \mathbb{Z},
\end{equation}%
where we fix $\langle t(a)|t(b)\rangle =\delta _{a,b}\text{ \ },\forall a,b\in \mathbb{Z}%
\text{.}$ Moreover, defined the following left and right spin basis:%
\begin{equation}
\langle n,h_{n}|\sigma _{n}^{z}=(1-2h_{n})\langle n,h_{n}|,\text{ \ \ \ \ \ }%
\sigma _{n}^{z}|n,h_{n}\rangle =(1-2h_{n})|n,h_{n}\rangle ,\text{ \ \ }%
h_{n}\in \{0,1\},
\end{equation}%
with $\langle n,h_{n}|n,h_{n}^{\prime }\rangle =\delta _{h_{n},h_{n}^{\prime }}%
\text{,}$ in each local quantum spin chain of the representation, we can introduce the
left and right dynamical-spin basis in $\mathbb{D}_{\mathsf{(6VD)},\mathsf{N}%
}^{\mathcal{L}}$ and $\mathbb{D}_{\mathsf{(6VD)},\mathsf{N}}^{\mathcal{R}}$,
respectively:%
\begin{equation}
\otimes _{n=1}^{\mathsf{N}}\langle n,h_{n}|\otimes \langle t(a)|,\text{ \ \
\ \ \ \ \ }\otimes _{n=1}^{\mathsf{N}}|n,h_{n}\rangle \otimes |t(a)\rangle ,
\end{equation}%
composed of common eigenstates of the commuting operators $\tau $ and $%
\sigma _{n}^{z}$. A scalar product is introduced in the space $\mathbb{D}_{%
\mathsf{(6VD)},\mathsf{N}}^{\mathcal{R}}$ by defining its action on the
elements of the dynamical-spin basis: 
\begin{equation}
(\otimes _{n=1}^{\mathsf{N}}|n,h_{n}\rangle \otimes |t(a)\rangle ,\otimes
_{n=1}^{\mathsf{N}}|n,h_{n}^{\prime }\rangle \otimes |t(a^{\prime })\rangle
)=\delta _{a,a^{\prime }}\prod_{n=1}^{\mathsf{N}}\delta
_{h_{n},h_{n}^{\prime }}.
\end{equation}%
Note that we have defined the representation in a way that the spectrum
(eigenvalues) of the operator $\tau $ contains that of $-\eta \mathsf{S}$/2
where:%
\begin{equation}
\text{$\mathsf{S}$}=\sum_{n=1}^{\mathsf{N}}\sigma _{n}^{z}
\label{FarXYZDef-S}
\end{equation}%
is\ the total $z$-component of the spin; the reason for that will be clear
in the following.

\subsubsection{Representations of the dynamical 6-vertex models}

Let us define the elliptic dynamical 6-vertex R-matrix\footnote{%
The presentation of our results will be done directly in this elliptic case,
however, it is interesting to remark that also the trigonometric case
corresponding to the following choice of dynamical R-matrix: 
\begin{equation*}
a(\lambda )=\sinh (\lambda +\eta ),\quad b(\lambda |\tau )=\frac{\sinh
\lambda \sinh (\tau +\eta )}{\sinh \tau },\quad c(\lambda |\tau )=\frac{%
\sinh \eta \sinh (\tau +\lambda )}{\sinh \tau },
\end{equation*}%
can be similarly described in our approach.}:$\allowbreak $%
\begin{equation}
R_{0a}^{\mathsf{(6VD)}}(\lambda |\tau )=\left( 
\begin{array}{cccc}
a(\lambda ) & 0 & 0 & 0 \\ 
0 & b(\lambda |\tau ) & c(\lambda |\tau ) & 0 \\ 
0 & c(\lambda |-\tau ) & b(\lambda |-\tau ) & 0 \\ 
0 & 0 & 0 & a(\lambda )%
\end{array}%
\right) ,  \label{FarXYZop-L}
\end{equation}%
where $a(\lambda )$, $b(\lambda |\tau )$ and $c(\lambda |\tau )$ are defined
by:%
\begin{equation}
a(\lambda )=\theta (\lambda +\eta ),\quad b(\lambda |\tau )=\frac{\theta
(\lambda )\theta (\tau +\eta )}{\theta (\tau )},\quad c(\lambda |\tau )=%
\frac{\theta (\eta )\theta (\tau +\lambda )}{\theta (\tau )},
\label{FarXYZDef-ell-abc}
\end{equation}%
here and in the following we use the notation:%
\begin{equation}
\theta (\lambda )=\theta _{1}(\lambda |\omega ),
\end{equation}%
where\footnote{%
In this paper the $\theta _{i}(\lambda |\omega )$ for $i\in \{1,..4\}$ are
the standard theta functions as defined for example at page 877 of \cite%
{FarXYZTables of integrals}, where we are using for the argument of these
functions $(\lambda |\omega )$ instead of $(u|\tau )$.} $\theta _{1}(\lambda
|\omega )$ is the standard theta-1 elliptic function of modular parameter $%
\omega $. Then, the R-matrix $R_{1,2}^{\mathsf{(6VD)}}(\lambda |\tau )$ is
solution of the following dynamical Yang-Baxter equation:%
\begin{equation}\label{6VD-YBeq0}
R_{1,2}^{\mathsf{(6VD)}}(\lambda _{12}|\tau +\eta \sigma _{a}^{z})R_{1,a}^{%
\mathsf{(6VD)}}(\lambda _{1}|\tau )R_{2,a}^{\mathsf{(6VD)}}(\lambda
_{2}|\tau +\eta \sigma _{1}^{z})=R_{2,a}^{\mathsf{(6VD)}}(\lambda _{2}|\tau
)R_{1,a}^{\mathsf{(6VD)}}(\lambda _{1}|\tau +\eta \sigma _{2}^{z})R_{1,2}^{%
\mathsf{(6VD)}}(\lambda _{12}|\tau ),
\end{equation}%
where $\lambda _{12}\equiv \lambda _{1}-\lambda _{2}$. It is possible to
introduce the following dynamical 6-vertex monodromy matrix:%
\begin{equation}
\mathsf{M}_{0}^{\mathsf{(6VD)}}(\lambda |\tau )\equiv R_{0,\mathsf{N}}^{%
\mathsf{(6VD)}}(\lambda-\xi
_{\mathsf{N}} |\tau +\eta \sum_{a=1}^{\mathsf{N}-1}\sigma
_{a}^{z})\cdots R_{0,1}^{\mathsf{(6VD)}}(\lambda-\xi
_{1} |\tau )\equiv \left( 
\begin{array}{cc}
\mathsf{A}(\lambda |\tau ) & \mathsf{B}(\lambda |\tau ) \\ 
\mathsf{C}(\lambda |\tau ) & \mathsf{D}(\lambda |\tau )%
\end{array}%
\right) _{0}\text{,}  \label{FarXYZ6VD-Monodromy}
\end{equation}
where the $\xi
_{n}$ for $n\in\{1,...,-{\mathsf{N}}\}$ are parameters of the model called inhomogeneities. Then this monodromy matrix is a solution of the same type of dynamical Yang-Baxter equation:%
\begin{equation}
R_{1,2}^{\mathsf{(6VD)}}(\lambda _{12}|\tau +\eta \text{$\mathsf{S}$})%
\mathsf{M}_{1}^{\mathsf{(6VD)}}(\lambda _{1}|\tau )\mathsf{M}_{2}^{\mathsf{%
(6VD)}}(\lambda _{2}|\tau +\eta \sigma _{1}^{z})=\mathsf{M}_{2}^{\mathsf{%
(6VD)}}(\lambda _{2}|\tau )\mathsf{M}_{1}^{\mathsf{(6VD)}}(\lambda _{1}|\tau
+\eta \sigma _{2}^{z})R_{1,2}^{\mathsf{(6VD)}}(\lambda _{12}|\tau ),
\label{FarXYZDyn-YB-Eq0}
\end{equation}%
where $\mathsf{S}$ is\ the total $z$-component of the spin defined in (\ref%
{FarXYZDef-S}). Moreover, it is worth remarking that the following commutation relations
hold:%
\begin{equation}
\lbrack \mathsf{A}(\lambda |\tau ),\tau ]=[\mathsf{B}(\lambda |\tau ),\tau
]=[\mathsf{C}(\lambda |\tau ),\tau ]=[\mathsf{D}(\lambda |\tau ),\tau ]=0,
\label{FarXYZss-ABCD-tau-com}
\end{equation}%
and%
\begin{equation}
\lbrack \mathsf{A}(\lambda |\tau ),\mathsf{S}]=[\mathsf{D}(\lambda |\tau ),%
\mathsf{S}]=0,\text{ \ }[\mathsf{C}(\lambda |\tau ),\mathsf{S}]=-2\mathsf{C}%
(\lambda |\tau ),\text{ \ \ }[\mathsf{B}(\lambda |\tau ),\mathsf{S}]=2%
\mathsf{B}(\lambda |\tau ).  \label{FarXYZXsc-S-spin}
\end{equation}%
Note that defined:%
\begin{equation}
\mathsf{T}_{\tau }^{\pm \sigma _{a}^{z}}\equiv \left( 
\begin{array}{cc}
\mathsf{T}_{\tau }^{\pm } & 0 \\ 
0 & \mathsf{T}_{\tau }^{\mp }%
\end{array}%
\right) _{a},
\end{equation}%
we can rewrite the dynamical Yang-Baxter equations in the following form:%
\begin{equation}
R_{1,2}^{\mathsf{(6VD)}}(\lambda _{12}|\tau +\eta \text{$\mathsf{S}$})%
\mathsf{M}_{1}^{\mathsf{(6VD)}}(\lambda _{1}|\tau )\mathsf{T}_{\tau
}^{\sigma _{1}^{z}}\mathsf{M}_{2}^{\mathsf{(6VD)}}(\lambda _{2}|\tau )%
\mathsf{T}_{\tau }^{-\sigma _{1}^{z}}\left. =\right.\mathsf{M}_{2}^{\mathsf{%
(6VD)}}(\lambda _{2}|\tau )\mathsf{T}_{\tau }^{\sigma _{2}^{z}}\mathsf{M}%
_{1}^{\mathsf{(6VD)}}(\lambda _{1}|\tau )\mathsf{T}_{\tau }^{-\sigma
_{2}^{z}}R_{1,2}^{\mathsf{(6VD)}}(\lambda _{12}|\tau ).
\end{equation}%
Let us remark that while the dynamical 6-vertex generators $\mathsf{B}%
(\lambda |\tau )$ and $\mathsf{C}(\lambda |\tau )$ are nilpotent operators
of order $\mathsf{N}+1$, for a chain of size $\mathsf{N}$, the generators $%
\mathsf{A}(\lambda |\tau )$ and $\mathsf{D}(\lambda |\tau )$ are not
nilpotent operators. So, from the form of the dynamical 6-vertex Yang-Baxter
commutation relations, it is clear that the set spanned by the dynamical
parameter $t$ (the eigenvalues of $\tau $) is always an infinite lattice of
step $\eta $. We can also define the following monodromy matrix:%
\begin{equation}
\left( 
\begin{array}{cc}
\mathcal{A}(\lambda |\tau ) & \mathcal{B}(\lambda |\tau ) \\ 
\mathcal{C}(\lambda |\tau ) & \mathcal{D}(\lambda |\tau )%
\end{array}%
\right) _{0}=\mathcal{M}_{0}^{\mathsf{(6VD)}}(\lambda |\tau )\equiv \mathsf{M%
}_{0}^{\mathsf{(6VD)}}(\lambda |\tau )\text{ }\mathsf{T}_{\tau }^{\sigma
_{0}^{z}},
\end{equation}%
note that the $\mathcal{X}(\lambda |\tau )$ ($\mathcal{X}=\mathcal{A},$ $%
\mathcal{B},$ $\mathcal{C}$ and $\mathcal{D}$) are operator functions of $%
\tau $ and \textsf{$T$}$_{\tau }^{\pm }$ but for simplicity we omit the
explicit dependence from \textsf{$T$}$_{\tau }^{\pm }$ in their arguments.
For this monodromy matrix the dynamical Yang-Baxter equation reads:%
\begin{equation}
R_{1,2}^{\mathsf{(6VD)}}(\lambda _{12}|\tau +\eta \text{$\mathsf{S}$})%
\mathcal{M}_{1}^{\mathsf{(6VD)}}(\lambda _{1}|\tau )\text{ }\mathcal{M}_{2}^{%
\mathsf{(6VD)}}(\lambda _{2}|\tau )\text{ }\left. =\right. \mathcal{M}_{2}^{%
\mathsf{(6VD)}}(\lambda _{2}|\tau )\text{ }\mathcal{M}_{1}^{\mathsf{(6VD)}%
}(\lambda _{1}|\tau )\text{ }R_{1,2}^{\mathsf{(6VD)}}(\lambda _{12}|\tau ),
\end{equation}%
where we have used that:%
\begin{equation}
\mathsf{T}_{\tau }^{-\sigma _{1}^{z}}\mathsf{T}\text{$_{\tau }^{-\sigma
_{2}^{z}}$ }R_{1,2}^{\mathsf{(6VD)}}(\lambda _{12}|\tau )\mathsf{T}_{\tau
}^{\sigma _{1}^{z}}\mathsf{T}\text{$_{\tau }^{\sigma _{2}^{z}}=$}R_{1,2}^{%
\mathsf{(6VD)}}(\lambda _{12}|\tau ).  \label{FarXYZComm-R12}
\end{equation}
Finally, let us comment that in \cite{FarXYZFelderV96} it was shown that the dynamical
6-vertex Yang-Baxter equations \rf{6VD-YBeq0} are just the
rewriting of the Baxter's star-triangle equations for the Boltzmann weights:
\begin{small}
\begin{eqnarray}
&&W\left[ \left. 
\begin{array}{cc}
t & t+1 \\ 
t+1 & t+2%
\end{array}\right\vert \lambda \right]=a\left( \lambda \right),\text{\ }W\left[
\left. 
\begin{array}{cc}
t & t+1 \\ 
t-1 & t%
\end{array}%
\right\vert \lambda \right] =b\left( \lambda |t\right),\text{\ }W\left[
\left. 
\begin{array}{cc}
t & t+1 \\ 
t+1 & t%
\end{array}%
\right\vert \lambda \right] =c\left( \lambda |-t\right),\,\,\,\,\,\,\,\,\,\,  \label{BWSOS-1} \\
&&W\left[ \left. 
\begin{array}{cc}
t & t-1 \\ 
t-1 & t.2%
\end{array}%
\right\vert \lambda \right]=a\left( \lambda \right),\text{\ }W\left[
\left. 
\begin{array}{cc}
t & t-1 \\ 
t+1 & t%
\end{array}%
\right\vert \lambda \right] =b\left( \lambda |-t\right) ,\text{ \ }W\left[
\left. 
\begin{array}{cc}
t & t-1 \\ 
t-1 & t%
\end{array}%
\right\vert \lambda \right] =c\left( \lambda |t\right),\,\,\,\,\,\,\,\,\,\,  \label{BWSOS-2}
\end{eqnarray}
\end{small}of the solid-on-solid (SOS) model\footnote{%
This model of statistical mechanics is defined on a square lattice and to
each site $n$ a "height" $l_{n}$ is associated and the interactions are
defined round each face (composed by 4 adjacent sites) of the lattice. These
interactions are nonzero only for adjacent heights which differ by 1 and are
described by the Boltzmann weights $W\left[ \left. {}\right\vert \lambda %
\right] $ of equations $\left( \ref{BWSOS-1}\right) $ and $\left( \ref%
{BWSOS-2}\right) $.}, where on the r.h.s. of the above equations
there are the entries of the dynamical 6-vertex R-matrix.
\subsubsection{Quantum determinant}

A fundamental object to define in the dynamical 6-vertex Yang-Baxter algebra
is the so-called quantum determinant\footnote{%
See \cite{FarXYZIK81} and the historical note \cite{FarXYZIK09} for a first
proof of the centrality of the quantum determinant in the Yang-Baxter
algebra.}. In particular, it plays a fundamental role in the construction of
the quantum separation of variables for these algebra as we will explain in
the following.

\begin{proposition}
In the dynamical 6-vertex Yang-Baxter algebra, we can introduce the
following central quantum determinant:%
\begin{align}
\det{}_{q}\mathsf{M}(\lambda )& \equiv \frac{\theta (\tau +\eta 
\text{$\mathsf{S}$})}{\theta (\tau )}\left( \mathsf{A}(\lambda |\tau )%
\mathsf{D}(\lambda -\eta |\tau +\eta )-\mathsf{B}(\lambda |\tau )\mathsf{C}%
(\lambda -\eta |\tau -\eta )\right) \\
& =\frac{\theta (\tau +\eta \text{$\mathsf{S}$})}{\theta (\tau )}\left( 
\mathsf{D}(\lambda |\tau )\mathsf{A}(\lambda -\eta |\tau -\eta )-\mathsf{C}%
(\lambda |\tau )\mathsf{B}(\lambda -\eta |\tau +\eta )\right) \\
& =\text{\textsc{a}}(\lambda )\text{\textsc{d}}(\lambda -\eta ),
\end{align}%
where:%
\begin{equation}
\text{\textsc{a}}(\lambda )\equiv \prod_{n=1}^{\mathsf{N}}a(\lambda -\xi
_{n}),\quad \text{\textsc{d}}(\lambda )\equiv \text{\textsc{a}}(\lambda
-\eta ).
\end{equation}%
Moreover, the following inversion formula holds:%
\begin{equation}
\mathsf{M}_{0}^{\mathsf{(6VD)}}(\lambda|\tau )\left( 
\begin{array}{cc}
\mathsf{D}(\lambda-\eta|\tau +\eta ) & -\mathsf{B}(\lambda-\eta |\tau +\eta )
\\ 
-\mathsf{C}(\lambda-\eta |\tau -\eta ) & \mathsf{A}(\lambda-\eta |\tau -\eta
)%
\end{array}%
\right) _{0}\frac{\theta (\tau +\eta \text{$\mathsf{S}$})/\theta (\tau )}{%
\det{}_{q}\mathsf{M}(\lambda )}=\left( 
\begin{array}{cc}
1 & 0 \\ 
0 & 1%
\end{array}%
\right) _{0} .  \label{FarXYZRight-1-dyn-Mon}
\end{equation}
\end{proposition}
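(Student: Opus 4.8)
The plan is to derive all three statements from a single structural input — the rank drop of the dynamical R-matrix at one special value of the spectral parameter — supplemented by a site-recursion that fixes the scalar value.

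First I would locate where $R^{\mathsf{(6VD)}}(\lambda|\tau)$ degenerates. Since $a(\lambda)=\theta(\lambda+\eta)$ vanishes at $\lambda=-\eta$, at this value the R-matrix collapses onto its middle $2\times2$ block, which by \rf{FarXYZDef-ell-abc} and the oddness of $\theta_1$ equals $\frac{\theta(\eta)}{\theta(\tau)}\left(\begin{smallmatrix}-\theta(\tau+\eta)&\theta(\tau-\eta)\\ \theta(\tau+\eta)&-\theta(\tau-\eta)\end{smallmatrix}\right)$; its determinant vanishes identically, so $R^{\mathsf{(6VD)}}(-\eta|\tau)$ is a rank-one operator. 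Unlike the ordinary six-vertex case, here its image and kernel are distinct $\tau$-dependent lines: the image is the ordinary antisymmetric vector $e_{12}-e_{21}$, while the kernel is the deformed vector $\theta(\tau-\eta)e_{12}+\theta(\tau+\eta)e_{21}$. This single degeneration is the only algebraic ingredient needed for centrality and for the inversion formula.

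Next I would substitute $\lambda_{1}=\lambda-\eta$, $\lambda_{2}=\lambda$ (so that $\lambda_{12}=-\eta$) into the monodromy dynamical Yang--Baxter equation \rf{FarXYZDyn-YB-Eq0}. The left factor $R_{1,2}^{\mathsf{(6VD)}}(-\eta|\tau+\eta\mathsf{S})$ becomes the rank-one operator just found, and likewise $R_{1,2}^{\mathsf{(6VD)}}(-\eta|\tau)$ on the right; projecting the two-auxiliary-space product $\mathsf{M}_{1}(\lambda-\eta)\mathsf{M}_{2}(\lambda)$ with these rank-one operators collapses each side onto a scalar multiple of the projector. Extracting this scalar in the two available ways (using the image line on one side versus the kernel line on the other, which is exactly what produces the two dynamical shifts $\tau\pm\eta$) yields the two operator combinations $\mathsf{A}(\lambda|\tau)\mathsf{D}(\lambda-\eta|\tau+\eta)-\mathsf{B}(\lambda|\tau)\mathsf{C}(\lambda-\eta|\tau-\eta)$ and $\mathsf{D}(\lambda|\tau)\mathsf{A}(\lambda-\eta|\tau-\eta)-\mathsf{C}(\lambda|\tau)\mathsf{B}(\lambda-\eta|\tau+\eta)$, each weighted by $\theta(\tau+\eta\mathsf{S})/\theta(\tau)$; their mutual equality is immediate because both equal the same projector coefficient. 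Reading the same projected identity as a full $2\times2$ auxiliary-space matrix equation additionally forces the two off-diagonal quantum minors (such as $\mathsf{B}(\lambda|\tau)\mathsf{A}(\lambda-\eta|\tau-\eta)-\mathsf{A}(\lambda|\tau)\mathsf{B}(\lambda-\eta|\tau+\eta)$) to vanish, which is precisely the content of the adjugate identity \rf{FarXYZRight-1-dyn-Mon} after dividing by $\det{}_{q}\mathsf{M}(\lambda)$.

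It then remains to compute the scalar explicitly, and here I would induct on $\mathsf{N}$ via the coproduct $\mathsf{M}^{(\mathsf{N})}_{0}(\lambda|\tau)=R_{0,\mathsf{N}}^{\mathsf{(6VD)}}(\lambda-\xi_{\mathsf{N}}|\tau+\eta\mathsf{S}^{(\mathsf{N}-1)})\,\mathsf{M}^{(\mathsf{N}-1)}_{0}(\lambda|\tau)$. For $\mathsf{N}=1$ one evaluates $\frac{\theta(\tau+\eta\sigma_{1}^{z})}{\theta(\tau)}(\mathsf{A}\mathsf{D}-\mathsf{B}\mathsf{C})$ on the two $\sigma_{1}^{z}$-eigenstates; the up-spin state gives $\theta(\lambda-\xi_{1}+\eta)\theta(\lambda-\xi_{1}-\eta)$ at once, while the down-spin state reduces the same claim to the classical three-term identity
\begin{equation*}
\theta(\tau)^{2}\theta(\lambda+\eta)\theta(\lambda-\eta)=\theta(\lambda)^{2}\theta(\tau+\eta)\theta(\tau-\eta)+\theta(\eta)^{2}\theta(\lambda+\tau)\theta(\lambda-\tau),
\end{equation*}
a standard consequence of the addition theorem for $\theta_{1}$. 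In the inductive step the multiplicativity of the quantum determinant under this coproduct makes the dynamical prefactors telescope, giving $\det{}_{q}\mathsf{M}^{(\mathsf{N})}=\det{}_{q}\mathsf{M}^{(\mathsf{N}-1)}\cdot\theta(\lambda-\xi_{\mathsf{N}}+\eta)\theta(\lambda-\xi_{\mathsf{N}}-\eta)$ and hence $\mathsf{a}(\lambda)\mathsf{d}(\lambda-\eta)=\prod_{n}\theta(\lambda-\xi_{n}+\eta)\theta(\lambda-\xi_{n}-\eta)$. Since the final value is a c-number, the centrality asserted in the proposition is then automatic.

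I expect the main obstacle to be the bookkeeping of the dynamical shifts in the inductive step rather than any of the individual identities. Because the image and kernel of the degenerate R-matrix are distinct $\tau$-dependent lines, the inner factor $R_{0,\mathsf{N}}$ carries the shift $\tau\mapsto\tau+\eta\mathsf{S}^{(\mathsf{N}-1)}$, and the prefactor $\theta(\tau+\eta\mathsf{S})/\theta(\tau)$ must be split correctly across the two tensor factors for the shifts to cancel telescopically and reproduce a single global prefactor rather than an uncontrolled product of local ones; this, together with the careful commutation of the noncommuting $\mathsf{B},\mathsf{C}$ past $\theta(\tau+\eta\mathsf{S})$ required to place the weight as written in \rf{FarXYZRight-1-dyn-Mon}, is where the genuine work lies. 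The base-case theta identity, by contrast, is routine.
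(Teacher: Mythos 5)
Your proposal is correct in substance, and it organizes the argument along a genuinely different line than the paper. The paper's proof is purely computational: it defines a local quantum determinant for each site, evaluates it in closed form with the addition formula $\theta_1(x+y)\theta_1(x-y)\theta_4^2(0)=\theta_1^2(x)\theta_4^2(y)-\theta_4^2(x)\theta_1^2(y)$ (your three-term identity is exactly the $(\lambda,\eta,\tau)$ specialization of this formula, so your base case and the paper's one-site computation are the same calculation), lets the ratios $\theta(\tau+\eta\mathsf{S}_{n-1})/\theta(\tau+\eta\mathsf{S}_{n})$ telescope in the ordered product over sites, and then obtains the inversion formula from the two vanishing combinations $\mathsf{A}(\lambda|\tau)\mathsf{B}(\lambda-\eta|\tau+\eta)-\mathsf{B}(\lambda|\tau)\mathsf{A}(\lambda-\eta|\tau-\eta)=0$ and $\mathsf{D}(\lambda|\tau)\mathsf{C}(\lambda-\eta|\tau-\eta)-\mathsf{C}(\lambda|\tau)\mathsf{D}(\lambda-\eta|\tau+\eta)=0$, which it merely asserts to follow from the dynamical Yang--Baxter equation, without exhibiting the mechanism. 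Your route supplies exactly that mechanism: setting $\lambda_{12}=-\eta$ in \rf{FarXYZDyn-YB-Eq0} and using the rank-one collapse of $R^{\mathsf{(6VD)}}(-\eta|\tau)$ (your computation of its image and kernel lines is correct), one deduces that $\mathsf{M}_2(\lambda|\tau)\mathsf{M}_1(\lambda-\eta|\tau+\eta\sigma_2^z)$ preserves the singlet line $e_1\otimes e_2-e_2\otimes e_1$, and the four components of this single vector identity are precisely the paper's two off-diagonal identities (the $e_1\otimes e_1$ and $e_2\otimes e_2$ components) together with the equality of the two forms of $\det_q\mathsf{M}$ (the $e_1\otimes e_2$ versus $e_2\otimes e_1$ components). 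This is the Izergin--Korepin argument that the paper only cites; it buys you a single structural source for all the algebraic identities, and the same singlet-invariance is also what would justify the multiplicativity of the quantum determinant under the coproduct, a step the paper uses without proof in its chain of equalities. The paper's version, in exchange, is shorter and never needs to discuss the $\tau$-dependent kernel line at all.

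Two points should be repaired in a full write-up. First, your parenthetical attributing the two forms and the shifts $\tau\pm\eta$ to using the image line on one side versus the kernel line on the other is not the right mechanism: restricting to the kernel of $R^{\mathsf{(6VD)}}(-\eta|\tau)$ yields relations for the opposite ordering $\mathsf{M}_1(\lambda-\eta|\cdot)\mathsf{M}_2(\lambda|\cdot)$, not the combinations in the proposition; both forms in the proposition arise from the two middle components of the one image (singlet-invariance) identity, and the shifts $\tau\pm\eta$ come from evaluating $\eta\sigma_2^z$ on the two auxiliary basis vectors $e_1$, $e_2$. Second, the inductive step you defer (multiplicativity with the correct splitting of the prefactor across tensor factors) is indeed the remaining work, though the paper is no more detailed there; to close it, apply the singlet-invariance argument with the site-$\mathsf{N}$ R-matrix carrying the dynamical argument $\tau+\eta\mathsf{S}_{\mathsf{N}-1}$, and check that the resulting local ratio $\theta(\tau+\eta\mathsf{S}_{\mathsf{N}-1})/\theta(\tau+\eta\mathsf{S}_{\mathsf{N}})$ is what converts the product of local factors into the single global prefactor $\theta(\tau+\eta\mathsf{S})/\theta(\tau)$, as in the paper's telescoping identity.
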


\begin{proof}
The proof follows by proving the statement for the generic quantum site $n$
and then showing that the product of the local quantum determinants
reproduce the complete one. Let us introduce the notation:%
\begin{align}
\det{}_{q}R_{0n}^{\mathsf{(6VD)}}(\lambda |\tau )& =\left( R_{0n}^{\mathsf{(6VD)}%
}\right) _{11}(\lambda |\tau )\left( R_{0n}^{\mathsf{(6VD)}}\right)
_{22}(\lambda -\eta |\tau +\eta )  \notag \\
& -\left( R_{0n}^{\mathsf{(6VD)}}\right) _{12}(\lambda |\tau )\left( R_{0n}^{%
\mathsf{(6VD)}}\right) _{21}(\lambda -\eta |\tau -\eta ),
\end{align}%
and%
\begin{equation}
\text{$\mathsf{S}$}_{n}\equiv \sum_{a=1}^{n}\sigma _{a}^{z}.
\end{equation}%
Then it is a simple exercise to verify the identity:%
\begin{equation}
\det{}_{q}R_{0,n}^{\mathsf{(6VD)}}(\lambda -\xi _{n}|\tau +\eta \text{$\mathsf{S}$%
}_{n-1})=a(\lambda -\xi _{n})a(\lambda -\xi _{n}-2\eta )\frac{\theta (\tau
+\eta \text{$\mathsf{S}$}_{n-1})}{\theta (\tau +\eta \text{$\mathsf{S}$}_{n})%
},
\end{equation}%
once we use the formula\footnote{%
See for example equation 7 at page 881 of \cite{FarXYZTables of integrals}.}:%
\begin{equation}
\theta _{1}(x+y)\theta _{1}(x-y)\theta _{4}^{2}(0)=\theta _{3}^{2}(x)\theta
_{2}^{2}(y)-\theta _{2}^{2}(x)\theta _{3}^{2}(y)=\theta _{1}^{2}(x)\theta
_{4}^{2}(y)-\theta _{4}^{2}(x)\theta _{1}^{2}(y).
\end{equation}%
Now, by taking the product:%
\begin{align}
\text{\textsc{a}}(\lambda )\text{\textsc{d}}(\lambda -\eta )& =\prod_{n=1}^{%
\mathsf{N}}a(\lambda -\xi _{n})a(\lambda -\xi _{n}-2\eta )  \notag \\
& =\prod_{n=1}^{\mathsf{N}}\left[ \frac{\theta (\tau +\eta \text{$\mathsf{S}$%
}_{n})}{\theta (\tau +\eta \text{$\mathsf{S}$}_{n-1})}\det{}_{q}R_{0,n}^{\mathsf{%
(6VD)}}(\lambda |\tau +\eta \mathsf{S}_{n-1})\right]  \notag \\
& =\frac{\theta (\tau +\eta \text{$\mathsf{S}$})}{\theta (\tau )}\det{}_{q}R_{0,%
\mathsf{N}}^{\mathsf{(6VD)}}(\lambda |\tau +\eta \mathsf{S}_{\mathsf{N}%
-1})\cdots \det{}_{q}R_{0,2}^{\mathsf{(6VD)}}(\lambda |\tau +\eta \mathsf{S}%
_{1})\det{}_{q}R_{0,1}^{\mathsf{(6VD)}}(\lambda |\tau )  \notag \\
& =\det{}_{q}\mathsf{M}(\lambda ).
\end{align}%
Finally, the inversion formula $\left( \ref{FarXYZRight-1-dyn-Mon}\right) $\
follows from the quantum determinant formulae and from the identities:%
\begin{eqnarray}
\mathsf{A}(\lambda |\tau )\mathsf{B}(\lambda -\eta |\tau +\eta )-\mathsf{B}%
(\lambda |\tau )\mathsf{A}(\lambda -\eta |\tau -\eta ) &=&0, \\
\mathsf{D}(\lambda |\tau )\mathsf{C}(\lambda -\eta |\tau -\eta )-\mathsf{C}%
(\lambda |\tau )\mathsf{D}(\lambda -\eta |\tau +\eta ) &=&0,
\end{eqnarray}%
which directly follows from the dynamical Yang-Baxter equations $\left( \ref%
{FarXYZDyn-YB-Eq0}\right) $.
\end{proof}

\subsubsection{Antiperiodic dynamical 6-vertex representations}

As we will show in the paper it is of particular interest to introduce an
antiperiodic version of the dynamical 6-vertex model by introducing the
following monodromy matrix:%
\begin{equation}
\mathsf{\bar{M}}_{0}^{\mathsf{(6VD)}}(\lambda _{1}|\tau )\equiv \sigma
_{0}^{x}\mathsf{M}_{0}^{\mathsf{(6VD)}}(\lambda _{1}|\tau )
\label{FarXYZanti-p-6vD-M}
\end{equation}%
then the dynamical Yang-Baxter equation reads:%
\begin{equation}
R_{1,2}^{\mathsf{(6VD)}}(\lambda _{12}|-\tau -\eta \text{$\mathsf{S}$})%
\mathsf{\bar{M}}_{1}^{\mathsf{(6VD)}}(\lambda _{1}|\tau )\mathsf{T}_{\tau
}^{\sigma _{1}^{z}}\mathsf{\bar{M}}_{2}^{\mathsf{(6VD)}}(\lambda _{2}|\tau )%
\mathsf{T}_{\tau }^{-\sigma _{1}^{z}}\left. =\right. \mathsf{\bar{M}}_{2}^{%
\mathsf{(6VD)}}(\lambda _{2}|\tau )\mathsf{T}_{\tau }^{\sigma _{2}^{z}}%
\mathsf{\bar{M}}_{1}^{\mathsf{(6VD)}}(\lambda _{1}|\tau )\mathsf{T}_{\tau
}^{-\sigma _{2}^{z}}R_{1,2}^{\mathsf{(6VD)}}(\lambda _{12}|\tau ),
\label{FarXYZD-YB-op-0}
\end{equation}%
being:%
\begin{equation}
\sigma _{1}^{x}\otimes \sigma _{2}^{x}R_{1,2}^{\mathsf{(6VD)}}(\lambda
|y)=R_{1,2}^{\mathsf{(6VD)}}(\lambda |-y)\sigma _{1}^{x}\otimes \sigma
_{2}^{x}.
\end{equation}%
It is worth also to define the following monodromy matrix:%
\begin{equation}
\mathcal{\bar{M}}_{0}^{\mathsf{(6VD)}}(\lambda |\tau )\equiv \mathsf{\bar{M}}%
_{0}^{\mathsf{(6VD)}}(\lambda |\tau )\text{ }\mathsf{T}_{\tau }^{\sigma
_{0}^{z}},
\end{equation}%
for it the dynamical Yang-Baxter equation reads:%
\begin{equation}
R_{1,2}^{\mathsf{(6VD)}}(\lambda _{12}|-\tau -\eta \text{$\mathsf{S}$})%
\overline{\mathcal{M}}_{1}^{\mathsf{(6VD)}}(\lambda _{1}|\tau )\text{ }%
\overline{\mathcal{M}}_{1}^{\mathsf{(6VD)}}(\lambda _{2}|\tau )\text{ }%
\left. =\right. \overline{\mathcal{M}}_{1}^{\mathsf{(6VD)}}(\lambda
_{2}|\tau )\text{ }\overline{\mathcal{M}}_{1}^{\mathsf{(6VD)}}(\lambda
_{1}|\tau )\text{ }R_{1,2}^{\mathsf{(6VD)}}(\lambda _{12}|\tau ).
\label{FarXYZYBECalDyn}
\end{equation}

\subsubsection{Invariant subspace under antiperiodic 6VD-generators}

Let us define the operator:%
\begin{equation}
\mathsf{S}_{\tau }\equiv \eta \mathsf{S}+2\tau ,
\end{equation}%
in $\mathbb{D}_{\mathsf{(6VD)},\mathsf{N}}$ then we denote with $\mathbb{%
\bar{D}}_{\mathsf{(6VD)},\mathsf{N}}$ the $2^{\mathsf{N}}$-dimensional
linear eigenspace corresponding to the eigenvalue zero $\mathsf{S}_{\tau }$;
i.e. $\mathbb{\bar{D}}_{\mathsf{(6VD)},\mathsf{N}}$ is the linear space
defined by the condition that the eigenvalues of the commuting operators $%
-2\tau $ and $\eta \mathsf{S}$ are coinciding. In terms of the
dynamical-spin basis the linear (covector) space $\mathbb{\bar{D}}_{\mathsf{%
(6VD)},\mathsf{N}}^{\mathcal{L}}$ is generated by the elements\footnote{%
Note that we are using the simplified notation $t_{\text{\textbf{h}}}$\
instead of $t(\mathsf{s}_{\text{\textbf{h}}})$.}:%
\begin{equation}
\otimes _{n=1}^{\mathsf{N}}\langle n,h_{n}|\otimes \langle t_{\text{\textbf{h%
}}}|\text{, \ \ \ where \ }t_{\text{\textbf{h}}}\equiv -\frac{\eta }{2}%
\mathsf{s}_{\text{\textbf{h}}},\text{ \ }\mathsf{s}_{\text{\textbf{h}}%
}\equiv \sum_{k=1}^{\mathsf{N}}(1-2h_{k})\text{ and \textbf{h}}\equiv
(h_{1},...,h_{\mathsf{N}}),  \label{FarXYZDyS-basis-L}
\end{equation}%
and the linear (vector) space $\mathbb{\bar{D}}_{\mathsf{(6VD)},\mathsf{N}}^{%
\mathcal{R}}$ is generated by the elements:%
\begin{equation}
\otimes _{n=1}^{\mathsf{N}}|n,h_{n}\rangle \otimes |t_{\text{\textbf{h}}%
}\rangle .  \label{FarXYZDyS-basis-R}
\end{equation}%
In the $2^{\mathsf{N}}$-dimensional linear space $\mathbb{\bar{D}}_{\mathsf{%
(6VD)},\mathsf{N}}^{\mathcal{L}/\mathcal{R}}$ it is central to remark that
zero is a $\tau $-eigenvalue for a chain with $\mathsf{N}$ even while it is not for a chain with $\mathsf{N}$ odd. This simple observation implies that $\mathbb{\bar{D}}_{\mathsf{(6VD)}%
,\mathsf{N}}^{\mathcal{L}/\mathcal{R}}$ is not a well define representation
space of the dynamical Yang-Baxter algebra for the presence of divergencies
in $\left( \ref{FarXYZop-L}\right) $ for the zero $\tau $-eigenvalue. On the
contrary, in the case of an odd chain it holds:

\begin{theorem}
On the linear spaces $\mathbb{\bar{D}}_{\mathsf{(6VD)},\mathsf{N}}^{\mathcal{%
L}/\mathcal{R}}$ are well defined left/right finite dimensional
representations of the operators:%
\begin{equation}
\mathsf{A}(\lambda |\tau ),\text{ \ }\mathsf{D}(\lambda |\tau ),\text{ \ }%
\mathcal{B}(\lambda |\tau ),\text{ \ }\mathcal{C}(\lambda |\tau ).
\end{equation}%
Moreover, the antiperiodic dynamical 6-vertex transfer matrix:%
\begin{equation}
\overline{\mathcal{T}}^{\mathsf{(6VD)}}(\lambda |\tau )\equiv tr_{0}%
\overline{\mathcal{M}}_{0}^{\mathsf{(6VD)}}(\lambda |t)=\mathcal{B}(\lambda
|\tau )+\mathcal{C}(\lambda |\tau ),
\end{equation}%
defines a one parameter family of commuting operators on $\mathbb{\bar{D}}_{%
\mathsf{(6VD)},\mathsf{N}}^{\mathcal{L}/\mathcal{R}}$.
\end{theorem}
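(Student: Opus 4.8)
The plan is to treat the two assertions separately: first that the four listed operators are well defined endomorphisms of the $2^{\mathsf{N}}$-dimensional space $\mathbb{\bar{D}}_{\mathsf{(6VD)},\mathsf{N}}^{\mathcal{L}/\mathcal{R}}$, and then that the traced combination $\overline{\mathcal{T}}^{\mathsf{(6VD)}}(\lambda|\tau)=\mathcal{B}(\lambda|\tau)+\mathcal{C}(\lambda|\tau)$ commutes at different spectral parameters. For the first assertion I would observe that $\mathbb{\bar{D}}_{\mathsf{(6VD)},\mathsf{N}}^{\mathcal{L}/\mathcal{R}}$ is by definition the kernel of $\mathsf{S}_{\tau}=\eta\mathsf{S}+2\tau$, so an operator restricts to a well defined endomorphism as soon as it commutes with $\mathsf{S}_{\tau}$ and produces no divergence. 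The operators $\mathsf{A}(\lambda|\tau)$ and $\mathsf{D}(\lambda|\tau)$ commute with both $\mathsf{S}$ and $\tau$ by \rf{FarXYZss-ABCD-tau-com} and \rf{FarXYZXsc-S-spin}, hence with $\mathsf{S}_{\tau}$. For $\mathcal{B}=\mathsf{B}\mathsf{T}_{\tau}^{-}$ and $\mathcal{C}=\mathsf{C}\mathsf{T}_{\tau}^{+}$ I would use $[\mathsf{B},\mathsf{S}]=2\mathsf{B}$, $[\mathsf{C},\mathsf{S}]=-2\mathsf{C}$ together with the shift rules $\tau\mathsf{T}_{\tau}^{\mp}=\mathsf{T}_{\tau}^{\mp}(\tau\pm\eta)$ coming from \rf{FarXYZDyn-op-comm}: the change of $\eta\mathsf{S}$ induced by $\mathsf{B}$ (resp. $\mathsf{C}$) is exactly compensated by the change of $2\tau$ induced by $\mathsf{T}_{\tau}^{-}$ (resp. $\mathsf{T}_{\tau}^{+}$), so that $[\mathsf{S}_{\tau},\mathcal{B}]=[\mathsf{S}_{\tau},\mathcal{C}]=0$. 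This is precisely why these four combinations are singled out, while $\mathsf{B}$, $\mathsf{C}$, $\mathcal{A}$, $\mathcal{D}$ shift $\mathsf{S}_{\tau}$ by $\pm 2\eta$ and do not preserve $\mathbb{\bar{D}}_{\mathsf{(6VD)},\mathsf{N}}^{\mathcal{L}/\mathcal{R}}$.

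The finiteness is where the oddness of $\mathsf{N}$ enters, and I would isolate it as the delicate point of the first part. The only possible divergences come from the factors $1/\theta(\tau+\eta\mathsf{S}_{k-1})$ hidden in the entries $b(\cdot|\pm(\tau+\eta\mathsf{S}_{k-1}))$, $c(\cdot|\pm(\tau+\eta\mathsf{S}_{k-1}))$ of the R-matrices building \rf{FarXYZ6VD-Monodromy}. Evaluating on a basis covector/vector \rf{FarXYZDyS-basis-L}--\rf{FarXYZDyS-basis-R} the $k$-th dynamical argument becomes $t_{\mathbf{h}}+\eta\sum_{a=1}^{k-1}(1-2h_{a})=\eta\big(\sum_{a=1}^{k-1}(1-2h_{a})-\tfrac12\mathsf{s}_{\mathbf{h}}\big)$, i.e. $\eta$ times (integer $-\tfrac12\mathsf{s}_{\mathbf{h}}$); for $\mathsf{N}$ odd $\mathsf{s}_{\mathbf{h}}$ is odd, so this argument is never a zero of $\theta_{1}$, and the same remains true after the extra $\pm\eta$ shift carried by the $\mathsf{T}_{\tau}^{\mp}$ inside $\mathcal{B},\mathcal{C}$. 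Hence all matrix elements are finite, in contrast with the even case where $\mathsf{s}_{\mathbf{h}}=0$ is allowed and $\theta(\tau)\to\theta(0)=0$ produces the divergences mentioned before the statement.

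For the commutativity I would start from the dynamical Yang--Baxter equation \rf{FarXYZYBECalDyn} for $\overline{\mathcal{M}}^{\mathsf{(6VD)}}$ and use the defining property of $\mathbb{\bar{D}}_{\mathsf{(6VD)},\mathsf{N}}^{\mathcal{L}/\mathcal{R}}$, namely $\mathsf{S}_{\tau}=0$, which gives $-\tau-\eta\mathsf{S}=\tau$ there. On the subspace the left and right dynamical R-matrices therefore formally coincide, and \rf{FarXYZYBECalDyn} collapses to an ordinary $RTT$ relation $R_{1,2}^{\mathsf{(6VD)}}(\lambda_{12}|\tau)\,\overline{\mathcal{M}}_{1}\,\overline{\mathcal{M}}_{2}=\overline{\mathcal{M}}_{2}\,\overline{\mathcal{M}}_{1}\,R_{1,2}^{\mathsf{(6VD)}}(\lambda_{12}|\tau)$. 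Taking $tr_{1,2}$, multiplying by $R^{-1}$ (invertible for generic $\lambda_{12}$) and using cyclicity of the auxiliary trace would then give $[\overline{\mathcal{T}}^{\mathsf{(6VD)}}(\lambda_{1}|\tau),\overline{\mathcal{T}}^{\mathsf{(6VD)}}(\lambda_{2}|\tau)]=0$, once one notes that $\overline{\mathcal{T}}^{\mathsf{(6VD)}}=\mathcal{B}+\mathcal{C}$ already maps $\mathbb{\bar{D}}_{\mathsf{(6VD)},\mathsf{N}}^{\mathcal{L}/\mathcal{R}}$ into itself by the first part.

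The hard part is precisely the legitimacy of that last trace manipulation. Unlike the standard six/eight-vertex case, the dynamical R-matrix depends on the quantum operators $\tau$ and $\mathsf{S}$, so its auxiliary blocks are not scalars and the partial trace over $\{1,2\}$ is not automatically cyclic; moreover the individual entries of $\overline{\mathcal{M}}_{1}\overline{\mathcal{M}}_{2}$ carry the state out of $\mathbb{\bar{D}}_{\mathsf{(6VD)},\mathsf{N}}^{\mathcal{L}/\mathcal{R}}$, where the identification $-\tau-\eta\mathsf{S}=\tau$ fails, so the two R-matrices cannot simply be cancelled as operators. To close this I would exploit the conserved charge $\mathsf{S}_{\tau}-\eta(\sigma_{1}^{z}+\sigma_{2}^{z})$, which commutes with $R_{1,2}^{\mathsf{(6VD)}}(\lambda_{12}|\tau)$, with $\overline{\mathcal{M}}_{1}\overline{\mathcal{M}}_{2}$ and with $\overline{\mathcal{M}}_{2}\overline{\mathcal{M}}_{1}$ (it records that each $\overline{\mathcal{M}}_{a}$ shifts $\mathsf{S}_{\tau}$ only in correlation with its auxiliary spin): since the trace restores the auxiliary spins, hence $\mathsf{S}_{\tau}$, to the initial value $0$, the dynamical arguments seen by the surviving terms are forced to agree and the cancellation goes through. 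Equivalently, and perhaps more safely, I would decompose $\overline{\mathcal{T}}(\lambda_{1})\overline{\mathcal{T}}(\lambda_{2})$ according to the $\tau$-shift sectors $0,\pm 2\eta$ and verify $\mathcal{B}(\lambda_{1})\mathcal{B}(\lambda_{2})=\mathcal{B}(\lambda_{2})\mathcal{B}(\lambda_{1})$, $\mathcal{C}(\lambda_{1})\mathcal{C}(\lambda_{2})=\mathcal{C}(\lambda_{2})\mathcal{C}(\lambda_{1})$ and the mixed identity $\mathcal{B}(\lambda_{1})\mathcal{C}(\lambda_{2})+\mathcal{C}(\lambda_{1})\mathcal{B}(\lambda_{2})=\mathcal{B}(\lambda_{2})\mathcal{C}(\lambda_{1})+\mathcal{C}(\lambda_{2})\mathcal{B}(\lambda_{1})$ sector by sector from the entrywise commutation relations contained in \rf{FarXYZYBECalDyn}.
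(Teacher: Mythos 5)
Your treatment of the first assertion is correct and coincides with the paper's: the four operators are singled out precisely because they commute with $\mathsf{S}_{\tau}=\eta\mathsf{S}+2\tau$, and your pole analysis for odd $\mathsf{N}$ makes explicit the remark the paper places just before the theorem. The genuine gap is in the commutativity argument, exactly at the step you yourself flag as the hard one. Your proposed fix rests on the claim that conservation of $\mathsf{S}_{\tau}-\eta(\sigma_{1}^{z}+\sigma_{2}^{z})$ forces the dynamical arguments of the two R-matrices to agree on the surviving terms; this is false. That charge only guarantees that a surviving term returns the quantum state to the eigenspace $\mathsf{S}_{\tau}=0$; it does not fix the $\tau$-eigenvalue inside that eigenspace. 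Concretely, in $tr_{12}\bigl[(R')^{-1}\,\overline{\mathcal{M}}_{2}\overline{\mathcal{M}}_{1}\,R\bigr]$ with $R=R_{1,2}^{\mathsf{(6VD)}}(\lambda_{12}|\tau)$ and $R'=R_{1,2}^{\mathsf{(6VD)}}(\lambda_{12}|-\tau-\eta\mathsf{S})$, the diagonal terms $\mathcal{C}(\lambda_{2})\mathcal{C}(\lambda_{1})$ and $\mathcal{B}(\lambda_{2})\mathcal{B}(\lambda_{1})$ survive the trace and conserve your charge, yet acting on a covector of $\mathbb{\bar{D}}_{\mathsf{(6VD)},\mathsf{N}}^{\mathcal{L}}$ with $\tau$-eigenvalue $t_{v}$ they produce covectors with $\tau$-eigenvalue $t_{v}\pm2\eta$: the leftmost $(R')^{-1}$ is evaluated at $t_{v}$ while the rightmost $R$ is evaluated at $t_{v}\pm2\eta$, so the two factors are not mutually inverse. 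These terms do cancel, but for a different reason: in the auxiliary sectors $\sigma_{1}^{z}+\sigma_{2}^{z}=\pm2$ the R-matrix entries are the corner elements $a(\lambda_{12})$, which are independent of the dynamical parameter. In the zero sector the arguments do agree, but because each surviving monomial $\mathcal{B}\mathcal{C}$, $\mathcal{C}\mathcal{B}$, $\mathcal{A}\mathcal{D}$, $\mathcal{D}\mathcal{A}$ preserves $\tau$ and $\mathsf{S}$ separately, a finer fact than conservation of your charge. Your fallback (sector-by-sector verification of the $\mathcal{B}\mathcal{B}$, $\mathcal{C}\mathcal{C}$ and mixed identities) would indeed close the gap, but it is left unexecuted, and the mixed identity, the only one requiring the restriction to $\mathbb{\bar{D}}_{\mathsf{(6VD)},\mathsf{N}}$, is precisely the nontrivial content.

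For comparison, the paper closes this step without any sector analysis: it expands $\overline{\mathcal{M}}_{2}\overline{\mathcal{M}}_{1}=\overline{\mathsf{M}}_{2}(\lambda_{2}|\tau)\,\overline{\mathsf{M}}_{1}(\lambda_{1}|\tau+\eta\sigma_{2}^{z})\,\mathsf{T}_{\tau}^{\sigma_{1}^{z}}\mathsf{T}_{\tau}^{\sigma_{2}^{z}}$, uses \rf{FarXYZComm-R12} to commute $R_{1,2}^{\mathsf{(6VD)}}(\lambda_{12}|\tau)$ through $\mathsf{T}_{\tau}^{\sigma_{1}^{z}}\mathsf{T}_{\tau}^{\sigma_{2}^{z}}$, and then moves it cyclically to the far left of the trace; this move is legitimate because the entries of $R_{1,2}^{\mathsf{(6VD)}}(\lambda_{12}|\tau)$ are functions of $\tau$ alone, hence commute with every entry of $\overline{\mathsf{M}}_{2}\overline{\mathsf{M}}_{1}$ and of $(R')^{-1}$, while the spin conservation encoded in \rf{FarXYZComm-R12} matches up the shift operators in the relabeled sum. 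After this reordering the two R-factors sit adjacent, directly against the covector of $\mathbb{\bar{D}}_{\mathsf{(6VD)},\mathsf{N}}^{\mathcal{L}}$, so the identification $-\tau-\eta\mathsf{S}=\tau$ applies to both simultaneously and they cancel as $R(\lambda_{12}|t_{v})\bigl(R(\lambda_{12}|t_{v})\bigr)^{-1}=1$, with no intermediate state ever leaving the subspace between the two factors. To repair your proof you must either reproduce this reordering or carry out your sector decomposition in full, keeping track of where the dynamical coefficients are evaluated.
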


\begin{proof}
To prove the first statement in the theorem we have to prove that the linear
spaces $\mathbb{\bar{D}}_{\mathsf{(6VD)},\mathsf{N}}^{\mathcal{L}/\mathcal{R}%
}$ are invariant under the action of the operators:%
\begin{equation}
\mathsf{A}(\lambda |\tau ),\text{ \ }\mathsf{D}(\lambda |\tau ),\text{ \ }%
\mathcal{B}(\lambda |\tau ),\text{ \ }\mathcal{C}(\lambda |\tau ).
\end{equation}%
All what we need are the following commutation relations:%
\begin{eqnarray}
\lbrack \mathsf{A}(\lambda |\tau ),\mathsf{S}] &=&0,\text{ \ \ \ \ \ \ \ \ \
\ \ \ }[\mathsf{A}(\lambda |\tau ),\tau ]=0, \\
\lbrack \mathsf{A}(\lambda |\tau ),\mathsf{S}] &=&0,\text{ \ \ \ \ \ \ \ \ \
\ \ \ }[\mathsf{A}(\lambda |\tau ),\tau ]=0, \\
\lbrack \mathcal{B}(\lambda |\tau ),\mathsf{S}] &=&2\mathcal{B}(\lambda
|\tau )\text{ \ \ \ }[\mathcal{B}(\lambda |\tau ),\tau ]=-\eta \mathcal{B}%
(\lambda |\tau ), \\
\lbrack \mathcal{C}(\lambda |\tau ),\mathsf{S}] &=&-2\mathcal{C}(\lambda
|\tau ),\text{ }[\mathcal{C}(\lambda |\tau ),\tau ]=\eta \mathcal{C}(\lambda
|\tau ),
\end{eqnarray}%
from which it follows that:%
\begin{equation}
\lbrack \mathsf{A}(\lambda |\tau ),\mathsf{S}_{\tau }]=[\mathsf{D}(\lambda
|\tau ),\mathsf{S}_{\tau }]=[\mathcal{B}(\lambda |\tau ),\mathsf{S}_{\tau
}]=[\mathcal{C}(\lambda |\tau ),\mathsf{S}_{\tau }]=0,
\end{equation}%
and then $\mathbb{\bar{D}}_{\mathsf{(6VD)},\mathsf{N}}^{\mathcal{L}/\mathcal{%
R}}$ are invariant under the action of these operators; then $\mathbb{\bar{D}%
}_{\mathsf{(6VD)},\mathsf{N}}^{\mathcal{L}/\mathcal{R}}$ are invariant also
w.r.t. the action of the transfer matrix $\overline{\mathcal{T}}^{\mathsf{%
(6VD)}}(\lambda |\tau )$. Let us now take the trace of (\ref{FarXYZYBECalDyn}%
):%
\begin{align}
& \overline{\mathcal{T}}^{\mathsf{(6VD)}}(\lambda _{1}|\tau )\overline{%
\mathcal{T}}^{\mathsf{(6VD)}}(\lambda _{2}|\tau )\left. =\right.  \notag \\
& \text{ \ \ \ \ \ \ }\left. =\right. tr_{12}\left[ \overline{\mathcal{M}}%
_{1}^{\mathsf{(6VD)}}(\lambda _{1}|\tau )\text{ }\overline{\mathcal{M}}_{2}^{%
\mathsf{(6VD)}}(\lambda _{2}|\tau )\right]  \notag \\
& \text{\ \ \ \ \ \ \ }\left. =\right. tr_{12}\left[ \left( R_{1,2}^{\mathsf{%
(6VD)}}(\lambda _{12}|-\tau -\eta \text{$\mathsf{S}$})\right) ^{-1}\overline{%
\mathcal{M}}_{2}^{\mathsf{(6VD)}}(\lambda _{2}|\tau )\text{ }\overline{%
\mathcal{M}}_{1}^{\mathsf{(6VD)}}(\lambda _{1}|\tau )R_{1,2}^{\mathsf{(6VD)}%
}(\lambda _{12}|\tau )\right]  \notag \\
& \text{\ \ \ \ \ \ \ }\left. =\right. tr_{12}\left[ \left( R_{1,2}^{\mathsf{%
(6VD)}}(\lambda _{12}|-\tau -\eta \text{$\mathsf{S}$})\right) ^{-1}\overline{%
\mathsf{M}}_{2}^{\mathsf{(6VD)}}(\lambda _{2}|\tau )\text{ }\overline{%
\mathsf{M}}_{1}^{\mathsf{(6VD)}}(\lambda _{1}|\tau +\eta \sigma _{2}^{z})%
\mathsf{T}_{\tau }^{\sigma _{1}^{z}}\mathsf{T}\text{$_{\tau }^{\sigma
_{2}^{z}}$}R_{1,2}^{\mathsf{(6VD)}}(\lambda _{12}|\tau )\right]  \notag \\
& \text{\ \ }\underset{(\ref{FarXYZComm-R12})}{\left. =\right. }tr_{12}\left[
R_{1,2}^{\mathsf{(6VD)}}(\lambda _{12}|\tau )\left( R_{1,2}^{\mathsf{(6VD)}%
}(\lambda _{12}|-\tau -\eta \text{$\mathsf{S}$})\right) ^{-1}\overline{%
\mathsf{M}}_{2}^{\mathsf{(6VD)}}(\lambda _{2}|\tau )\text{ }\overline{%
\mathsf{M}}_{1}^{\mathsf{(6VD)}}(\lambda _{1}|\tau +\eta \sigma _{2}^{z})%
\mathsf{T}_{\tau }^{\sigma _{1}^{z}}\mathsf{T}\text{$_{\tau }^{\sigma
_{2}^{z}}$}\right]
\end{align}%
which in $\mathbb{\bar{D}}_{\mathsf{(6VD)},\mathsf{N}}^{\mathcal{L}}$
coincides with%
\begin{equation}
tr_{12}\left[ \overline{\mathcal{M}}_{2}^{\mathsf{(6VD)}}(\lambda _{2}|\tau )%
\text{ }\overline{\mathcal{M}}_{1}^{\mathsf{(6VD)}}(\lambda _{1}|\tau )%
\right] =\overline{\mathcal{T}}^{\mathsf{(6VD)}}(\lambda _{1}|\tau )%
\overline{\mathcal{T}}^{\mathsf{(6VD)}}(\lambda _{2}|\tau ),
\end{equation}%
i.e. the commutativity.
\end{proof}

From now on we will implicitly assume that $\mathsf{N}$ is odd when
representations in the linear spaces $\mathbb{\bar{D}}_{\mathsf{(6VD)},%
\mathsf{N}}^{\mathcal{L}/\mathcal{R}}$ will be considered.

\subsection{SOV-representations for $\overline{\mathcal{T}}^{\mathsf{(6VD)}}$%
-spectral problem}

\subsubsection{SOV-representations}

Here, we will show as the standard method to define quantum separation of
variable (SOV) representations introduced by Sklyanin \cite%
{FarXYZSk1,FarXYZSk2,FarXYZSk3} for the transfer matrix of 6-vertex
Yang-Baxter algebra can be adapted for the dynamical case. In particular,
SOV representations for the spectral problem of the antiperiodic $\overline{%
\mathcal{T}}^{\mathsf{(6VD)}}(\lambda |\tau )$ can be defined as
the representations where the commutative family of operators $\mathsf{D}%
(\lambda |\tau )$ (or $\mathsf{A}(\lambda |\tau )$) is pseudo-diagonal and
with simple spectrum in the (left/right) $2^{\mathsf{N}}$-dimensional spaces $\mathbb{\bar{D}}_{\mathsf{(6VD)},\mathsf{N}}^{\mathcal{L}/%
\mathcal{R}}$. Here, we mean that we can construct explicitly left/right
basis of the spaces $\mathbb{\bar{D}}_{\mathsf{(6VD)},\mathsf{N}}^{\mathcal{L%
}/\mathcal{R}}$ in terms of pseudo $\mathsf{D}(\lambda |\tau )$-eigenstates%
\footnote{What we mean for pseudo-eigenstates will be clarified in the following.}.

In order to make the construction of these basis some preparation is need;
let us define the left and right \textit{references states\footnote{%
The left and right states of $\mathbb{\bar{D}}_{\mathsf{(6VD)},\mathsf{N}}^{%
\mathcal{L}/\mathcal{R}}$ with all spin up and down, respectively.}}:%
\begin{equation}
\langle \text{\textbf{0}}|\equiv \otimes _{n=1}^{\mathsf{N}}\langle
n,h_{n}=0|\otimes \langle t_{\text{\textbf{0}}}|,\text{ \ \ \ \ }|\text{%
\textbf{1}}\rangle \equiv \otimes _{n=1}^{\mathsf{N}}|n,h_{n}=1\rangle
\otimes |t_{\text{\textbf{1}}}\rangle ,
\end{equation}%
where we are using the notations \textbf{0}$\equiv (h_{1}=0,...,h_{\mathsf{N}%
}=0)$ and \textbf{1}$\equiv (h_{1}=1,...,h_{\mathsf{N}}=1)$, so that $t_{%
\text{\textbf{0}}}\equiv -(\eta \mathsf{N})/2$ and $t_{\text{\textbf{1}}%
}\equiv (\eta \mathsf{N})/2$ by the definition $\left( \ref{FarXYZDyS-basis-L}%
\right) $ of $t_{\text{\textbf{h}}}$. Then, we can define the following sets
of left and right states:%
\begin{equation}
\langle h_{1},...,h_{\mathsf{N}}|\equiv \frac{1}{\text{\textsc{n}}}\langle 
\text{\textbf{0}}|\prod_{n=1}^{\mathsf{N}}\left( \frac{\mathcal{C}(\xi
_{n}|\tau )}{\text{\textsc{d}}(\xi _{n}-\eta )}\right) ^{h_{n}},
\label{FarXYZD-left-eigenstates}
\end{equation}%
and%
\begin{equation}
|h_{1},...,h_{\mathsf{N}}\rangle \equiv \frac{1}{\text{\textsc{n}}}%
\prod_{n=1}^{\mathsf{N}}\left( \frac{\mathcal{C}(\xi _{n}-\eta |\tau )}{%
\text{\textsc{d}}(\xi _{n}-\eta )}\right) ^{(1-h_{n})}|\text{\textbf{1}}%
\rangle ,  \label{FarXYZD-right-eigenstates}
\end{equation}%
where\ $h_{n}\in \{0,1\},$ $n\in \{1,...,\mathsf{N}\}$. Note that the
normalization \textsc{n} has been introduced to simplify the form of the
coupling of the above left and right states. Let us define first the
following theta functions with characteristic:%
\begin{equation}
\vartheta _{j}\left( \lambda \right) =\sum_{n\in \mathbb{Z}}\exp \left[
2i\pi w\mathsf{N}\left( n+\frac{1}{2}-\frac{j}{\mathsf{N}}\right) +2i\pi 
\mathsf{N}\left( n+\frac{1}{2}-\frac{j}{\mathsf{N}}\right) \left( \lambda +%
\frac{1}{2\mathsf{N}}\right) \right] ,
\end{equation}%
$\mathsf{N}\in \mathbb{N}$ and $j\in \{0,...,\mathsf{N}-1\}$ which satisfies
the periodicity conditions:%
\begin{equation}
\vartheta _{j}\left( \lambda +1/\mathsf{N}\right) =-e^{2\pi i(j/\mathsf{N}%
)}\vartheta _{j}\left( \lambda \right) ,\text{ \ }\vartheta _{j}\left(
\lambda +2w\right) =-e^{-2\pi i\mathsf{N}(w+\lambda )}\vartheta _{j}\left(
\lambda \right) .
\end{equation}%
Then we can fix:%
\begin{equation}
\text{\textsc{n}}^{2}\equiv c_{\mathsf{N}}\frac{\langle \text{\textbf{0}}%
|\prod_{n=1}^{\mathsf{N}}\left( \frac{\mathcal{C}(\xi _{n}-\eta |\tau )}{%
\text{\textsc{d}}(\xi _{n}-\eta )}\right) ^{(1-h_{n})}|\text{\textbf{1}}%
\rangle }{\det_{\mathsf{N}}\Theta _{ij}^{\left( \text{\textbf{h}}\equiv 
\text{\textbf{0}}\right) }},
\end{equation}%
where $c_{\mathsf{N}}$ is a normalization constant defined in (\ref%
{FarXYZR-Pseudo-Vendermonde}) and $\left\Vert \Theta _{ij}^{\left( \text{%
\textbf{h}}\right) }\right\Vert $ is the $\mathsf{N}\times \mathsf{N}$
matrix of elements:%
\begin{equation}
\Theta _{ij}^{\left( \text{\textbf{h}}\right) }\equiv \vartheta _{i-1}\left( 
\bar{\xi}_{j}^{(h_{j})}\right) ,\text{ \ \ \ \ }\bar{\xi}_{a}^{(h_{a})}=\xi
_{a}^{(h_{a})}+\frac{\eta }{2}+\frac{\mathsf{N}-1}{2\mathsf{N}}-\frac{1}{%
\mathsf{N}}\sum_{a=1}^{\mathsf{N}}\xi _{a},\text{ \ \ \ \ }\xi
_{a}^{(h_{a})}=\xi _{a}-\eta h_{a}\,.  \label{FarXYZpseudo-Vander}
\end{equation}%
It is simple to verify that the states (\ref{FarXYZD-left-eigenstates}) and (%
\ref{FarXYZD-right-eigenstates}) are simultaneous (left/right) eigenstates
of $\tau $ and $\mathsf{S}\equiv \sum_{a=1}^{\mathsf{N}}\sigma _{a}^{z}$:%
\begin{eqnarray}
\langle h_{1},...,h_{\mathsf{N}}|\tau &=&t_{\text{\textbf{h}}}\text{ }%
\langle h_{1},...,h_{\mathsf{N}}|, \\
\langle h_{1},...,h_{\mathsf{N}}|\mathsf{S} &=&\mathsf{s}_{\text{\textbf{h}}%
}\langle h_{1},...,h_{\mathsf{N}}|,
\end{eqnarray}%
and:%
\begin{eqnarray}
\tau |h_{1},...,h_{\mathsf{N}}\rangle &=&|h_{1},...,h_{\mathsf{N}}\rangle 
\text{ }t_{\text{\textbf{h}}}, \\
\mathsf{S}|h_{1},...,h_{\mathsf{N}}\rangle &=&|h_{1},...,h_{\mathsf{N}%
}\rangle \mathsf{s}_{\text{\textbf{h}}}.
\end{eqnarray}%
Moreover, under the following condition on the $\mathsf{N}$-tuple of
inhomogeneities $\{\xi _{1},...,\xi _{\mathsf{N}}\}\in \mathbb{C}$ $^{%
\mathsf{N}}$:%
\begin{equation}
\xi _{a}\neq \xi _{b}^{(h_{b})}\text{ \thinspace \thinspace mod}%
(2w)\,\,\forall h_{b}\in \{0,1\}\,\,\,\text{and}\,\,\,a<b\in \{1,...,\mathsf{%
N}\},  \label{FarXYZE-SOV}
\end{equation}%
the following theorem holds:

\begin{theorem}
\textsf{I)} \underline{Left SOV-representations} \ Under the condition $%
\left( \ref{FarXYZE-SOV}\right) $, the states $\left( \ref%
{FarXYZD-left-eigenstates}\right) $ define a basis of pseudo $\mathsf{D}%
(\lambda |\tau )$-eigenstates in $\mathbb{\bar{D}}_{\mathsf{(6VD)},\mathsf{N}%
}^{\mathcal{L}}$; indeed it holds:%
\begin{equation}
\langle h_{1},...,h_{\mathsf{N}}|\mathsf{D}(\lambda |\tau )=\mathsf{d}_{%
\text{\textbf{h}}}^{\mathcal{L}}(\lambda )\,\left( \frac{1}{\text{\textsc{n}}%
}\langle \text{\textbf{0}}|\prod_{n=1}^{\mathsf{N}}\left( \frac{\mathcal{C}%
(\xi _{n}|\tau -\eta )}{\text{\textsc{d}}(\xi _{n}-\eta )}\right)
^{h_{n}}\right) \text{ },  \label{FarXYZD-L-EigenV}
\end{equation}%
where:%
\begin{equation}
\mathsf{d}_{\text{\textbf{h}}}^{\mathcal{L}}(\lambda )\equiv \frac{\theta
(t_{\text{\textbf{0}}}-\eta )}{\theta (t_{\text{\textbf{0}}}+\eta )}\frac{%
\theta (t_{\text{\textbf{1}}})}{\theta (t_{\text{\textbf{h}}})}\mathsf{d}_{%
\text{\textbf{h}}}(\lambda ),\,\,\,\mathsf{d}_{\text{\textbf{h}}}(\lambda
)\equiv \prod_{n=1}^{\mathsf{N}}\theta (\lambda -\xi _{n}^{(h_{n})}).
\label{FarXYZEigenValue-D}
\end{equation}%
The action of the remaining generators on the generic state $\langle
h_{1},...,h_{\mathsf{N}}|$ reads:%
\begin{align}
\langle h_{1},...,h_{\mathsf{N}}|\mathcal{C}(\lambda |\tau )& =\sum_{a=1}^{%
\mathsf{N}}\frac{\theta (\tau -\lambda +\xi _{a}^{(h_{a})})}{\theta (\tau )}%
\prod_{b\neq a}\frac{\theta (\lambda -\xi _{b}^{(h_{b})})}{\theta (\xi
_{a}^{(h_{a})}-\xi _{b}^{(h_{b})})}\text{\textsc{d}}(\xi
_{a}^{(1-h_{a})})\langle h_{1},...,h_{\mathsf{N}}|\mathsf{T}_{a}^{+},
\label{FarXYZC-SOV_D-left} \\
&  \notag \\
\langle h_{1},...,h_{\mathsf{N}}|\mathcal{B}(\lambda |\tau )& =\sum_{a=1}^{%
\mathsf{N}}\frac{\theta (\tau -\lambda +\xi _{a}^{(h_{a})})}{\theta (\tau )}%
\prod_{b\neq a}\frac{\theta (\lambda -\xi _{b}^{(h_{b})})}{\theta (\xi
_{a}^{(h_{a})}-\xi _{b}^{(h_{b})})}\text{\textsc{a}}(\xi
_{a}^{(1-h_{a})})\langle h_{1},...,h_{\mathsf{N}}|\mathsf{T}_{a}^{-},
\label{FarXYZB-SOV_D-left}
\end{align}%
where:%
\begin{equation}
\langle h_{1},...,h_{\mathsf{N}}|\mathsf{T}_{a}^{\pm }=\langle
h_{1},...,h_{a}\pm 1,...,h_{\mathsf{N}}|,
\end{equation}%
and $\mathsf{A}(\lambda |\tau )$ is uniquely defined by the quantum
determinant relation.\smallskip

\textsf{II)} \underline{Right SOV-representations} \ Under the condition $%
\left( \ref{FarXYZE-SOV}\right) $, the states $(\ref%
{FarXYZD-right-eigenstates})$ define a basis of pseudo $\mathsf{D}(\lambda
|\tau )$-eigenstates in $\mathbb{\bar{D}}_{\mathsf{(6VD)},\mathsf{N}}^{%
\mathcal{R}}$; indeed it holds:%
\begin{equation}
\mathsf{D}(\lambda |\tau +\eta )|h_{1},...,h_{\mathsf{N}}\rangle =\text{ }%
\left( \frac{1}{\text{\textsc{n}}}\prod_{n=1}^{\mathsf{N}}\left( \frac{%
\mathcal{C}(\xi _{n}-\eta |\tau +\eta )}{\text{\textsc{d}}(\xi _{n}-\eta )}%
\right) ^{(1-h_{n})}|\text{\textbf{1}}\rangle \right) \mathsf{d}_{\text{%
\textbf{h}}}^{\mathcal{R}}(\lambda )\text{ },  \label{FarXYZD-R-EigenV}
\end{equation}%
where:%
\begin{equation}
\mathsf{d}_{\text{\textbf{h}}}^{\mathcal{R}}(\lambda )\equiv \frac{\theta
(t_{\text{\textbf{h}}}+\eta )}{\theta (t_{\text{\textbf{1}}}+\eta )}\mathsf{d%
}_{\text{\textbf{h}}}(\lambda ).
\end{equation}%
The action of the remaining generators on the generic state $|h_{1},...,h_{%
\mathsf{N}}\rangle $ reads:%
\begin{align}
\mathcal{C}(\lambda |\tau )|h_{1},...,h_{\mathsf{N}}\rangle & =\sum_{a=1}^{%
\mathsf{N}}\mathsf{T}_{a}^{-}\text{ }|h_{1},...,h_{\mathsf{N}}\rangle \frac{%
\theta (\tau -\lambda +\xi _{a}^{(h_{a})})}{\theta (\tau )}\prod_{b\neq a}%
\frac{\theta (\lambda -\xi _{b}^{(h_{b})})}{\theta (\xi _{a}^{(h_{a})}-\xi
_{b}^{(h_{b})})}\text{\textsc{d}}(\xi _{a}^{(h_{a})}),
\label{FarXYZC-SOV_D-right} \\
&  \notag \\
\mathcal{B}(\lambda |\tau )|h_{1},...,h_{\mathsf{N}}\rangle & =\sum_{a=1}^{%
\mathsf{N}}\mathsf{T}_{a}^{+}\text{ }|h_{1},...,h_{\mathsf{N}}\rangle \frac{%
\theta (\tau -\lambda +\xi _{a}^{(h_{a})})}{\theta (\tau )}\prod_{b\neq a}%
\frac{\theta (\lambda -\xi _{b}^{(h_{b})})}{\theta (\xi _{a}^{(h_{a})}-\xi
_{b}^{(h_{b})})}\text{\textsc{a}}(\xi _{a}^{(h_{a})}),
\label{FarXYZB-SOV_D-right}
\end{align}%
where:%
\begin{equation}
\mathsf{T}_{a}^{\pm }|h_{1},...,h_{\mathsf{N}}\rangle =|h_{1},...,h_{a}\pm
1,...,h_{\mathsf{N}}\rangle,
\end{equation}%
and $\mathsf{A}(\lambda |\tau )$ is uniquely defined by the quantum
determinant relation.
\end{theorem}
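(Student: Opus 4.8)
The plan is to follow Sklyanin's SOV scheme adapted to the dynamical algebra, proving part~\textsf{I} (left representations) in detail; part~\textsf{II} then follows by the same argument applied to $\overline{\mathcal{M}}^{\mathsf{(6VD)}}$ read from the right, or by the transposition symmetry of the dynamical Yang-Baxter algebra. First I would establish that the left reference covector $\langle\text{\textbf{0}}|$ is a genuine $\mathsf{D}(\lambda|\tau)$-eigenstate: since it carries all local spins up and $\mathsf{D}(\lambda|\tau)$ is the lower-right entry of the ordered product of local $R$-matrices, the local action is triangular and one reads off $\langle\text{\textbf{0}}|\mathsf{D}(\lambda|\tau)=\mathsf{d}^{\mathcal{L}}_{\text{\textbf{0}}}(\lambda)\langle\text{\textbf{0}}|$ with $\mathsf{d}_{\text{\textbf{0}}}(\lambda)=\prod_{n}\theta(\lambda-\xi_{n})$, the scalar prefactors in \rf{FarXYZEigenValue-D} coming from the dynamical $\theta(\tau)$ weights and the $\mathsf{T}_{\tau}^{\sigma^{z}_{0}}$ twist entering $\overline{\mathcal{M}}^{\mathsf{(6VD)}}$.

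Next I would extract from the dynamical Yang-Baxter equation \rf{FarXYZYBECalDyn} the exchange relation between $\mathsf{D}(\lambda|\tau)$ and $\mathcal{C}(\mu|\tau)$ by selecting the appropriate operatorial entries of $R\,\overline{\mathcal{M}}_{1}\overline{\mathcal{M}}_{2}=\overline{\mathcal{M}}_{2}\overline{\mathcal{M}}_{1}R$. This relation has the standard two-term form: a ``wanted'' term reproducing $\mathcal{C}(\mu)\,\mathsf{D}(\lambda)$ with a shifted dynamical parameter, and an ``unwanted'' term proportional to $\mathcal{C}(\lambda)\,\mathsf{D}(\mu)$. Setting $\mu=\xi_{n}$ and commuting $\mathsf{D}(\lambda|\tau)$ leftward through $\prod_{n}\mathcal{C}(\xi_{n}|\tau)^{h_{n}}$, each unwanted contribution leaves behind a residual $\mathsf{D}(\xi_{n})$ which, acting diagonally once it reaches a mode in state $h_{n}=0$, carries the eigenvalue factor $\theta(\xi_{n}-\xi_{n})=\theta(0)=0$ and therefore vanishes; only the diagonal terms survive. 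Tracking the cumulative shifts — each applied $\mathcal{C}(\xi_{n})$ moves a zero of the eigenvalue from $\xi_{n}$ to $\xi_{n}-\eta$, i.e. $\xi_{n}\mapsto\xi_{n}^{(h_{n})}$, while the left action simultaneously shifts $\tau\mapsto\tau-\eta$ in the surviving $\mathcal{C}$-product (this is exactly the ``pseudo'' character of the eigenstate) — reproduces \rf{FarXYZD-L-EigenV} with eigenvalue \rf{FarXYZEigenValue-D}. For this to be well defined I would also check, from the same algebra and the central quantum determinant established above, that the $\mathcal{C}(\xi_{n}|\tau)$ commute up to dynamical shifts (so the product is order-independent) and that $\mathcal{C}(\xi_{n}|\tau)^{2}$ annihilates the relevant states, forcing $h_{n}\in\{0,1\}$.

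To upgrade these pseudo-eigencovectors to a basis of $\mathbb{\bar{D}}^{\mathcal{L}}_{\mathsf{(6VD)},\mathsf{N}}$ I would prove that the $2^{\mathsf{N}}$ states \rf{FarXYZD-left-eigenstates} are linearly independent. Distinct $\text{\textbf{h}}$ give distinct eigenvalues $\mathsf{d}_{\text{\textbf{h}}}(\lambda)=\prod_{n}\theta(\lambda-\xi_{n}^{(h_{n})})$, but since the $\theta$-functions are only quasi-periodic the required separation of their zero-sets holds exactly under the inhomogeneity condition \rf{FarXYZE-SOV}. Concretely, linear independence reduces to the non-vanishing of the pseudo-Vandermonde determinant $\det_{\mathsf{N}}\Theta^{(\text{\textbf{h}})}_{ij}$ built from the theta-functions-with-characteristic $\vartheta_{j}$ of \rf{FarXYZpseudo-Vander}; this determinant is the elliptic analogue of $\prod_{a<b}(\xi_{a}^{(h_{a})}-\xi_{b}^{(h_{b})})$, and its factorization shows it is nonzero precisely when \rf{FarXYZE-SOV} holds. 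As $\dim\mathbb{\bar{D}}^{\mathcal{L}}_{\mathsf{(6VD)},\mathsf{N}}=2^{\mathsf{N}}$, the pseudo-eigencovectors then form a basis and the $\mathsf{D}$-spectrum is simple.

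Finally, for the actions \rf{FarXYZC-SOV_D-left}--\rf{FarXYZB-SOV_D-left} of $\mathcal{C}$ and $\mathcal{B}$ I would use the same exchange relations to determine $\langle\text{\textbf{h}}|\mathcal{C}(\lambda|\tau)$ and $\langle\text{\textbf{h}}|\mathcal{B}(\lambda|\tau)$ as functions of $\lambda$. Each is, in $\lambda$, an elliptic ``polynomial'' of prescribed degree and quasi-periodicity, hence fixed by its values at the $\mathsf{N}$ interpolation points $\lambda=\xi_{a}^{(h_{a})}$. At these points $\mathcal{C}(\xi_{a}^{(h_{a})}|\tau)$ acts as the pure shift $\mathsf{T}_{a}^{+}$ (raising $h_{a}$) and $\mathcal{B}(\xi_{a}^{(h_{a})}|\tau)$ as $\mathsf{T}_{a}^{-}$ (lowering $h_{a}$), with the \textsc{d}- and \textsc{a}-coefficients read off from the normalization \textsc{n} and the quantum determinant. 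Reassembling by elliptic Lagrange interpolation — with the prefactor $\theta(\tau-\lambda+\xi_{a}^{(h_{a})})/\theta(\tau)$ supplying the correct quasi-periodicity in $\lambda$ — yields \rf{FarXYZC-SOV_D-left}--\rf{FarXYZB-SOV_D-left}, and $\mathsf{A}(\lambda|\tau)$ is then fixed by the quantum determinant relation. The main obstacle throughout is the bookkeeping: keeping the dynamical $\tau$-shifts generated by the $\mathsf{T}_{\tau}^{\pm}$ consistent through every exchange, and selecting the elliptic interpolation basis with exactly the right characteristic so that the two sides of each identity share the same quasi-periods — this is the technical heart of the dynamical, as opposed to ordinary $6$-vertex, SOV construction.
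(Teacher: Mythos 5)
Your proposal retraces the paper's own proof in its main steps: the reference covector $\langle\text{\textbf{0}}|$ is a genuine $\mathsf{D}$-eigenstate with eigenvalue proportional to $\text{\textsc{d}}(\lambda)=\prod_{n}\theta(\lambda-\xi_{n})$; the pseudo-eigenvalue relation \rf{FarXYZD-L-EigenV} follows by pushing $\mathsf{D}(\lambda|\tau)$ through the $\mathcal{C}$-product using the dynamical exchange relation \rf{FarXYZDC-YBC}, the unwanted terms vanishing on the zeros $\text{\textsc{d}}(\xi_{n})=0$ of the reference eigenvalue, with the residual $\tau$-shift producing exactly the ``pseudo'' character; the actions \rf{FarXYZC-SOV_D-left}--\rf{FarXYZB-SOV_D-left} are elliptic interpolation formulae through the points $\xi_{a}^{(h_{a})}$; and $\mathsf{A}$ is recovered from the quantum determinant. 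Part \textsf{II} is handled symmetrically from $|\text{\textbf{1}}\rangle$, as in the paper.

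The genuine gap is in your linear-independence (basis) step. The heuristic ``distinct pseudo-eigenvalues imply independence'' is not automatic here: $\mathsf{D}(\lambda|\tau)$ does not map $\langle h_{1},...,h_{\mathsf{N}}|$ to a multiple of itself, but to a multiple of a \emph{different} covector (the one with $\mathcal{C}(\xi_{n}|\tau-\eta)$ in place of $\mathcal{C}(\xi_{n}|\tau)$), so the standard argument for joint eigenvectors of a commuting family does not apply verbatim. And your proposed substitute --- that independence ``reduces to'' $\det_{\mathsf{N}}\Theta_{ij}^{(\text{\textbf{h}})}\neq 0$ --- is not a reduction: $\det_{\mathsf{N}}\Theta_{ij}^{(\text{\textbf{h}})}$ is an $\mathsf{N}\times\mathsf{N}$ determinant attached to a single multi-index $\text{\textbf{h}}$; its non-vanishing fixes the normalization and later the measure through \rf{FarXYZM_jj}, but it cannot by itself encode independence of $2^{\mathsf{N}}$ covectors. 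It would do so only after the orthogonality relations $\langle h_{1},...,h_{\mathsf{N}}|k_{1},...,k_{\mathsf{N}}\rangle=\prod_{c}\delta_{h_{c},k_{c}}/\det_{\mathsf{N}}\Theta_{ij}^{(\text{\textbf{h}})}$ are established, which is the content of the \emph{subsequent} proposition of the paper and itself relies on the $\mathcal{C}$-actions and the annihilation identities; as written, you assert the conclusion without an argument. The paper closes this step directly: assuming $\sum_{\text{\textbf{h}}}c_{\text{\textbf{h}}}\langle h_{1},...,h_{\mathsf{N}}|$ is the zero covector, it applies the operator product $\prod_{n=1}^{\mathsf{N}}\mathcal{D}(\xi_{n}^{(\bar{k}_{n})}|\tau)$ with $\bar{k}_{n}=1-\bar{h}_{n}$; every term with $\text{\textbf{h}}\neq\text{\textbf{\={h}}}$ then carries a factor $\theta(0)=0$ through its pseudo-eigenvalue, while the term $\text{\textbf{\={h}}}$ survives multiplied by $\prod_{n}\mathsf{d}_{\text{\textbf{\={h}}}}^{\mathcal{L}}(\xi_{n}^{(\bar{k}_{n})})\neq 0$ --- this is precisely where condition \rf{FarXYZE-SOV} is needed --- forcing $c_{\text{\textbf{\={h}}}}=0$. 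Replacing your determinant claim with this (or an equivalent) argument would complete the proof.
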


\begin{proof}
The proof of the theorem is based on the dynamical Yang-Baxter commutation
relations and on the fact that the left and right references states are $%
\mathsf{D}(\lambda |\tau )$-eigenstates:%
\begin{equation}
\langle \text{\textbf{0}}|\mathsf{A}(\lambda |\tau )=\text{\textsc{a}}%
(\lambda )\langle \text{\textbf{0}}|,\text{ \ \ \ }\langle \text{\textbf{0}}|%
\mathsf{D}(\lambda |\tau )=\text{\textsc{d}}(\lambda |t_{\text{\textbf{0}}%
})\langle \text{\textbf{0}}|,\text{ \ \ \ }\langle \text{\textbf{0}}|%
\mathcal{B}(\lambda |\tau )=\text{\b{0}},\text{ \ \ \ }\langle \text{\textbf{%
0}}|\mathcal{C}(\lambda |\tau )\neq \text{\b{0}},
\end{equation}%
and similarly:%
\begin{equation}
\mathsf{D}(\lambda |\tau )|\text{\textbf{1}}\rangle =|\text{\textbf{1}}%
\rangle \text{\textsc{a}}(\lambda ),\text{ \ \ \ }\mathsf{A}(\lambda |\tau )|%
\text{\textbf{1}}\rangle =|\text{\textbf{1}}\rangle \text{\textsc{d}}%
(\lambda |t_{\text{\textbf{0}}}),\text{ \ \ \ }\mathcal{B}(\lambda |\tau )|%
\text{\textbf{1}}\rangle =\text{\b{0}},\text{ \ \ \ }\mathcal{C}(\lambda
|\tau )|\text{\textbf{1}}\rangle \neq \text{\b{0}},
\end{equation}%
where:%
\begin{equation}
\text{\textsc{d}}(\lambda |t_{\text{\textbf{0}}})\equiv \text{\textsc{d}}%
(\lambda )\frac{\theta (\eta -t_{\text{\textbf{0}}})}{\theta (\eta +t_{\text{%
\textbf{0}}})}.
\end{equation}%
Indeed, to prove that $\left( \ref{FarXYZD-left-eigenstates}\right) $ and $%
\left( \ref{FarXYZD-right-eigenstates}\right) $ are left and right
pseudo-eigenstates of $\mathsf{D}(\lambda |\tau )$ as stated in $\left( \ref%
{FarXYZD-L-EigenV}\right) $ and $\left( \ref{FarXYZD-R-EigenV}\right) $, we
have just to repeat the standard computations in algebraic Bethe ansatz \cite%
{FarXYZF95} as done in \cite{FarXYZN12-0}. More in detail, considering the
action of $\mathcal{D}(\lambda |\tau )$ on the left states $\langle
h_{1},...,h_{\mathsf{N}}|$ and following the steps given in the proof of
Theorem 3.2 of \cite{FarXYZN12-0} by using here the dynamical 6-vertex
commutation relation:%
\begin{align}
\mathcal{C}(\mu |\tau )\mathcal{D}(\lambda |\tau )& =\left[ \mathcal{D}%
(\lambda |\tau )\mathcal{C}(\mu |\tau )\theta (\lambda -\mu +\eta )\theta
(\tau )-\mathcal{D}(\mu |\tau )\mathcal{C}(\lambda |\tau )\theta (\eta
)\theta (\tau +\lambda -\mu )\right]  \notag \\
& \times \frac{1}{\theta (\lambda -\mu )\theta (\tau +\eta )},
\label{FarXYZDC-YBC}
\end{align}

we get:%
\begin{equation}
\langle h_{1},...,h_{\mathsf{N}}|\text{ }\mathcal{D}(\lambda |\tau )=\mathsf{%
d}_{\text{\textbf{h}}}^{\mathcal{L}}(\lambda )\left[ \frac{1}{\text{\textsc{n%
}}}\left( \langle \text{\textbf{0}}|\mathsf{T}_{\tau }^{-}\right)
\prod_{n=1}^{\mathsf{N}}\left( \frac{\mathcal{C}(\xi _{n}|\tau )}{\text{%
\textsc{d}}(\xi _{n}-\eta )}\right) ^{h_{n}}\right] .
\label{FarXYZL-eigen-1}
\end{equation}%
Similarly, by using the dynamical 6-vertex commutation relation:%
\begin{align}
\mathcal{D}(\lambda |\tau )\mathcal{C}(\mu |\tau )& =\frac{1}{\theta (\mu
-\lambda )\theta (\tau +\eta )}[\theta (\mu -\lambda +\eta )\theta (\tau )%
\mathcal{C}(\mu |\tau )\mathcal{D}(\lambda |\tau )  \notag \\
& -\theta (\eta )\theta (\tau +\mu -\lambda )\mathcal{C}(\lambda |\tau )%
\mathcal{D}(\mu |\tau ),
\end{align}

we get:%
\begin{equation}
\mathcal{D}(\lambda |\tau )\text{ }|h_{1},...,h_{\mathsf{N}}\rangle =\text{ }%
\left( \frac{1}{\text{\textsc{n}}}\prod_{n=1}^{\mathsf{N}}\left( \frac{%
\mathcal{C}(\xi _{n}-\eta |\tau )}{\text{\textsc{d}}(\xi _{n}-\eta )}\right)
^{(1-h_{n})}\text{ }\left( \mathsf{T}_{\tau }^{-}\text{ }|\text{\textbf{1}}%
\rangle \right) \right) \mathsf{d}_{\text{\textbf{h}}}^{\mathcal{L}}(\lambda
).  \label{FarXYZR-eigen-1}
\end{equation}%
From the formulae $\left( \ref{FarXYZL-eigen-1}\right) $ and $\left( \ref%
{FarXYZR-eigen-1}\right) $ by using the commutation relations $\left( \ref%
{FarXYZDyn-op-comm}\right) $, $\mathsf{D}(\lambda |\tau )=\mathcal{D}%
(\lambda |\tau )\mathsf{T}_{\tau }^{+}$ and $\mathsf{D}(\lambda |\tau +\eta
)=\mathsf{T}_{\tau }^{+}\mathcal{D}(\lambda |\tau )$ then the formulae $%
\left( \ref{FarXYZD-L-EigenV}\right) $ and $\left( \ref{FarXYZD-R-EigenV}%
\right) $ simply follow.

Let us prove now that the states $\langle h_{1},...,h_{\mathsf{N}}|$ form a
set of 2$^{\mathsf{N}}$ independent states, i.e. a basis of $\mathbb{\bar{D}}%
_{\mathsf{(6VD)},\mathsf{N}}^{\mathcal{L}}$; similarly, we can prove that
the states $|h_{1},...,h_{\mathsf{N}}\rangle $ form a basis of $\mathbb{\bar{%
D}}_{\mathsf{(6VD)},\mathsf{N}}^{\mathcal{R}}$. By definition we have to
prove that their linear combination to zero:%
\begin{equation}
\sum_{h_{1},...,h_{\mathsf{N}}=0}^{1}c_{\text{\textbf{h}}}\langle
h_{1},...,h_{\mathsf{N}}|=\text{\b{0}}  \label{FarXYZLC-0}
\end{equation}%
holds only if all the coefficients are zeros. Let us denote with \textbf{\={h%
}}$=\{\bar{h}_{1},...,\bar{h}_{\mathsf{N}}\}$ the generic $\mathsf{N}$-tuple
in $\{0,1\}^{\mathsf{N}}$ then by applying to both side of $\left( \ref%
{FarXYZLC-0}\right) $ the operator product:%
\begin{equation}
\prod_{n=1}^{\mathsf{N}}\mathcal{D}(\xi _{n}^{\left( \bar{k}_{n}\right)
}|\tau )\text{ \ with }\bar{k}_{n}=\bar{h}_{n}+1\text{ mod2}\in \{0,1\}
\end{equation}
we get:
\begin{equation}
c_{\text{\textbf{\={h}}}}\text{ }\prod_{n=1}^{\mathsf{N}}\mathsf{d}_{\text{%
\textbf{\={h}}}}^{\mathcal{L}}(\xi _{n}^{\left( \bar{k}_{n}\right) })\left[
\left( \langle \text{\textbf{0}}|\mathsf{T}_{\tau }^{-\mathsf{N}}\right)
\prod_{n=1}^{\mathsf{N}}\left( \frac{\mathcal{C}(\xi _{n}|\tau )}{\text{%
\textsc{d}}(\xi _{n}-\eta )}\right) ^{\bar{h}_{n}}\right] =\text{\b{0}}
\end{equation}%
which implies $c_{\text{\textbf{\={h}}}}=0$ being:%
\begin{equation}
\prod_{n=1}^{\mathsf{N}}\mathsf{d}_{\text{\textbf{\={h}}}}^{\mathcal{L}}(\xi
_{n}^{\left( \bar{k}_{n}\right) })\neq 0,\text{ \ \ }\left[ \left( \langle 
\text{\textbf{0}}|\mathsf{T}_{\tau }^{-\mathsf{N}}\right) \prod_{n=1}^{%
\mathsf{N}}\left( \frac{\mathcal{C}(\xi _{n}|\tau )}{\text{\textsc{d}}(\xi
_{n}-\eta )}\right) ^{\bar{h}_{n}}\right] \neq \text{\b{0}}\,.
\end{equation}%
The action of $\mathcal{B}(\xi _{n}^{\left( h_{n}\right) }|\tau )$ and $%
\mathcal{C}(\xi _{n}^{\left( h_{n}\right) }|\tau )$ on the left and right
states $\left( \ref{FarXYZD-left-eigenstates}\right) $ and $\left( \ref%
{FarXYZD-right-eigenstates}\right) $ follows by imposing the dynamical
Yang-Baxter commutation relations and the quantum determinant relations:%
\begin{equation}
\langle h_{1},...,h_{\mathsf{N}}|\det{}_{q}\mathsf{M}(\lambda )=%
\text{\textsc{a}}(\lambda )\text{\textsc{d}}(\lambda -\eta )\langle
h_{1},...,h_{\mathsf{N}}|,\text{ \ }\det{}_{q}\mathsf{M}(\lambda
)|h_{1},...,h_{\mathsf{N}}\rangle =\text{\textsc{a}}(\lambda )\text{\textsc{d%
}}(\lambda -\eta )|h_{1},...,h_{\mathsf{N}}\rangle ,
\end{equation}%
where we have used that:%
\begin{equation}
\left. \frac{\theta (\tau +\eta \text{$\mathsf{S}$})}{\theta (\tau )}%
\right\vert _{\langle h_{1},...,h_{\mathsf{N}}|,|h_{1},...,h_{\mathsf{N}%
}\rangle }=\frac{\theta (-t_{\text{\textbf{h}}})}{\theta (t_{\text{\textbf{h}%
}})}=-1,
\end{equation}%
and in the quantum determinant $\det{}_{q}\mathsf{M}(\lambda )$ we
use the identities:%
\begin{align}
\mathcal{A}(\lambda |\tau )\mathcal{D}(\lambda -\eta |\tau )-\mathcal{B}%
(\lambda |\tau )\mathcal{C}(\lambda -\eta |\tau )& =\mathsf{A}(\lambda |\tau
)\mathsf{D}(\lambda -\eta |\tau +\eta )-\mathsf{B}(\lambda |\tau )\mathsf{C}%
(\lambda -\eta |\tau -\eta ), \\
\mathcal{D}(\lambda |\tau )\mathcal{A}(\lambda -\eta |\tau )-\mathcal{C}%
(\lambda |\tau )\mathcal{B}(\lambda -\eta |\tau )& =\mathsf{D}(\lambda |\tau
)\mathsf{A}(\lambda -\eta |\tau -\eta )-\mathsf{C}(\lambda |\tau )\mathsf{B}%
(\lambda -\eta |\tau +\eta ).
\end{align}%
Finally, the left $\left( \ref{FarXYZC-SOV_D-left}\right) $-$\left( \ref%
{FarXYZB-SOV_D-left}\right) $ and right $\left( \ref{FarXYZC-SOV_D-right}%
\right) $-$\left( \ref{FarXYZB-SOV_D-right}\right) $ representations of $%
\mathcal{B}(\lambda |\tau )$ and $\mathcal{C}(\lambda |\tau )$ are just
interpolation formulae in the special points $\{\xi _{1}^{\left(
h_{1}\right) },...,\xi _{\mathsf{N}}^{\left( h_{\mathsf{N}}\right) }\}$
which hold for elliptic polynomials as illustrated for example in Appendix
A of \cite{FarXYZPRL-08}.
\end{proof}

\subsubsection{SOV-decomposition of the identity}

The previous results allow to write the following spectral decomposition of
the identity $\mathbb{I}$:%
\begin{equation}
\mathbb{I}\equiv \sum_{h_{1},...,h_{\mathsf{N}}=0}^{1}\mu _{\text{\textbf{h}}%
}|h_{1},...,h_{\mathsf{N}}\rangle \langle h_{1},...,h_{\mathsf{N}}|,
\end{equation}%
in terms of the left and right SOV-basis. Here,%
\begin{equation}
\mu _{\text{\textbf{h}}}\equiv \frac{1}{\langle h_{1},...,h_{\mathsf{N}%
}|h_{1},...,h_{\mathsf{N}}\rangle },
\end{equation}%
is the analogous of the so-called Sklyanin's measure; which is discrete in
these representations and defined by the following proposition:

\begin{proposition}
Let $\langle h_{1},...,h_{\mathsf{N}}|$ be the generic covector (\ref%
{FarXYZD-left-eigenstates}) and $|k_{1},...,k_{\mathsf{N}}\rangle $ be the
generic vector (\ref{FarXYZD-right-eigenstates}), then it holds:%
\begin{equation}
\langle h_{1},...,h_{\mathsf{N}}|k_{1},...,k_{\mathsf{N}}\rangle =\frac{%
\prod_{c=1}^{\mathsf{N}}\delta _{h_{c},k_{c}}}{\det_{\mathsf{N}}\Theta
_{ij}^{\left( \text{\textbf{h}}\right) }},  \label{FarXYZM_jj}
\end{equation}%
where $\Theta _{ij}^{\left( \text{\textbf{h}}\right) }$ is the $\mathsf{N}%
\times \mathsf{N}$ matrix defined in $(\ref{FarXYZpseudo-Vander})$. Then,
the SOV-decomposition of the identity explicitly reads: 
\begin{equation}
\mathbb{I}\equiv \sum_{h_{1},...,h_{\mathsf{N}}=0}^{1}\det_{\mathsf{N}%
}\Theta _{ij}^{\left( \text{\textbf{h}}\right) }\text{ }|h_{1},...,h_{%
\mathsf{N}}\rangle \langle h_{1},...,h_{\mathsf{N}}|.
\label{FarXYZDecomp-Id}
\end{equation}
\end{proposition}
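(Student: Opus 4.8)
The plan is to establish the biorthogonality relation \rf{FarXYZM_jj} and then read off \rf{FarXYZDecomp-Id} as a formal corollary. By the preceding theorem both $\langle h_{1},\ldots,h_{\mathsf{N}}|$ and $|k_{1},\ldots,k_{\mathsf{N}}\rangle$ are simultaneous (left, resp. right) eigenstates of $\tau$, with eigenvalues $t_{\mathbf{h}}$ and $t_{\mathbf{k}}$ that depend on $\mathbf{h},\mathbf{k}$ only through the sector labels $\mathsf{s}_{\mathbf{h}},\mathsf{s}_{\mathbf{k}}$. Evaluating $\langle h_{1},\ldots,h_{\mathsf{N}}|\,\tau\,|k_{1},\ldots,k_{\mathsf{N}}\rangle$ by letting $\tau$ act to the left and to the right gives $(t_{\mathbf{h}}-t_{\mathbf{k}})\langle h_{1},\ldots,h_{\mathsf{N}}|k_{1},\ldots,k_{\mathsf{N}}\rangle=0$, so the overlap vanishes whenever $\mathbf{h}$ and $\mathbf{k}$ lie in different spin sectors. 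It then remains to treat pairs with $\mathsf{s}_{\mathbf{h}}=\mathsf{s}_{\mathbf{k}}$ and to compute the diagonal entries.

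For the within-sector off-diagonal entries I would follow the standard SOV argument of \cite{FarXYZN12-0}, adapted to the dynamical setting. Both families are pseudo-$\mathsf{D}(\lambda|\tau)$-eigenbases in which the label $\mathbf{h}$ enters the eigenvalue \rf{FarXYZD-L-EigenV}--\rf{FarXYZD-R-EigenV} only through $\mathsf{d}_{\mathbf{h}}(\lambda)=\prod_{n}\theta(\lambda-\xi_{n}^{(h_{n})})$, whose zero set $\{\xi_{n}^{(h_{n})}\}$ determines $\mathbf{h}$ injectively precisely because of the separation hypothesis \rf{FarXYZE-SOV}. The only subtlety relative to the non-dynamical case is that $\mathsf{D}(\lambda|\tau)$ acts on these states up to a $\mathsf{T}_{\tau}$-translation, which merely rescales the $\tau$-eigenstates by explicit scalars and so does not affect the vanishing argument. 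Concretely the mechanism is that, for $\mathbf{h}\neq\mathbf{k}$ in the same sector, some site receives an insertion of $\mathcal{C}$ at a special point $\xi_{a}^{(h_{a})}$ that is already occupied: by \rf{FarXYZC-SOV_D-left}--\rf{FarXYZB-SOV_D-right} this insertion localizes to that single site and raises the corresponding $h_{a}$ out of $\{0,1\}$, annihilating the state (the nilpotency of order $\mathsf{N}+1$ of $\mathcal{B},\mathcal{C}$). This produces the Kronecker factor $\prod_{c}\delta_{h_{c},k_{c}}$.

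The diagonal value is the real content. Inserting \rf{FarXYZD-left-eigenstates} and \rf{FarXYZD-right-eigenstates}, exactly one factor $\mathcal{C}$ survives at each site, so $\langle h_{1},\ldots,h_{\mathsf{N}}|h_{1},\ldots,h_{\mathsf{N}}\rangle$ collapses to the single fully excited reference matrix element $\text{\textsc{n}}^{-2}\big(\prod_{n}\text{\textsc{d}}(\xi_{n}-\eta)\big)^{-1}\langle\mathbf{0}|\prod_{n=1}^{\mathsf{N}}\mathcal{C}(\xi_{n}^{(1-h_{n})}|\tau)|\mathbf{1}\rangle$. I would compute its $\mathbf{h}$-dependence by a recursion in single-site flips, using the action formulas \rf{FarXYZC-SOV_D-left}--\rf{FarXYZB-SOV_D-right}; this generates the pairwise theta products $\prod_{a<b}\theta(\xi_{a}^{(h_{a})}-\xi_{b}^{(h_{b})})$ together with sector-dependent scalar prefactors coming from the $\tau$-dependent coefficients and $\mathsf{T}_{\tau}$-shifts. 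The decisive step — and the place I expect to be the main obstacle — is to package this combination as $\det_{\mathsf{N}}\Theta_{ij}^{(\mathbf{h})}$ through the elliptic (Frobenius-type) Vandermonde determinant for the characteristic theta functions $\vartheta_{i-1}$: for generic arguments $\mu_{1},\ldots,\mu_{\mathsf{N}}$ one has $\det_{\mathsf{N}}\vartheta_{i-1}(\mu_{j})\propto\prod_{a<b}\theta(\mu_{b}-\mu_{a})$ times a single theta of $\sum_{j}\mu_{j}$. With $\mu_{j}=\bar{\xi}_{j}^{(h_{j})}$, the additive shifts in \rf{FarXYZpseudo-Vander} are engineered so that $\bar{\xi}_{a}^{(h_{a})}-\bar{\xi}_{b}^{(h_{b})}=\xi_{a}^{(h_{a})}-\xi_{b}^{(h_{b})}$, matching the pairwise factors, while $\sum_{j}\bar{\xi}_{j}^{(h_{j})}$ depends on $\mathbf{h}$ only through the sector and absorbs the remaining prefactors. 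Verifying that these two $\mathbf{h}$-dependences cancel exactly, so that $\langle h_{1},\ldots,h_{\mathsf{N}}|h_{1},\ldots,h_{\mathsf{N}}\rangle\,\det_{\mathsf{N}}\Theta_{ij}^{(\mathbf{h})}$ is independent of $\mathbf{h}$, is the heart of the proposition; the constant $c_{\mathsf{N}}$ of \rf{FarXYZR-Pseudo-Vendermonde} and the normalization $\text{\textsc{n}}$, fixed through the same determinant at $\mathbf{h}\equiv\mathbf{0}$, are then precisely what turn this constant into $1$ and yield \rf{FarXYZM_jj}.

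Finally, \rf{FarXYZDecomp-Id} follows at once: since $\{\langle h_{1},\ldots,h_{\mathsf{N}}|\}$ and $\{|k_{1},\ldots,k_{\mathsf{N}}\rangle\}$ are bases (preceding theorem) and $\det_{\mathsf{N}}\Theta_{ij}^{(\mathbf{h})}\neq0$ under \rf{FarXYZE-SOV}, applying $\sum_{\mathbf{h}}\det_{\mathsf{N}}\Theta_{ij}^{(\mathbf{h})}\,|h_{1},\ldots,h_{\mathsf{N}}\rangle\langle h_{1},\ldots,h_{\mathsf{N}}|$ to an arbitrary basis vector $|k_{1},\ldots,k_{\mathsf{N}}\rangle$ returns it unchanged by \rf{FarXYZM_jj}, so the operator equals the identity. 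I expect the orthogonality and this last step to be structural, the only genuine difficulty being the elliptic Vandermonde evaluation of the diagonal overlap.
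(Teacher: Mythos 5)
Your proposal follows essentially the same route as the paper's own proof: orthogonality across spin sectors from the $\tau$-eigenvalues, annihilation of a doubled $\mathcal{C}$-insertion at a common site for within-sector off-diagonal overlaps, a single-site flip recursion for the diagonal elements, and the elliptic Vandermonde identity \rf{FarXYZR-Pseudo-Vendermonde} combined with the normalization \textsc{n} to identify the diagonal value as $1/\det_{\mathsf{N}}\Theta _{ij}^{\left( \text{\textbf{h}}\right)}$. The cancellation you single out as the heart of the argument — flip-invariance of $\langle h_{1},...,h_{\mathsf{N}}|h_{1},...,h_{\mathsf{N}}\rangle \det_{\mathsf{N}}\Theta _{ij}^{\left( \text{\textbf{h}}\right)}$ — is exactly what the paper establishes by evaluating the matrix element $x_{a}=\langle h_{1},...,h_{a}=0,...,h_{\mathsf{N}}|\,\mathcal{C}(\xi_{a}|\tau )\,|h_{1},...,h_{a}=1,...,h_{\mathsf{N}}\rangle$ in two ways, so your plan is correct and matches the published argument.
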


\begin{proof}
The fact that $\langle h_{1},...,h_{\mathsf{N}}|$\ and $|k_{1},...,k_{%
\mathsf{N}}\rangle $ are simultaneous eigenstates of $\tau $ with
eigenvalues $t_{\text{\textbf{h}}}$ and $t_{\text{\textbf{k}}}$,
respectively, implies that the l.h.s. of $(\ref{FarXYZM_jj})$ is zero unless:%
\begin{equation}
\sum_{c=1}^{\mathsf{N}}h_{c}=\sum_{c=1}^{\mathsf{N}}k_{c}.
\label{FarXYZCond-1}
\end{equation}%
Let us assume now that \textbf{h}$\neq $\textbf{k} but that they satisfy the
condition $(\ref{FarXYZCond-1})$. Under these conditions it is easy to
understand that there exists at least one $n\in \{1,...,\mathsf{N}\}$ such
that $h_{n}=1 $ and $k_{n}=0$ and then the l.h.s. of $(\ref{FarXYZM_jj})$
contains the product of operators $\mathcal{C}(\xi _{n}|\tau )\mathcal{C}%
(\xi _{n}-\eta |\tau )$ which is zero for the standard 6-vertex annihilation
identities. Then, as stated in $(\ref{FarXYZM_jj})$, $\langle h_{1},...,h_{%
\mathsf{N}}|k_{1},...,k_{\mathsf{N}}\rangle $ is zero for \textbf{h}$\neq $%
\textbf{k}; so we are left with the computations for \textbf{h}$=$\textbf{k}%
. In order to compute them, let us compute the matrix elements: 
\begin{equation}
x_{a}\equiv \langle h_{1},...,h_{a}=0,...,h_{\mathsf{N}}|\ \mathcal{C}(\xi
_{a}|\tau )|h_{1},...,h_{a}=1,...,h_{\mathsf{N}}\rangle ,
\end{equation}%
where $a\in \{1,...,\mathsf{N}\}$. Then using the left action of the
operator $\mathcal{C}(\eta _{a}|\tau )$ we get:%
\begin{equation}
x_{a}=\text{\textsc{d}}(\xi _{a}-\eta )\langle h_{1},...,h_{a}=1,...,h_{%
\mathsf{N}}|h_{1},...,h_{a}=1,...,h_{\mathsf{N}}\rangle ,
\end{equation}%
while using the right action of the operator $\mathcal{C}(\eta _{a}|\tau )$
and the orthogonality of right and left pseudo $\mathsf{D}$-eigenstates
corresponding to different eigenvalues we get:%
\begin{align}
x_{a}=& \prod_{b\neq a}\frac{\theta (\xi _{a}-\xi _{b}+\eta h_{b})}{\theta
(\xi _{a}-\eta -\xi _{b}+\eta h_{b})}\text{\textsc{d}}(\xi _{a}-\eta
)\langle h_{1},...,h_{a}=0,...,h_{\mathsf{N}}|h_{1},...,h_{a}=0,...,h_{%
\mathsf{N}}\rangle  \notag \\
& \times \frac{\theta (t_{\text{\textbf{h}}\mathbf{(}h_{a}=1)}-\eta )}{%
\theta (t_{\text{\textbf{h}}\mathbf{(}h_{a}=1)})},
\end{align}%
and so:%
\begin{equation*}
\frac{\langle h_{1},...,h_{a}=1,...,h_{\mathsf{N}}|h_{1},...,h_{a}=1,...,h_{%
\mathsf{N}}\rangle }{\langle h_{1},...,h_{a}=0,...,h_{\mathsf{N}%
}|h_{1},...,h_{a}=0,...,h_{\mathsf{N}}\rangle }=\frac{\theta (t_{\text{%
\textbf{h}}\mathbf{(}h_{a}=0)})}{\theta (t_{\text{\textbf{h}}\mathbf{(}%
h_{a}=1)})}\prod_{b\neq a,b=1}^{\mathsf{N}}\frac{\theta (\xi _{a}^{(0)}-\xi
_{b}^{(h_{b})})}{\theta (\xi _{a}^{(1)}-\xi _{b}^{(h_{b})})},
\end{equation*}%
from which the proposition simply follows when we use the identity\footnote{%
See for example Proposition 4 of \cite{FarXYZPRL-08} for a proof of it.}:%
\begin{equation}
\det_{\mathsf{N}}\Theta _{ij}^{\left( \text{\textbf{h}}\right) }=c_{\mathsf{N%
}}\text{ }\theta (\sum_{a=1}^{\mathsf{N}}\bar{\xi}_{a}^{(h_{a})}-\frac{%
\mathsf{N}-1}{2})\prod_{1\leq a<b\leq \mathsf{N}}\theta (\xi
_{a}^{(h_{a})}-\xi _{b}^{(h_{b})}),  \label{FarXYZR-Pseudo-Vendermonde}
\end{equation}%
and we recall that:%
\begin{equation}
t_{\text{\textbf{h}}}=-\sum_{a=1}^{\mathsf{N}}\bar{\xi}_{a}^{(h_{a})}+\frac{%
\mathsf{N}-1}{2},
\end{equation}%
and that the choice of the normalization \textsc{n} implies:%
\begin{equation}
\langle \text{\textbf{0}}|\text{\textbf{0}}\rangle =\frac{1}{\det_{\mathsf{N}%
}\Theta _{ij}^{\left( \text{\textbf{0}}\right) }}.
\end{equation}
\end{proof}

\subsection{\label{FarXYZT^6VD-spectrum}SOV characterization of $\overline{%
\mathcal{T}}^{\mathsf{(6VD)}}$-spectrum}

Let us denote with $\Sigma _{\overline{\mathcal{T}}^{\mathsf{(6VD)}}}$ the
set of the eigenvalue functions $\mathsf{t}_{\mathsf{6VD}}(\lambda )$ of the
transfer matrix $\overline{\mathcal{T}}^{\mathsf{(6VD)}}(\lambda |\tau )$,
then the following characterization of the $\overline{\mathcal{T}}^{\mathsf{%
(6VD)}}$-spectrum (eigenvalues \& eigenstates) in quantum separation of
variables holds:

\begin{theorem}
\label{FarXYZC:T-eigenstates}For any fixed $\mathsf{N}$-tuple of
inhomogeneities $\{\xi _{1},...,\xi _{\mathsf{N}}\}\in \mathbb{C}$ $^{%
\mathsf{N}}$ satisfying $\left( \ref{FarXYZE-SOV}\right) $ the spectrum of $%
\overline{\mathcal{T}}^{\mathsf{(6VD)}}(\lambda |\tau )$ in $\mathbb{\bar{D}}%
_{\mathsf{(6VD)},\mathsf{N}}^{\mathcal{L}/\mathcal{R}}$ is simple and $%
\Sigma _{{\overline{\mathcal{T}}^{\mathsf{(6VD)}}}}$ coincides with the set
of functions of the form:%
\begin{equation}
\mathsf{t}_{\mathsf{6VD}}(\lambda )=\sum_{a=1}^{\mathsf{N}}\frac{\theta (t_{%
\text{\textbf{0}}}-\lambda +\xi _{a})}{\theta (t_{\text{\textbf{0}}})}%
\prod_{b\neq a}\frac{\theta (\lambda -\xi _{b})}{\theta (\xi _{a}-\xi _{b})}%
\mathsf{t}_{\mathsf{6VD}}(\xi _{a})  \label{FarXYZset-t}
\end{equation}%
which are solutions of the discrete system of equations:%
\begin{equation}
\mathsf{t}_{\mathsf{6VD}}(\xi _{a}^{(0)})\mathsf{t}_{\mathsf{6VD}}(\xi
_{a}^{(1)})=\text{\textsc{a}}(\xi _{a}^{(0)})\text{\textsc{d}}(\xi
_{a}^{(1)}),\text{ \ \ }\forall a\in \{1,...,\mathsf{N}\}.
\label{FarXYZI-Functional-eq}
\end{equation}

\begin{itemize}
\item[\textsf{I)}] The right $\overline{\mathcal{T}}^{\mathsf{(6VD)}}$%
-eigenstate corresponding to $\mathsf{t}_{\mathsf{6VD}}(\lambda )\in \Sigma
_{\overline{\mathcal{T}}^{\mathsf{(6VD)}}}$ is characterized by:%
\begin{equation}
|\mathsf{t}_{\mathsf{6VD}}\rangle =\sum_{h_{1},...,h_{\mathsf{N}%
}=0}^{1}\prod_{a=1}^{\mathsf{N}}Q_{\mathsf{t}}(\xi _{a}^{(h_{a})})\det_{%
\mathsf{N}}\Theta _{ij}^{\left( \text{\textbf{h}}\right) }\text{ }%
|h_{1},...,h_{\mathsf{N}}\rangle ,  \label{FarXYZeigenT-r-D}
\end{equation}%
up to an overall normalization, where the coefficients are characterized by:%
\begin{equation}
Q_{\mathsf{t}}(\xi _{a}^{(1)})/Q_{\mathsf{t}}(\xi _{a}^{(0)})=\mathsf{t}_{%
\text{$6VD$}}(\xi _{a}^{(0)})/\text{\textsc{d}}(\xi _{a}^{(1)}).
\label{FarXYZt-Q-relation}
\end{equation}

\item[\textsf{II)}] The left $\overline{\mathcal{T}}^{\mathsf{(6VD)}}$%
-eigenstate corresponding to $\mathsf{t}_{\mathsf{6VD}}(\lambda )\in \Sigma
_{\overline{\mathcal{T}}^{\mathsf{(6VD)}}}$ is characterized by:%
\begin{equation}
\langle \mathsf{t}_{\mathsf{6VD}}|=\sum_{h_{1},...,h_{\mathsf{N}%
}=0}^{1}\prod_{a=1}^{\mathsf{N}}\bar{Q}_{\mathsf{t}}(\xi
_{a}^{(h_{a})})\det_{\mathsf{N}}\Theta _{ij}^{\left( \text{\textbf{h}}%
\right) }\text{ }\langle h_{1},...,h_{\mathsf{N}}|,  \label{FarXYZeigenT-l-D}
\end{equation}%
up to an overall normalization, where the coefficients are characterized by:%
\begin{equation}
\bar{Q}_{\mathsf{t}}(\xi _{a}^{(1)})/\bar{Q}_{\mathsf{t}}(\xi _{a}^{(0)})=%
\mathsf{t}_{\mathsf{6VD}}(\xi _{a}^{(0)})/\text{\textsc{a}}(\xi _{a}^{(0)}).
\label{FarXYZt-Qbar-relation}
\end{equation}
\end{itemize}
\end{theorem}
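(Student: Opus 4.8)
The plan is to turn the functional spectral problem for $\overline{\mathcal{T}}^{\mathsf{(6VD)}}(\lambda|\tau)=\mathcal{B}(\lambda|\tau)+\mathcal{C}(\lambda|\tau)$ into a finite algebraic problem by working entirely in the SOV basis, and then to read off eigenvalues and eigenstates together. First I would pin down the functional class of an eigenvalue: since $\overline{\mathcal{T}}^{\mathsf{(6VD)}}$ acts on the finite-dimensional space $\mathbb{\bar{D}}_{\mathsf{(6VD)},\mathsf{N}}^{\mathcal{R}}$, the explicit SOV actions \eqref{FarXYZC-SOV_D-right}--\eqref{FarXYZB-SOV_D-right} show that any eigenvalue is, as a function of $\lambda$, an elliptic polynomial whose quasi-periodicity and number of zeros are fixed, hence determined by its values at the $\mathsf{N}$ nodes $\xi_1,\dots,\xi_{\mathsf{N}}$. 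This justifies the Lagrange-type interpolation \eqref{FarXYZset-t}, so that an eigenvalue is encoded by the finite data $\{\mathsf{t}_{\mathsf{6VD}}(\xi_a)\}_{a=1}^{\mathsf{N}}$.

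Next I would implement the separation. Writing a right eigenvector as $|\mathsf{t}_{\mathsf{6VD}}\rangle=\sum_{\mathbf{h}}\Psi_{\mathbf{h}}|h_1,\dots,h_{\mathsf{N}}\rangle$ and imposing the eigenvalue equation, I would evaluate it at the special points $\lambda=\xi_a^{(0)}$ and $\lambda=\xi_a^{(1)}$. At such a point the interpolation factors $\prod_{b\neq a}\theta(\lambda-\xi_b^{(h_b)})$ appearing in \eqref{FarXYZC-SOV_D-right}--\eqref{FarXYZB-SOV_D-right} vanish except for the single term attached to site $a$, so the operator equation collapses to a two-term relation in the discrete variable $h_a$ alone, the other $h_b$ remaining spectators. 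Because the shift operators $\mathsf{T}_a^{\pm}$ change only the index $h_a$ by $\pm 1$ while $\mathsf{D}(\lambda|\tau)$ stays pseudo-diagonal, these relations are exactly Baxter-type recursions, one per site, and they force the wavefunction to factorize over sites.

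I would then extract the factorized form. Using the SOV decomposition of the identity \eqref{FarXYZDecomp-Id}, the coefficient is $\Psi_{\mathbf{h}}=\det_{\mathsf{N}}\Theta_{ij}^{(\mathbf{h})}\,\langle h_1,\dots,h_{\mathsf{N}}|\mathsf{t}_{\mathsf{6VD}}\rangle$, so it suffices to show $\langle\mathbf{h}|\mathsf{t}_{\mathsf{6VD}}\rangle=\prod_a Q_{\mathsf{t}}(\xi_a^{(h_a)})$. The per-site recursion fixes precisely the ratio $Q_{\mathsf{t}}(\xi_a^{(1)})/Q_{\mathsf{t}}(\xi_a^{(0)})$, which I expect to come out as \eqref{FarXYZt-Q-relation}, yielding the eigenstate \eqref{FarXYZeigenT-r-D}; the left case \eqref{FarXYZeigenT-l-D} with \eqref{FarXYZt-Qbar-relation} is identical using \eqref{FarXYZC-SOV_D-left}--\eqref{FarXYZB-SOV_D-left}. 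Crucially, the two evaluations at $\xi_a^{(0)}$ and $\xi_a^{(1)}$ are not independent: multiplying the two recursion relations makes the spectator theta-ratios cancel and imposes $\mathsf{t}_{\mathsf{6VD}}(\xi_a^{(0)})\mathsf{t}_{\mathsf{6VD}}(\xi_a^{(1)})=\text{\textsc{a}}(\xi_a^{(0)})\text{\textsc{d}}(\xi_a^{(1)})$, where the right-hand side is exactly the value dictated by the quantum determinant relation $\det_q\mathsf{M}(\lambda)=\text{\textsc{a}}(\lambda)\text{\textsc{d}}(\lambda-\eta)$. This produces the quadratic system \eqref{FarXYZI-Functional-eq} and shows $\Sigma_{\overline{\mathcal{T}}^{\mathsf{(6VD)}}}$ is contained in its solution set.

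For the converse and for simplicity I would run the construction backwards: given any solution of \eqref{FarXYZI-Functional-eq}, define $Q_{\mathsf{t}}$ by \eqref{FarXYZt-Q-relation}, form \eqref{FarXYZeigenT-r-D}, and verify by the same node-evaluation that it is a genuine eigenvector with eigenvalue \eqref{FarXYZset-t}; nonvanishing follows because the pseudo-Vandermonde determinants $\det_{\mathsf{N}}\Theta_{ij}^{(\mathbf{h})}$ in \eqref{FarXYZR-Pseudo-Vendermonde} are nonzero under \eqref{FarXYZE-SOV}. Simplicity is then a uniqueness-and-counting argument: the factorized wavefunction, hence the whole eigenstate, is determined up to scale by the ratios \eqref{FarXYZt-Q-relation}, so distinct solutions give linearly independent eigenstates, and the $\mathsf{N}$ quadratic equations in the $\mathsf{N}$ unknowns $\{\mathsf{t}_{\mathsf{6VD}}(\xi_a)\}$ yield $2^{\mathsf{N}}$ solutions, matching $\dim\mathbb{\bar{D}}_{\mathsf{(6VD)},\mathsf{N}}^{\mathcal{R}}=2^{\mathsf{N}}$. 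The main obstacle I anticipate is the combination of the first step with this counting: fixing the exact elliptic-polynomial class so that the interpolation \eqref{FarXYZset-t} is truly exact (correct quasi-periods, no hidden degrees of freedom from the dynamical shifts $\mathsf{T}_{\tau}^{\pm}$), and then confirming that the quadratic system has exactly $2^{\mathsf{N}}$ solutions with no coincidences, so that completeness and simplicity hold simultaneously rather than merely an inclusion.
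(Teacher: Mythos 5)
Your proposal is, in its essentials, the paper's own proof: expand the eigenstate on the opposite SOV basis, evaluate the eigenvalue equation at the nodes $\xi_a^{(h_a)}$ where the SOV action of $\mathcal{B}+\mathcal{C}$ collapses to a single shift term, obtain the per-site two-term Baxter recursions \eqref{FarXYZSOVBax1}, multiply the $h_a=0$ and $h_a=1$ relations (using \textsc{a}$(\xi_a^{(1)})=$ \textsc{d}$(\xi_a^{(0)})=0$) to get the quadratic system \eqref{FarXYZI-Functional-eq}, and read the factorized wave function off the ratio relations --- this is exactly the paper's homogeneous system \eqref{FarXYZhomo-system} and its rank-one analysis; the only cosmetic difference is that the paper obtains the interpolation form \eqref{FarXYZset-t} by computing $\left\langle \mathsf{t}_{\mathsf{6VD}}\right\vert \overline{\mathcal{T}}^{\mathsf{(6VD)}}(\lambda |\tau )|\text{\textbf{0}}\rangle$ rather than from quasi-periodicity considerations.

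One correction to your last paragraph: the ``counting'' half of your simplicity argument is neither needed for the theorem nor actually available. Simplicity already follows from the uniqueness half you state: for a fixed eigenvalue, since \textsc{a}$(\xi_n^{(0)})\neq 0$ and \textsc{d}$(\xi_n^{(1)})\neq 0$ under \eqref{FarXYZE-SOV}, each $2\times 2$ block in \eqref{FarXYZhomo-system} has rank one, so the wave function --- hence any eigenvector --- is fixed up to scale, and every eigenspace is one-dimensional. Likewise, the coincidence of $\Sigma_{\overline{\mathcal{T}}^{\mathsf{(6VD)}}}$ with the solution set of \eqref{FarXYZI-Functional-eq} in the class \eqref{FarXYZset-t} follows from the two inclusions you already have (node evaluation for necessity; the reverse construction together with the elliptic-interpolation argument for sufficiency). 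No claim that the quadratic system has exactly $2^{\mathsf{N}}$ distinct solutions enters the theorem, and the paper does not prove one: that count, and hence diagonalizability, is only checked numerically for $\mathsf{N}=1,3$ in the appendix, which is why the completeness relation \eqref{FarXYZId-decomp-T-eigenstates} is stated conditionally. So what you flag as ``the main obstacle'' is not an obstacle to this statement at all.
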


\begin{proof}
Let $\left\langle \mathsf{t}_{\mathsf{6VD}}\right\vert $ be a $\overline{%
\mathcal{T}}^{\mathsf{(6VD)}}$-eigenstate corresponding to the ${\overline{%
\mathcal{T}}^{\mathsf{(6VD)}}}$-eigenvalue $\mathsf{t}_{\mathsf{6VD}%
}(\lambda )$, then the coefficients (\textit{wave-functions}): 
\begin{equation}
\Psi _{t}(\text{\textbf{h}})\equiv \left\langle \mathsf{t}_{\mathsf{6VD}%
}\right\vert h_{1},...,h_{\mathsf{N}}\rangle
\end{equation}%
of $\left\langle \mathsf{t}_{\mathsf{6VD}}\right\vert $ in the right
SOV-basis satisfy the equations: 
\begin{equation}
\mathsf{t}_{\mathsf{6VD}}(\xi _{n}^{(h_{n})})\Psi _{t}(\text{\textbf{h}}%
)\,=\,\text{\textsc{a}}(\xi _{n}^{(h_{n})})\Psi _{t}(\mathsf{T}_{n}^{+}(%
\text{\textbf{h}}))+\text{\textsc{d}}(\xi _{n}^{(h_{n})})\Psi _{t}(\mathsf{T}%
_{n}^{-}(\text{\textbf{h}})),  \label{FarXYZSOVBax1}
\end{equation}%
for any$\,n\in \{1,...,\mathsf{N}\}$ and \textbf{h}$\in \{0,1\}^{\mathsf{N}}$%
, where we have denoted:%
\begin{equation}
\mathsf{T}_{n}^{\pm }(\text{\textbf{h}})\equiv (h_{1},\dots ,h_{n}\pm
1,\dots ,h_{\mathsf{N}}).
\end{equation}%
These equations are obtained by computing the matrix elements:%
\begin{equation}
\left\langle \mathsf{t}_{\mathsf{6VD}}\right\vert \overline{\mathcal{T}}^{%
\mathsf{(6VD)}}(\xi _{n}^{(h_{n})}|\tau )|h_{1},...,h_{\mathsf{N}}\rangle .
\end{equation}%
In particular, acting with $\overline{\mathcal{T}}^{\mathsf{(6VD)}}(\xi
_{n}^{(h_{n})}|\tau )$ on $\left\langle \mathsf{t}_{\mathsf{6VD}}\right\vert 
$ the l.h.s. of $\left( \ref{FarXYZSOVBax1}\right) $ is reproduced while
acting on $|h_{1},...,h_{\mathsf{N}}\rangle $ by using the SOV
representation of $\overline{\mathcal{T}}^{\mathsf{(6VD)}}(\xi
_{n}^{(h_{n})}|\tau )$ the r.h.s. of $\left( \ref{FarXYZSOVBax1}\right) $ is
reproduced. Moreover, the representation $\left( \ref{FarXYZset-t}\right) $
for the $\overline{\mathcal{T}}^{\mathsf{(6VD)}}$-eigenvalue functions $%
\mathsf{t}_{\mathsf{6VD}}(\lambda )$ follows by computing the matrix element:%
\begin{equation}
\left\langle \mathsf{t}_{\mathsf{6VD}}\right\vert \overline{\mathcal{T}}^{%
\mathsf{(6VD)}}(\lambda |\tau )|\text{\textbf{0}}\rangle
\end{equation}%
and by using $\left( \ref{FarXYZSOVBax1}\right) $ to rewrite the r.h.s. in
the desired form.

Then in the SOV representations the spectral problem for $\overline{\mathcal{%
T}}^{\mathsf{(6VD)}}(\lambda |\tau )$ is reduced to a discrete system of $2^{%
\mathsf{N}}$ Baxter-like equations $\left( \ref{FarXYZSOVBax1}\right) $ in
the class of function of the form $\left( \ref{FarXYZset-t}\right) $. Taking
into account the identities:%
\begin{equation}
\text{\textsc{a}}(\xi _{n}^{(1)})=\text{\textsc{d}}(\xi _{n}^{(0)})=0,
\end{equation}%
this system coincides with a system of homogeneous equations: 
\begin{equation}
\left( 
\begin{array}{cc}
\mathsf{t}_{\mathsf{6VD}}(\xi _{n}^{(0)}) & -\text{\textsc{a}}(\xi
_{n}^{(0)}) \\ 
-\text{\textsc{d}}(\xi _{n}^{(1)}) & \mathsf{t}_{\mathsf{6VD}}(\xi
_{n}^{(1)})%
\end{array}%
\right) \left( 
\begin{array}{c}
\Psi _{t}(h_{1},...,h_{n}=0,...,h_{1}) \\ 
\Psi _{t}(h_{1},...,h_{n}=1,...,h_{1})%
\end{array}%
\right) =\left( 
\begin{array}{c}
0 \\ 
0%
\end{array}%
\right) ,  \label{FarXYZhomo-system}
\end{equation}%
for any$\,n\in \{1,...,\mathsf{N}\}$ with $h_{m\neq n}\in \{0,1\}$. The
condition $\mathsf{t}_{\mathsf{6VD}}(\lambda )\in \Sigma _{{\overline{%
\mathcal{T}}^{\mathsf{(6VD)}}}}$ is then equivalent to the requirement that
the determinants of the $2\times 2$ matrices in $\left( \ref%
{FarXYZhomo-system}\right) $ must be zero for any$\,n\in \{1,...,\mathsf{N}%
\} $, i.e. the equation (\ref{FarXYZI-Functional-eq}). On the other hand
being: 
\begin{equation}
\text{\textsc{a}}(\xi _{n}^{(0)})\neq 0\text{\ \ and \ \textsc{d}}(\xi
_{n}^{(1)})\neq 0,  \label{FarXYZRank1}
\end{equation}%
the rank of the matrices in $\left( \ref{FarXYZhomo-system}\right) $ is 1
and then up to an overall normalization the solution is unique:%
\begin{equation}
\frac{\Psi _{t}(h_{1},...,h_{n}=1,...,h_{1})}{\Psi
_{t}(h_{1},...,h_{n}=0,...,h_{1})}=\frac{\mathsf{t}_{\mathsf{6VD}}(\xi
_{n}^{(0)})}{\text{\textsc{a}}(\xi _{n}^{(0)})},
\end{equation}%
for any$\,n\in \{1,...,\mathsf{N}\}$ with $h_{m\neq n}\in \{0,1\}$. This
implies that given a $\mathsf{t}_{\mathsf{6VD}}(\lambda )\in \Sigma _{{%
\overline{\mathcal{T}}^{\mathsf{(6VD)}}}}$ there exist (up to normalization)
one and only one corresponding $\overline{\mathcal{T}}^{\mathsf{(6VD)}}$%
-eigenstate $\left\langle \mathsf{t}_{\mathsf{6VD}}\right\vert $ with
coefficients which have the factorized form given in $\left( \ref%
{FarXYZeigenT-l-D}\right) $-$\left( \ref{FarXYZt-Qbar-relation}\right) $ and
then the $\overline{\mathcal{T}}^{\mathsf{(6VD)}}$-spectrum is simple. The
proof for the right eigenstates is given in a similar way.
\end{proof}

Let us remark that the previous theorem completely characterize the spectrum
of the transfer matrix $\overline{\mathcal{T}}^{\mathsf{(6VD)}}(\lambda
|\tau )$ in $\mathbb{\bar{D}}_{\mathsf{(6VD)},\mathsf{N}}^{\mathcal{L}/%
\mathcal{R}}$. However, a reformulation of the SOV characterization of the $%
\overline{\mathcal{T}}^{\mathsf{(6VD)}}$-spectrum by functional equations
can be important for practical aims. One standard way to accomplish this
result is by the construction of a Baxter Q-operator whose functional
equation reduces to the finite system of Baxter-like equations $\left(\ref%
{FarXYZSOVBax1}\right) $ when computed in the eigenvalues of the quantum separate variables, i.e. the operators zero of $\mathsf{D}(\la|\tau)$. This construction is currently under analysis and it can be
developed along the same lines presented in \cite{FarXYZBBOY95} for the
antiperiodic XXZ spin-1/2 chain for general values of the coupling $\eta $.
For the elliptic roots of unit case, we can construct the functional
equation directly by using cyclic representations for the operator $\tau $;
these interesting issues will be developed in a forthcoming paper. Finally, let us remark that once the Q-operator is constructed its eigenvalues can be used to completely characterize not only the eigenvalues but also the eigenstates of the antiperiodic dynamical 6-vertex transfer matrix in the SOV framework. Indeed, it is enough to introduce in formula (2.105) for the $Q_{\mathsf{t}}(\xi_{a}^{(h_{a})})$ the eigenvalue of the Q-operator computed in $\xi_{a}^{(h_{a})}$ to get the corresponding simultaneous eigenstate.

\subsection{\label{FarXYZseparate-states}Action of left separate states on
right separate states}

As for the others quantum integrable models analyzed by SOV in \cite%
{FarXYZGMN12-SG}-\cite{FarXYZN12-2}, a special role is played by the left
and right \textit{separate states} in the SOV representations. These are
states which have factorized coefficients in the SOV representations similar
to those of the transfer matrix eigenstates; more in detail we say that a
covector $\langle \alpha |\in \mathbb{\bar{D}}_{\mathsf{(6VD)},\mathsf{N}}^{%
\mathcal{L}}$ and a vector $|\beta \rangle \in \mathbb{\bar{D}}_{\mathsf{%
(6VD)},\mathsf{N}}^{\mathcal{R}}$ are \textit{separate states} if they admit
the following SOV-decompositions:%
\begin{align}
\langle \alpha |& =\sum_{h_{1},...,h_{\mathsf{N}}=0}^{1}\prod_{a=1}^{\mathsf{%
N}}\alpha _{a}(\xi _{a}^{(h_{a})})\det_{\mathsf{N}}\Theta _{ij}^{\left( 
\text{\textbf{h}}\right) }\langle h_{1},...,h_{\mathsf{N}}|,
\label{FarXYZFact-left-SOV} \\
|\beta \rangle & =\sum_{h_{1},...,h_{\mathsf{N}}=0}^{1}\prod_{a=1}^{\mathsf{N%
}}\beta _{a}(\xi _{a}^{(h_{a})})\det_{\mathsf{N}}\Theta _{ij}^{\left( \text{%
\textbf{h}}\right) }|h_{1},...,h_{\mathsf{N}}\rangle .
\label{FarXYZFact-right-SOV}
\end{align}%
The main interest toward these states is the simple determinant form for the
action of left separate states on the right ones characterized by the
following:

\begin{proposition}
\label{FarXYZOrthogonality-6VD}The left $\langle \alpha|$ and the right $%
|\beta\rangle$ separate states satisfy the identities: 
\begin{equation}
\langle \alpha |\beta \rangle =\det_{\mathsf{N}}||\mathcal{F}_{a,b}^{\left(
\alpha ,\beta \right) }||\text{ \ \ with \ }\mathcal{F}_{a,b}^{\left( \alpha
,\beta \right) }\equiv \sum_{h=0}^{1}\alpha _{a}(\xi _{a}^{(h)})\beta
_{a}(\xi _{a}^{(h)})\vartheta _{b-1}\left( \bar{\xi}_{a}^{(h)}\right) .
\end{equation}%
Then the matrix elements of the left and right $\overline{\mathcal{T}}^{%
\mathsf{(6VD)}}$-eigenstates corresponding to generic eigenvalues:%
\begin{equation}
\mathsf{t}_{\mathsf{6VD}}(\lambda )\in \Sigma _{\overline{\mathcal{T}}^{%
\mathsf{(6VD)}}}\text{, \ \ \ }\mathsf{t}_{\mathsf{6VD}}^{\prime }(\lambda
)\in \Sigma _{\overline{\mathcal{T}}^{\mathsf{(6VD)}}},
\end{equation}%
have the following simple form:%
\begin{equation}
\langle \mathsf{t}_{\mathsf{6VD}}|\mathsf{t}_{\mathsf{6VD}}^{\prime }\rangle
=\delta _{\mathsf{t},\mathsf{t}^{\prime }}\det_{\mathsf{N}}||\mathcal{F}%
_{a,b}^{(\mathsf{t},\mathsf{t})}||,\text{ \ }\mathcal{F}_{a,b}^{(\mathsf{t},%
\mathsf{t})}\equiv \sum_{l=0}^{1}\bar{Q}_{\mathsf{t}}(\xi _{a}^{(l)})Q_{%
\mathsf{t}}(\xi _{a}^{(l)})\vartheta _{b-1}\left( \bar{\xi}_{a}^{(l)}\right)
,  \label{FarXYZ6vd-scalar-p.}
\end{equation}%
where we have used the notation $\delta _{\mathsf{t},\mathsf{t}^{\prime
}}=\{0$\, for $\mathsf{t}_{\mathsf{6VD}}(\lambda )\neq \mathsf{t}_{\mathsf{%
6VD}}^{\prime }(\lambda )\in \Sigma _{\overline{\mathcal{T}}^{\mathsf{(6VD)}%
}},$ $1$\, for $\mathsf{t}_{\mathsf{6VD}}(\lambda )=\mathsf{t}_{\mathsf{6VD}%
}^{\prime }(\lambda )\in \Sigma _{\overline{\mathcal{T}}^{\mathsf{(6VD)}}}\}$%
.
\end{proposition}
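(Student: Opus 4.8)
The plan is to collapse $\langle\alpha|\beta\rangle$ to a single $\mathsf{N}\times\mathsf{N}$ determinant by feeding the SOV expansions (\ref{FarXYZFact-left-SOV})--(\ref{FarXYZFact-right-SOV}) into the discrete Sklyanin measure (\ref{FarXYZM_jj}) and then using multilinearity of $\det_{\mathsf{N}}\Theta^{(\text{\textbf{h}})}$ in its columns. First I would substitute the two decompositions, producing a double sum over the $\mathsf{N}$-tuples $\text{\textbf{h}}$ and $\text{\textbf{k}}$ weighted by $\langle h_1,\dots,h_{\mathsf{N}}|k_1,\dots,k_{\mathsf{N}}\rangle$. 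By (\ref{FarXYZM_jj}) this factor forces $\text{\textbf{h}}=\text{\textbf{k}}$ and contributes $1/\det_{\mathsf{N}}\Theta^{(\text{\textbf{h}})}$, so the double sum collapses to a single sum over $\text{\textbf{h}}$ in which the two determinant prefactors of the separate states combine with the inverse determinant of the measure to leave exactly one factor $\det_{\mathsf{N}}\Theta^{(\text{\textbf{h}})}$. This telescoping is precisely why the $\det_{\mathsf{N}}\Theta^{(\text{\textbf{h}})}$ normalization was built into (\ref{FarXYZFact-left-SOV})--(\ref{FarXYZFact-right-SOV}), and it is the crucial simplification.

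Next I would carry out the determinant manipulation. Since $\Theta^{(\text{\textbf{h}})}_{ij}=\vartheta_{i-1}(\bar\xi_j^{(h_j)})$, the $j$-th column of $\Theta^{(\text{\textbf{h}})}$ depends only on $h_j$, and the scalar weight $\prod_{a=1}^{\mathsf{N}}\alpha_a(\xi_a^{(h_a)})\beta_a(\xi_a^{(h_a)})$ likewise factorizes across columns. Expanding $\det_{\mathsf{N}}\Theta^{(\text{\textbf{h}})}$ as a signed sum over permutations $\sigma$ of $\{1,\dots,\mathsf{N}\}$, interchanging the finite sums over $\sigma$ and $\text{\textbf{h}}$, and distributing the sum over $\text{\textbf{h}}$ into a product of independent sums $\sum_{h_a=0}^{1}$, each column factor becomes exactly $\mathcal{F}_{a,\sigma(a)}^{(\alpha,\beta)}=\sum_{h=0}^{1}\alpha_a(\xi_a^{(h)})\beta_a(\xi_a^{(h)})\vartheta_{\sigma(a)-1}(\bar\xi_a^{(h)})$. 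Resumming over $\sigma$ with its signature then reconstitutes $\det_{\mathsf{N}}||\mathcal{F}^{(\alpha,\beta)}||$, which is the asserted identity.

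For the eigenstate statement I would simply specialize $\alpha_a=\bar Q_{\mathsf{t}}$ and $\beta_a=Q_{\mathsf{t}'}$, as dictated by (\ref{FarXYZeigenT-r-D}) and (\ref{FarXYZeigenT-l-D}), which yields the determinant with entries (\ref{FarXYZ6vd-scalar-p.}) on the diagonal $\mathsf{t}=\mathsf{t}'$. The Kronecker factor $\delta_{\mathsf{t},\mathsf{t}'}$ for distinct eigenvalues follows from the standard commuting-family argument: sandwiching $\overline{\mathcal{T}}^{\mathsf{(6VD)}}(\lambda|\tau)$ between the left eigenstate $\langle\mathsf{t}_{\mathsf{6VD}}|$ and the right eigenstate $|\mathsf{t}_{\mathsf{6VD}}'\rangle$ gives $(\mathsf{t}_{\mathsf{6VD}}(\lambda)-\mathsf{t}_{\mathsf{6VD}}'(\lambda))\langle\mathsf{t}_{\mathsf{6VD}}|\mathsf{t}_{\mathsf{6VD}}'\rangle=0$, so the product vanishes whenever the two eigenvalue functions differ at some $\lambda$.

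The hard part will not be conceptual but bookkeeping: the one step that must be checked carefully is that after the permutation expansion the sum over the full tuple $\text{\textbf{h}}$ genuinely decouples column by column, so that each factor collapses to the single entry $\mathcal{F}_{a,\sigma(a)}^{(\alpha,\beta)}$ with the correct second index $b=\sigma(a)$. This column-wise decoupling is exactly what allows the whole expression to recombine into a single $\mathsf{N}\times\mathsf{N}$ determinant rather than a more complicated object, and once (\ref{FarXYZM_jj}) is in hand the remaining computation is essentially mechanical.
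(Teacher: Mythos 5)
Your proposal is correct and follows essentially the same route as the paper: collapse the double sum via the orthogonality relation (\ref{FarXYZM_jj}), recombine the single sum into $\det_{\mathsf{N}}||\mathcal{F}^{(\alpha,\beta)}||$ by multilinearity of the determinant (your permutation-expansion argument is just this multilinearity spelled out, correctly identified as acting column-by-column since the $h_j$ dependence of $\Theta^{(\text{\textbf{h}})}_{ij}$ sits in column $j$), and obtain the $\delta_{\mathsf{t},\mathsf{t}'}$ factor from the standard sandwich identity $\mathsf{t}_{\mathsf{6VD}}(\lambda)\langle\mathsf{t}_{\mathsf{6VD}}|\mathsf{t}_{\mathsf{6VD}}'\rangle=\mathsf{t}_{\mathsf{6VD}}'(\lambda)\langle\mathsf{t}_{\mathsf{6VD}}|\mathsf{t}_{\mathsf{6VD}}'\rangle$, finally specializing the coefficients to $\bar{Q}_{\mathsf{t}},Q_{\mathsf{t}}$ because eigenstates are separate states.
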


\begin{proof}
From the SOV-decomposition, we have:%
\begin{equation}
\langle \alpha |\beta \rangle =\sum_{h_{1},...,h_{\mathsf{N}}=0}^{1}\det_{%
\mathsf{N}}\Theta _{ij}^{\left( \text{\textbf{h}}\right) }\prod_{a=1}^{%
\mathsf{N}}\alpha _{a}(\xi _{a}^{(h_{a})})\beta _{a}(\xi _{a}^{(h_{a})}),
\end{equation}%
from this formula by using the multilinearity of the determinant w.r.t. the
rows we prove the first identity in the proposition. The presence of the
delta in (\ref{FarXYZ6vd-scalar-p.}) simply follows from the identities:%
\begin{equation}
\mathsf{t}_{\mathsf{6VD}}(\lambda )\langle \mathsf{t}_{\mathsf{6VD}}|\mathsf{%
t}_{\mathsf{6VD}}^{\prime }\rangle =\langle \mathsf{t}_{\mathsf{6VD}}|%
\overline{\mathcal{T}}^{\mathsf{(6VD)}}(\lambda |\tau )|\mathsf{t}_{\mathsf{%
6VD}}^{\prime }\rangle =\mathsf{t}_{\mathsf{6VD}}^{\prime }(\lambda )\langle 
\mathsf{t}_{\mathsf{6VD}}|\mathsf{t}_{\mathsf{6VD}}^{\prime }\rangle
\end{equation}%
where the l.h.s. is obtained acting on the left state with $\overline{%
\mathcal{T}}^{\mathsf{(6VD)}}(\lambda |\tau )$ and the r.h.s. is obtained
acting on the right state with $\overline{\mathcal{T}}^{\mathsf{(6VD)}%
}(\lambda |\tau )$. Indeed for $\mathsf{t}_{\mathsf{6VD}}(\lambda )\neq 
\mathsf{t}_{\mathsf{6VD}}^{\prime }(\lambda )$ the identity implies:%
\begin{equation}
\langle \mathsf{t}_{\mathsf{6VD}}|\mathsf{t}_{\mathsf{6VD}}^{\prime }\rangle
=0.
\end{equation}%
Finally, for $\mathsf{t}_{\mathsf{6VD}}(\lambda )=\mathsf{t}_{\mathsf{6VD}%
}^{\prime }(\lambda ) $ the form of the matrix elements just follow
recalling that the eigenstates of the transfer matrix are indeed separate
states.
\end{proof}

\subsection{Decomposition of the identity in left and right separate basis}

The results of the previous section allow us to write the decomposition of
the identity in left and right basis of separate states. In order to define
such basis let us introduce the following natural isomorphism between the
sets $\{0,1\}^{\mathsf{N}}$ and $\{1,...,2^{\mathsf{N}}\}$:%
\begin{equation}
\varkappa :\text{\textbf{h}}\equiv \{h_{1},...,h_{\mathsf{N}}\}\in \{0,1\}^{%
\mathsf{N}}\rightarrow \varkappa \left( \text{\textbf{h}}\right) \equiv
1+\sum_{a=1}^{\mathsf{N}}2^{(a-1)}h_{a}\in \{1,...,2^{\mathsf{N}}\},
\label{FarXYZADMFKcorrisp}
\end{equation}%
and let us denote with $\varkappa _{a}^{-1}\left( i\right) \in \{0,1\}$ the
entry $a\in \{1,...,\mathsf{N}\}$ in the $\mathsf{N}$-tuple $\varkappa
^{-1}\left( i\right) $ associated by $\varkappa ^{-1}$ to any integer $i\in
\{1,...,2^{\mathsf{N}}\}$. Then, under the conditions:%
\begin{eqnarray}
\det_{2^{\mathsf{N}}}||M_{i,j}^{\left( \alpha \right) }|| &\neq &0,\text{ \
\ \ }M_{i,j}^{\left( \alpha \right) }\equiv \prod_{a=1}^{\mathsf{N}}\alpha
_{a}^{(j)}(\xi _{a}^{(\varkappa _{a}^{-1}\left( i\right) )})\text{ }\forall
i,j\in \{1,...,2^{\mathsf{N}}\}, \\
\det_{2^{\mathsf{N}}}||M_{i,j}^{\left( \beta \right) }|| &\neq &0,\text{ \ \
\ }M_{i,j}^{\left( \beta \right) }\equiv \prod_{a=1}^{\mathsf{N}}\beta
_{a}^{(j)}(\xi _{a}^{(\varkappa _{a}^{-1}\left( i\right) )})\text{ }\forall
i,j\in \{1,...,2^{\mathsf{N}}\},
\end{eqnarray}%
the sets of covectors $\langle \alpha _{j}|$ and vector $|\beta _{j}\rangle $
defined by:%
\begin{align}
\langle \alpha _{j}|& =\sum_{h_{1},...,h_{\mathsf{N}}=0}^{1}\prod_{a=1}^{%
\mathsf{N}}\alpha _{a}^{\left( j\right) }(\xi _{a}^{(h_{a})})\det_{\mathsf{N}%
}\Theta _{ij}^{\left( \text{\textbf{h}}\right) }\langle h_{1},...,h_{\mathsf{%
N}}|\text{ \ \ }\forall j\in \{1,...,2^{\mathsf{N}}\},
\label{FarXYZFact-left-SOV} \\
|\beta _{j}\rangle & =\sum_{h_{1},...,h_{\mathsf{N}}=0}^{1}\prod_{a=1}^{%
\mathsf{N}}\beta _{a}^{\left( j\right) }(\xi _{a}^{(h_{a})})\det_{\mathsf{N}%
}\Theta _{ij}^{\left( \text{\textbf{h}}\right) }|h_{1},...,h_{\mathsf{N}%
}\rangle \text{ \ \ }\forall j\in \{1,...,2^{\mathsf{N}}\},
\label{FarXYZFact-right-SOV}
\end{align}%
generate separate basis of $\mathbb{\bar{D}}_{\mathsf{(6VD)},\mathsf{N}}^{%
\mathcal{L}} $ and $\mathbb{\bar{D}}_{\mathsf{(6VD)},\mathsf{N}}^{\mathcal{R}%
}$, respectively. Moreover, defined:%
\begin{equation}
\text{\textbf{N}}_{\text{\textbf{h}}}\equiv \{\text{\textbf{k}}\in \{0,1\}^{%
\mathsf{N}}:\det_{\mathsf{N}}||\mathcal{F}_{a,b}^{\left( \alpha _{\varkappa
\left( \text{\textbf{h}}\right) },\beta _{\varkappa \left( \text{\textbf{k}}%
\right) }\right) }||\neq 0\},
\end{equation}%
then the following decomposition of the identity is implied on these
separate basis:%
\begin{equation}
\mathbb{I=}\sum_{\text{\textbf{h}}\in \{0,1\}^{\mathsf{N}}}\sum_{\text{%
\textbf{k}}\in \text{\textbf{N}}_{\text{\textbf{h}}}}\left( \det_{\mathsf{N}%
}||\mathcal{F}_{a,b}^{\left( \alpha _{\varkappa \left( \text{\textbf{h}}%
\right) },\beta _{\varkappa \left( \text{\textbf{k}}\right) }\right)
}||\right) ^{-1}|\beta _{\varkappa \left( \text{\textbf{k}}\right) }\rangle
\langle \alpha _{\varkappa \left( \text{\textbf{h}}\right) }|\text{,}
\label{FarXYZId-decomp-separate}
\end{equation}%
which reads:%
\begin{equation}  \label{FarXYZId-decomp-T-eigenstates}
\mathbb{I=}\sum_{\mathsf{t}(\lambda )\in \Sigma _{\overline{\mathcal{T}}^{%
\mathsf{(6VD)}}}}\left( \det_{\mathsf{N}}||\mathcal{F}_{a,b}^{(\mathsf{t},%
\mathsf{t})}||\right) ^{-1}|\mathsf{t}_{\mathsf{6VD}}\rangle \langle \mathsf{%
t}_{\mathsf{6VD}}|,
\end{equation}
for the representations for which the antiperiodic transfer matrix $%
\overline{\mathcal{T}}^{\mathsf{(6VD)}}(\lambda |\tau )$ is proven to be
diagonalizable.

\section{On the periodic 8-vertex spectrum and connection with SOV}

In this section we will analyze the connection between the spectral problem
of the periodic 8-vertex transfer matrix on chains with an odd number of
quantum sites and the one of the antiperiodic dynamical 6-vertex transfer
matrix. The Baxter's gauge transformations are used together with the
functional characterization of the 8-vertex transfer matrix to get central
information on the spectrum (eigenvalues and eigenstates) of this model by
our SOV results.

\subsection{The 8-vertex model}

Let us recall the characterization in terms of QISM of the XYZ spin-1/2
quantum chain. The 8-vertex R-matrix reads:%
\begin{equation}
R_{0a}^{\mathsf{(8V)}}(\lambda )=\left( 
\begin{array}{cccc}
\text{a}(\lambda ) & 0 & 0 & \text{d}(\lambda ) \\ 
0 & \text{b}(\lambda ) & \text{c}(\lambda ) & 0 \\ 
0 & \text{c}(\lambda ) & \text{b}(\lambda ) & 0 \\ 
\text{d}(\lambda ) & 0 & 0 & \text{a}(\lambda )%
\end{array}%
\right) ,
\end{equation}%
where: 
\begin{align}
\text{a}(\lambda )& =\frac{2\theta _{4}(\eta |2\omega )\theta _{1}(\lambda
+\eta |2\omega )\theta _{4}(\lambda |2\omega )}{\theta _{2}(0|\omega )\theta
_{4}(0|2\omega )},\quad \text{b}(\lambda )=\frac{2\theta _{4}(\eta |2\omega
)\theta _{1}(\lambda |2\omega )\theta _{4}(\lambda +\eta |2\omega )}{\theta
_{2}(0|\omega )\theta _{4}(0|2\omega )}, \\
\text{c}(\lambda )& =\frac{2\theta _{1}(\eta |2\omega )\theta _{4}(\lambda
|2\omega )\theta _{4}(\lambda +\eta |2\omega )}{\theta _{2}(0|\omega )\theta
_{4}(0|2\omega )},\quad \text{d}(\lambda )=\frac{2\theta _{1}(\eta |2\omega
)\theta _{1}(\lambda +\eta |2\omega )\theta _{1}(\lambda |2\omega )}{\theta
_{2}(0|\omega )\theta _{4}(0|2\omega )},
\end{align}%
is solution of the Yang-Baxter equation:%
\begin{equation}
R_{12}^{\mathsf{(8V)}}(\lambda _{12})R_{1a}^{\mathsf{(8V)}}(\lambda
_{1})R_{2a}^{\mathsf{(8V)}}(\lambda _{2})=R_{2a}^{\mathsf{(8V)}}(\lambda
_{2})R_{1a}^{\mathsf{(8V)}}(\lambda _{1})R_{12}^{\mathsf{(8V)}}(\lambda
_{12}).
\end{equation}%
Then the monodromy matrix of the spin-1/2 representations is defined by:%
\begin{equation}
\mathsf{M}_{0}^{\mathsf{(8V)}}(\lambda )\equiv R_{0\mathsf{N}}^{\mathsf{(8V)}%
}(\lambda -\xi _{\mathsf{N}})\cdots R_{01}^{\mathsf{(8V)}}(\lambda -\xi
_{1})\equiv \left( 
\begin{array}{cc}
\mathsf{A}^{\mathsf{(8V)}}(\lambda ) & \mathsf{B}^{\mathsf{(8V)}}(\lambda )
\\ 
\mathsf{C}^{\mathsf{(8V)}}(\lambda ) & \mathsf{D}^{\mathsf{(8V)}}(\lambda )%
\end{array}%
\right) ,
\end{equation}%
with the parameters $\xi _{a}$ which are the inhomogeneities. The monodromy
matrix $\mathsf{M}_{0}^{\mathsf{(8V)}}(\lambda )$ is itself solution of the
Yang-Baxter equation:%
\begin{equation}
R_{12}^{\mathsf{(8V)}}(\lambda _{12})\mathsf{M}_{1}^{\mathsf{(8V)}}(\lambda
_{1})\mathsf{M}_{2}^{\mathsf{(8V)}}(\lambda _{2})=\mathsf{M}_{2}^{\mathsf{%
(8V)}}(\lambda _{2})\mathsf{M}_{1}^{\mathsf{(8V)}}(\lambda _{1})R_{12}^{%
\mathsf{(8V)}}(\lambda _{12}),  \label{8v-YB-algebra}
\end{equation}%
and then the corresponding transfer matrix:%
\begin{equation}
\mathsf{T}^{\mathsf{(8V)}}(\lambda )=tr_{0}\mathsf{M}_{0}^{\mathsf{(8V)}%
}(\lambda )
\end{equation}%
defines a one parameter family of commuting operators; the Hamiltonian of
the XYZ spin-1/2 quantum chain is obtained in the homogeneous limit by:%
\begin{equation}
H_{XYZ}=2\sinh \eta \,\left. \frac{\partial \ln \mathsf{T}^{\mathsf{(8V)}%
}(\lambda )}{\partial \lambda }\right\vert _{\lambda =0,\xi _{n}=0}-\mathsf{N%
}\cosh \eta .  \label{FarXYZlogderiv}
\end{equation}

\subsection{Elementary properties of the periodic 8-vertex transfer matrix}

Let us describe some elementary properties of the periodic 8-vertex transfer
matrix which allow a first characterization of the spectrum evidencing its
connection to the antiperiodic dynamical 6-vertex spectrum in the case of
odd chains.

\begin{lemma}
In the 8-vertex Yang-Baxter algebra, we can introduce the following central
quantum determinant:%
\begin{align}
\det{}_{q}\mathsf{M}^{\mathsf{(8V)}}(\lambda )& \equiv \left( \mathsf{A}^{%
\mathsf{(8V)}}(\lambda )\mathsf{D}^{\mathsf{(8V)}}(\lambda -\eta )-\mathsf{B}%
^{\mathsf{(8V)}}(\lambda )\mathsf{C}^{\mathsf{(8V)}}(\lambda -\eta )\right)
\\
& =\left( \mathsf{D}^{\mathsf{(8V)}}(\lambda )\mathsf{A}^{\mathsf{(8V)}%
}(\lambda -\eta )-\mathsf{C}^{\mathsf{(8V)}}(\lambda )\mathsf{B}^{\mathsf{%
(8V)}}(\lambda -\eta )\right) \\
& =\text{\textsc{a}}(\lambda )\text{\textsc{d}}(\lambda -\eta ),
\end{align}%
and the following inversion formula holds:%
\begin{equation}
\left[ \mathsf{M}_{0}^{\mathsf{(8V)}}(\lambda )\right] ^{-1}=\frac{\sigma
_{0}^{y}\left[ \mathsf{M}_{0}^{\mathsf{(8V)}}(\lambda -\eta )\right]
^{t_{0}}\sigma _{0}^{y}}{\text{\textsc{a}}(\lambda )\text{\textsc{d}}%
(\lambda -\eta )}.  \label{Inv-8v-M}
\end{equation}
\end{lemma}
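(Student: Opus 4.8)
The plan is to follow the proof of the dynamical-case Proposition above almost verbatim, replacing the local dynamical $R$-matrix by $R^{\mathsf{(8V)}}$. The conceptual source of the quantum determinant is the same degeneracy phenomenon: at the shifted argument the 8-vertex $R$-matrix drops rank, since $\text{a}(-\eta)=\text{d}(-\eta)=0$ (because $\theta_{1}(0|2\omega)=0$) while, by the parity of the theta functions, $\text{b}(-\eta)=-\text{c}(-\eta)$, so that $R_{12}^{\mathsf{(8V)}}(-\eta)$ is proportional to the one-dimensional antisymmetric projector $P_{12}^{-}$ on $\mathbb{C}^{2}\otimes\mathbb{C}^{2}$. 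I will use this both to fix the scalar value and to produce the annihilation identities.

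First I would compute the local quantum determinant. Defining $\det{}_{q}R_{0n}^{\mathsf{(8V)}}$ as in the dynamical case, with the factors now the auxiliary-space blocks of $R_{0n}^{\mathsf{(8V)}}$ acting on the quantum site $n$, a direct multiplication shows that $\det{}_{q}R_{0n}^{\mathsf{(8V)}}(\lambda)$ is diagonal on the quantum space, with diagonal entries $\text{a}(\lambda)\text{b}(\lambda-\eta)-\text{d}(\lambda)\text{d}(\lambda-\eta)$ and $\text{b}(\lambda)\text{a}(\lambda-\eta)-\text{c}(\lambda)\text{c}(\lambda-\eta)$. A theta addition formula, the analogue with modular parameter $2\omega$ of the identity $\theta_{1}(x+y)\theta_{1}(x-y)\theta_{4}^{2}(0)=\theta_{3}^{2}(x)\theta_{2}^{2}(y)-\theta_{2}^{2}(x)\theta_{3}^{2}(y)$ used above, shows that these two entries coincide and equal the scalar $\text{a}(\lambda)\text{a}(\lambda-2\eta)$, so the local determinant is central. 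Taking the product over the chain, exactly as in the dynamical case, the co-product structure of $\mathsf{M}_{0}^{\mathsf{(8V)}}$ makes the product of the local determinants reproduce the global one and yields $\det{}_{q}\mathsf{M}^{\mathsf{(8V)}}(\lambda)=\prod_{n=1}^{\mathsf{N}}\text{a}(\lambda-\xi_{n})\text{a}(\lambda-\xi_{n}-2\eta)=\text{\textsc{a}}(\lambda)\text{\textsc{d}}(\lambda-\eta)$.

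The equality of the two displayed expressions for $\det{}_{q}\mathsf{M}^{\mathsf{(8V)}}$, together with the off-diagonal annihilation identities $\mathsf{B}^{\mathsf{(8V)}}(\lambda)\mathsf{A}^{\mathsf{(8V)}}(\lambda-\eta)=\mathsf{A}^{\mathsf{(8V)}}(\lambda)\mathsf{B}^{\mathsf{(8V)}}(\lambda-\eta)$ and $\mathsf{C}^{\mathsf{(8V)}}(\lambda)\mathsf{D}^{\mathsf{(8V)}}(\lambda-\eta)=\mathsf{D}^{\mathsf{(8V)}}(\lambda)\mathsf{C}^{\mathsf{(8V)}}(\lambda-\eta)$, I would extract from the Yang-Baxter algebra \rf{8v-YB-algebra} evaluated at $\lambda_{12}=-\eta$: using $R_{12}^{\mathsf{(8V)}}(-\eta)\propto P_{12}^{-}$ gives $P_{12}^{-}\mathsf{M}_{1}^{\mathsf{(8V)}}(\lambda)\mathsf{M}_{2}^{\mathsf{(8V)}}(\lambda-\eta)=\mathsf{M}_{2}^{\mathsf{(8V)}}(\lambda-\eta)\mathsf{M}_{1}^{\mathsf{(8V)}}(\lambda)P_{12}^{-}$, and projecting both auxiliary spaces onto the singlet isolates precisely these identities and re-proves centrality.

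Finally, the inversion formula \rf{Inv-8v-M} follows formally. A direct computation in the auxiliary space gives the classical adjugate
\begin{equation*}
\sigma_{0}^{y}\left[ \mathsf{M}_{0}^{\mathsf{(8V)}}(\lambda-\eta)\right] ^{t_{0}}\sigma_{0}^{y}=\left(
\begin{array}{cc}
\mathsf{D}^{\mathsf{(8V)}}(\lambda-\eta) & -\mathsf{B}^{\mathsf{(8V)}}(\lambda-\eta) \\
-\mathsf{C}^{\mathsf{(8V)}}(\lambda-\eta) & \mathsf{A}^{\mathsf{(8V)}}(\lambda-\eta)
\end{array}
\right)_{0},
\end{equation*}
so that right-multiplying $\mathsf{M}_{0}^{\mathsf{(8V)}}(\lambda)$ by it produces the two expressions for $\det{}_{q}\mathsf{M}^{\mathsf{(8V)}}(\lambda)=\text{\textsc{a}}(\lambda)\text{\textsc{d}}(\lambda-\eta)$ on the diagonal and the two annihilation combinations on the off-diagonal, hence $\text{\textsc{a}}(\lambda)\text{\textsc{d}}(\lambda-\eta)$ times the identity; the reversed product gives the same scalar by the analogous commutation identities, so the inverse is two-sided and dividing by the (invertible) scalar $\text{\textsc{a}}(\lambda)\text{\textsc{d}}(\lambda-\eta)$ yields \rf{Inv-8v-M}. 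The main obstacle is not the algebra, which is formal, but the elliptic bookkeeping: verifying that both diagonal entries of the local determinant collapse to the single scalar $\text{a}(\lambda)\text{a}(\lambda-2\eta)$ and that $R^{\mathsf{(8V)}}(-\eta)$ is exactly proportional to $P^{-}$ requires the correct theta addition formula at modular parameter $2\omega$ and careful tracking of the normalization of the weights $\text{a},\text{b},\text{c},\text{d}$.
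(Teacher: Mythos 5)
You follow essentially the same route as the paper: reduce to the local quantum determinant $\det_{q}R_{0n}^{\mathsf{(8V)}}$, show by theta-function identities that its two diagonal entries coincide, recover the global determinant as the product of the local ones, and obtain the inversion formula from the adjugate $\sigma_{0}^{y}\left[\mathsf{M}_{0}^{\mathsf{(8V)}}(\lambda-\eta)\right]^{t_{0}}\sigma_{0}^{y}$ combined with the vanishing of $\mathsf{A}^{\mathsf{(8V)}}(\lambda)\mathsf{B}^{\mathsf{(8V)}}(\lambda-\eta)-\mathsf{B}^{\mathsf{(8V)}}(\lambda)\mathsf{A}^{\mathsf{(8V)}}(\lambda-\eta)$ and of $\mathsf{D}^{\mathsf{(8V)}}(\lambda)\mathsf{C}^{\mathsf{(8V)}}(\lambda-\eta)-\mathsf{C}^{\mathsf{(8V)}}(\lambda)\mathsf{D}^{\mathsf{(8V)}}(\lambda-\eta)$. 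The paper simply cites the 8-vertex Yang-Baxter equation for these identities; your degeneration argument (that $R_{12}^{\mathsf{(8V)}}(-\eta)$ is proportional to the antisymmetric projector, which is correct since $\text{a}(-\eta)=\text{d}(-\eta)=0$ and $\text{b}(-\eta)=-\text{c}(-\eta)$) is the standard mechanism behind that citation, and your observation that a one-sided inverse of a finite-dimensional operator is automatically two-sided is also fine.

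There is, however, one concrete error hiding in the ``elliptic bookkeeping'' you defer. The common value of the two diagonal entries of the local determinant is \emph{not} $\text{a}(\lambda)\text{a}(\lambda-2\eta)$ with $\text{a}$ the \emph{8-vertex} weight: that quantity is a product of four theta functions of modular parameter $2\omega$ at the arguments $\lambda+\eta,\lambda,\lambda-\eta,\lambda-2\eta$, and its product over the chain does \emph{not} reproduce $\text{\textsc{a}}(\lambda)\text{\textsc{d}}(\lambda-\eta)$. The correct common value is $a(\lambda-\xi_{n})\,a(\lambda-\xi_{n}-2\eta)=\theta_{1}(\lambda-\xi_{n}+\eta|\omega)\,\theta_{1}(\lambda-\xi_{n}-\eta|\omega)$, i.e. the product of \emph{dynamical 6-vertex} weights, which are theta functions of modular parameter $\omega$; indeed this coincidence of the 8-vertex quantum determinant with the dynamical 6-vertex one is the whole content of the lemma. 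Precisely for this reason the $2\omega$-analogues of the quadratic addition formula you quote do not suffice: one also needs the Landen-type identity (equation 2 at page 881 of \cite{FarXYZTables of integrals}, the third identity the paper invokes) relating $\theta_{1}(\cdot|\omega)\theta_{2}(\cdot|\omega)$ to bilinear combinations of $\theta_{1}(\cdot|2\omega)$ and $\theta_{4}(\cdot|2\omega)$, which is what converts the products of $2\omega$-thetas into $\omega$-theta functions. With the scalar corrected in this way, your argument coincides with the paper's.
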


\begin{proof}
The centrality of the quantum determinant is a well known property in the
6-vertex Yang-Baxter algebra and it is possible to extend it to the 8-vertex
case. The proof is given by proving the statement for the generic quantum
site $n$ and then showing that the product of the local quantum determinants
reproduce the complete one. Let us introduce the notation:%
\begin{align}
\det{}_{q}R_{0n}^{\mathsf{(8V)}}(\lambda )& =\left( R_{0n}^{\mathsf{(8V)}}\right)
_{11}(\lambda )\left( R_{0n}^{\mathsf{(6VD)}}\right) _{22}(\lambda -\eta ) 
\notag \\
& -\left( R_{0n}^{\mathsf{(6VD)}}\right) _{12}(\lambda )\left( R_{0n}^{%
\mathsf{(6VD)}}\right) _{21}(\lambda -\eta ),
\end{align}
its explicit form reads:%
\begin{equation}
\det{}_{q}R_{0,n}^{\mathsf{(8V)}}(\lambda -\xi _{n})=\left( 
\begin{array}{cc}
\text{a}(\lambda )\text{b}(\lambda -\eta )-\text{d}(\lambda )\text{d}%
(\lambda -\eta ) & 0 \\ 
0 & \text{b}(\lambda )\text{a}(\lambda -\eta )-\text{c}(\lambda )\text{c}%
(\lambda -\eta )%
\end{array}%
\right) ,
\end{equation}%
then all we need to prove are the following identities:%
\begin{equation}
\text{a}(\lambda )\text{b}(\lambda -\eta )-\text{d}(\lambda )\text{d}%
(\lambda -\eta )=\text{b}(\lambda )\text{a}(\lambda -\eta )-\text{c}(\lambda
)\text{c}(\lambda -\eta )=a(\lambda -\xi _{n})a(\lambda -\xi _{n}-2\eta )
\end{equation}%
which trivially follow once we use the formulae\footnote{%
Respectively, equations 7, 10 and 2 at page 881 of \cite{FarXYZTables of
integrals}.}:%
\begin{eqnarray}
&&\theta _{1}(x+y|2\omega )\theta _{1}(x-y|2\omega )\theta _{4}^{2}(0)\left.
=\right. \theta _{3}^{2}(x|2\omega )\theta _{2}^{2}(y|2\omega )-\theta
_{2}^{2}(x|2\omega )\theta _{3}^{2}(y|2\omega ), \\
&&\theta _{4}(x+y|2\omega )\theta _{4}(x-y|2\omega )\theta _{4}^{2}(0)\left.
=\right. \theta _{4}^{2}(x|2\omega )\theta _{4}^{2}(y|2\omega )-\theta
_{1}^{2}(x|2\omega )\theta _{1}^{2}(y|2\omega ), \\
&&\theta _{1}(x|\omega )\theta _{2}(x|\omega )\left. =\right. \theta
_{1}(x+y|2\omega )\theta _{4}(x-y|2\omega )+\theta _{4}(x+y|2\omega )\theta
_{1}(x-y|2\omega ).
\end{eqnarray}%
Finally, the inversion formula $\left( \ref{Inv-8v-M}\right) $\ follows from
the quantum determinant formulae and from the identities:%
\begin{equation}
\mathsf{A}^{\mathsf{(8V)}}(\lambda )\mathsf{B}^{\mathsf{(8V)}}(\lambda -\eta
)-\mathsf{B}^{\mathsf{(8V)}}(\lambda )\mathsf{A}^{\mathsf{(8V)}}(\lambda
-\eta )=0,\text{ \ }\mathsf{D}^{\mathsf{(8V)}}(\lambda )\mathsf{C}^{\mathsf{%
(8V)}}(\lambda -\eta )-\mathsf{C}^{\mathsf{(8V)}}(\lambda )\mathsf{D}^{%
\mathsf{(8V)}}(\lambda -\eta )=0,
\end{equation}%
which directly follows from the 8-vertex Yang-Baxter equations $\left( \ref%
{8v-YB-algebra}\right) $.
\end{proof}

Moreover, it holds:

\begin{lemma}
\label{8v-AI}The following products $(\mathsf{M}^{\mathsf{(8V)}}(\xi
_{n}^{(0)}))_{h,j}\left( \mathsf{M}^{\mathsf{(8V)}}(\xi _{n}^{(1)})\right)
_{k,j}$ and $(\mathsf{M}^{\mathsf{(8V)}}(\xi _{n}^{(1)}))_{j,h}(\mathsf{M}^{%
\mathsf{(8V)}}(\xi _{n}^{(0)}))_{j,k}$ of the elements of the 8-vertex
monodromy matrix vanish for any $n\in \{1,...,\mathsf{N}\}$\ if $h=k$ and the following identities hold:%
\begin{align}
\mathsf{A}^{\mathsf{(8V)}}(\xi _{n}^{(0)})\mathsf{D}^{\mathsf{(8V)}}(\xi
_{n}^{(1)})& =-\mathsf{C}^{\mathsf{(8V)}}(\xi _{n}^{(0)})\mathsf{B}^{\mathsf{%
(8V)}}(\xi _{n}^{(1)}),  \label{8v-recomb1} \\
\mathsf{D}^{\mathsf{(8V)}}(\xi _{n}^{(0)})\mathsf{A}^{\mathsf{(8V)}}(\xi
_{n}^{(1)})& =-\mathsf{B}^{\mathsf{(8V)}}(\xi _{n}^{(0)})\mathsf{C}^{\mathsf{%
(8V)}}(\xi _{n}^{(1)}).  \label{8v-recomb2}
\end{align}
\end{lemma}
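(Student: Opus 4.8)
The plan is to trace both families of identities back to the degeneration of the 8-vertex R-matrix at the two arguments $0$ and $-\eta$, which are exactly the local spectral parameters $\xi_n^{(0)}-\xi_n$ and $\xi_n^{(1)}-\xi_n$ at site $n$. First I would record, directly from the explicit theta weights together with $\theta_1(0|2\omega)=0$ and the oddness of $\theta_1$, that $\text{b}(0)=\text{d}(0)=0$ and $\text{a}(0)=\text{c}(0)$, so $R^{\mathsf{(8V)}}(0)=\text{a}(0)\,P$ is proportional to the permutation, while $\text{a}(-\eta)=\text{d}(-\eta)=0$ and $\text{b}(-\eta)=-\text{c}(-\eta)$, so $R^{\mathsf{(8V)}}(-\eta)$ is proportional to the rank-one projector $P^{-}$ onto the antisymmetric (singlet) vector $|s\rangle=\tfrac{1}{\sqrt2}(|{\uparrow\downarrow}\rangle-|{\downarrow\uparrow}\rangle)$ of $\mathbb{C}^2\otimes\mathbb{C}^2$. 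Everything is governed by these two facts: at $\lambda=\xi_n$ the factor $R_{0n}(0)$ carries site $n$ onto the auxiliary line, and at $\lambda=\xi_n-\eta$ the factor $R_{0n}(-\eta)$ makes $\mathsf M(\xi_n-\eta)$ factor through the one-dimensional singlet at the $0n$ bond.

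Next I would pass to two auxiliary spaces and read each product as a matrix element, $(\mathsf M(\xi_n))_{h,j}(\mathsf M(\xi_n-\eta))_{k,j'}=\langle h|_1\langle k|_2\,\mathsf M_1(\xi_n)\mathsf M_2(\xi_n-\eta)\,|j\rangle_1|j'\rangle_2$. Then the four vanishing statements of the first family are precisely the diagonal ``symmetric-to-symmetric'' elements $\langle\epsilon\epsilon|\cdots|\epsilon'\epsilon'\rangle$, and the two recombination identities say exactly that the symmetric (triplet) bra $\tfrac{1}{\sqrt2}(\langle\uparrow\downarrow|+\langle\downarrow\uparrow|)$ annihilates both $\mathsf M_1(\xi_n)\mathsf M_2(\xi_n-\eta)|\uparrow\downarrow\rangle$ and $\mathsf M_1(\xi_n)\mathsf M_2(\xi_n-\eta)|\downarrow\uparrow\rangle$; the second family $(\mathsf M(\xi_n^{(1)}))_{j,h}(\mathsf M(\xi_n^{(0)}))_{j,k}$ is the transposed statement and is handled identically with the two orderings exchanged. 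I would then reconstruct $\mathsf M_1(\xi_n)=\text{a}(0)\,\mathsf M_1^{>n}(\xi_n)\,\mathsf M_n^{<n}(\xi_n)\,P_{1n}$ using the permutation and $\mathsf M_2(\xi_n-\eta)=-2\text{c}(-\eta)\,\mathsf M_2^{>n}(\xi_n-\eta)\,|s\rangle_{2n}\langle s|_{2n}\,\mathsf M_2^{<n}(\xi_n-\eta)$ using the projector. Composing the two, the permutation $P_{1n}$ transports the $V_1$ index onto site $n$, where it is forced into the singlet supplied by $|s\rangle_{2n}$; the antisymmetry of $|s\rangle$ then both kills the channel in which the two auxiliary indices coincide (the $h=k$ vanishing) and fixes the relative sign of the surviving pair (the recombination). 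For $\mathsf N=1$ this is an immediate $2\times2$ matrix computation, and the recombination reproduces $\mathsf A\mathsf D=-\mathsf C\mathsf B$ with the sign dictated by $|s\rangle$.

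For general $\mathsf N$ the remaining task is to control the spectator factors $\mathsf M^{>n}$, $\mathsf M^{<n}$ coming from the sites $m\neq n$; since the two monodromy matrices share each auxiliary line, these factors rotate the indices $h,k$ and do not simply drop out. The device that rescues the argument is a pair of bilinear weight identities, $\text{a}(\mu)\text{d}(\mu-\eta)=\text{d}(\mu)\text{b}(\mu-\eta)$ and $\text{b}(\mu)\text{c}(\mu-\eta)=\text{c}(\mu)\text{a}(\mu-\eta)$, which follow at once from the factorized theta form of $\text{a},\text{b},\text{c},\text{d}$. These are exactly the combinations through which a spectator site at argument $\mu=\xi_n-\xi_m$ enters the two $\eta$-shifted monodromies, and they make its contribution telescope so that the vanishing and the recombination are preserved. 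I would organize this as an induction on the chain length, attaching one spectator site at a time, with the base case $\mathsf N=1$ above and the one-spectator check $\mathsf N=2$ displaying precisely these two identities.

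I expect the genuine obstacle to be this last step. Because the degeneracy lives only at the $n$-th bond while the auxiliary spaces run through the whole chain, one must verify that threading the rank-one singlet through the \emph{invertible} spectator R-matrices never leaks into the symmetric auxiliary channel; the bilinear weight identities are what guarantee this, but assembling them uniformly for all $\mathsf N$ — rather than re-expanding the chain site by site — is delicate. The four vanishing identities are comparatively easy, since they only require the symmetric channel to stay empty, whereas tracking the exact factor $-1$ in the recombination through the inductive attachment is where the care is needed.
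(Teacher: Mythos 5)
Your strategy is genuinely different from the paper's proof, which never invokes the degeneration of the R-matrix: there the lemma is deduced from the reconstruction formulae \rf{8v-re1}--\rf{8v-re2} for local operators (the solution of the quantum inverse problem extended to the 8-vertex case), by writing a trivial local identity such as $E^{22}_{n}=E^{21}_{n}E^{12}_{n}$ with the factors reconstructed through the two different formulae. Within your route, the preliminary steps are all correct: $R^{\mathsf{(8V)}}(0)=\text{a}(0)\,P$, $R^{\mathsf{(8V)}}(-\eta)=-2\text{c}(-\eta)\,P^{-}$, the translation of the six identities into singlet/triplet matrix elements of $X\equiv\mathsf{M}_{1}^{\mathsf{(8V)}}(\xi_{n}^{(0)})\,\mathsf{M}_{2}^{\mathsf{(8V)}}(\xi_{n}^{(1)})$ over two auxiliary spaces, and the base case $\mathsf{N}=1$.

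The gap is exactly at the step you flag, and the induction as you set it up does not close, for a concrete reason: the six identities of the lemma are not a self-propagating induction hypothesis. Attaching a spectator site $p>n$ multiplies the monodromy matrix on the left, so that
\begin{equation*}
\langle hh|\,X^{\mathrm{new}}\,|jj\rangle \;=\;\sum_{l,l'=1}^{2}\bigl[R(\mu)\bigr]_{hl}\,\bigl[R(\mu-\eta)\bigr]_{hl'}\;\langle ll'|\,X^{\mathrm{old}}\,|jj\rangle ,
\qquad \mu=\xi_{n}-\xi_{p},
\end{equation*}
where $[R(\mu)]_{hl}$ are the $2\times 2$ auxiliary blocks of the R-matrix acting on the new site $p$. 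The diagonal terms $l=l'$ vanish by hypothesis, but the two cross terms survive; to cancel them you need $\langle 12|X^{\mathrm{old}}|jj\rangle=-\langle 21|X^{\mathrm{old}}|jj\rangle$, i.e. $\mathsf{A}^{\mathsf{(8V)}}(\xi_{n}^{(0)})\mathsf{C}^{\mathsf{(8V)}}(\xi_{n}^{(1)})=-\mathsf{C}^{\mathsf{(8V)}}(\xi_{n}^{(0)})\mathsf{A}^{\mathsf{(8V)}}(\xi_{n}^{(1)})$ together with its $\mathsf{B}$--$\mathsf{D}$ analogue, and these are \emph{not} among the identities you are inducting on. (Your two bilinear weight identities are indeed precisely what makes $[R(\mu)]_{h1}[R(\mu-\eta)]_{h2}-[R(\mu)]_{h2}[R(\mu-\eta)]_{h1}=0$, but propagating the recombination channel requires a third scalar identity, $\text{a}(\mu)\text{b}(\mu-\eta)-\text{b}(\mu)\text{a}(\mu-\eta)+\text{c}(\mu)\text{c}(\mu-\eta)-\text{d}(\mu)\text{d}(\mu-\eta)=0$, which you do not list.) The repair is to prove the stronger, stable statement $P_{12}^{+}\,\mathsf{M}_{1}(\xi_{n}^{(0)})\mathsf{M}_{2}(\xi_{n}^{(1)})=0$, with $P^{+}_{12}=1-P^{-}_{12}$, together with its mirror $\mathsf{M}_{1}(\xi_{n}^{(1)})\mathsf{M}_{2}(\xi_{n}^{(0)})\,P_{12}^{+}=0$; this contains all six identities plus the missing cross ones, and once the claim is in this form no induction is needed at all. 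Indeed, from your factorization $X=\kappa\,\mathsf{X}_{1}^{>}\mathsf{Y}_{2}^{>}\,P_{1n}P_{2n}^{-}\,\mathsf{X}_{1}^{<}\mathsf{Y}_{2}^{<}$, your own observation that $P_{1n}$ converts the $(2,n)$ singlet into a $(1,2)$ singlet gives $P_{1n}P_{2n}^{-}=P_{12}^{-}\,P_{1n}P_{2n}^{-}$, and the Yang--Baxter relation applied to the \emph{spectator partial monodromies}, $\mathsf{X}_{1}^{>}\mathsf{Y}_{2}^{>}\,P_{12}^{-}=P_{12}^{-}\,\mathsf{Y}_{2}^{>}\mathsf{X}_{1}^{>}$ (valid because their arguments differ uniformly by $\eta$ and $R_{12}(-\eta)\propto P_{12}^{-}$), yields $P_{12}^{+}X=\kappa\,P_{12}^{+}P_{12}^{-}\,\mathsf{Y}_{2}^{>}\mathsf{X}_{1}^{>}\,P_{1n}P_{2n}^{-}\,\mathsf{X}_{1}^{<}\mathsf{Y}_{2}^{<}=0$ in one line. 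It is this application of the Yang--Baxter equation to the spectator factors --- not scalar weight identities --- that prevents the leakage into the symmetric auxiliary channel you were worried about.
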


\begin{proof}
This lemma is a trivial generalization to the 8-vertex Yang-Baxter algebra
of the results known in the 6-vertex case, see for example \cite%
{FarXYZKKMNST07}. Both in the 8-vertex and 6-vertex case these results are
simple consequences of the reconstruction formulae of local operators in
terms of matrix elements of the monodromy matrix first proven for the
6-vertex case in \cite{FarXYZKitMT99}\ and then extended also to the
8-vertex case in \cite{FarXYZMaiT00}. In the 8-vertex case the
reconstructions read:%
\begin{eqnarray}
X_{n} &=&\prod_{b=1}^{n-1}\mathsf{T}^{\mathsf{(8V)}}(\xi _{b}^{(0)})\text{tr}%
_{0}(\mathsf{M}^{\mathsf{(8V)}}(\xi _{n}^{(0)})X_{0})\prod_{b=1}^{n}\frac{%
\mathsf{T}^{\mathsf{(8V)}}(\xi _{b}^{(1)})}{\det{}_{q}\mathsf{M}^{\mathsf{%
(8V)}}(\xi _{b}^{(0)})}  \label{8v-re1} \\
&=&\prod_{b=1}^{n}\mathsf{T}^{\mathsf{(8V)}}(\xi _{b}^{(0)})\frac{\text{tr}%
_{0}(\mathsf{M}^{\mathsf{(8V)}}(\xi _{n}^{(1)})\sigma
_{0}^{(y)}X_{0}^{t_{0}}\sigma _{0}^{(y)})}{\det{}_{q}\mathsf{M}^{\mathsf{%
(8V)}}(\xi _{n}^{(0)})}\prod_{b=1}^{n-1}\frac{\mathsf{T}^{\mathsf{(8V)}}(\xi
_{b}^{(1)})}{\det{}_{q}\mathsf{M}^{\mathsf{(8V)}}(\xi _{b}^{(0)})},
\label{8v-re2}
\end{eqnarray}%
where $X_{n}$ is a local operator on the quantum space $n$, i.e. it acts as
the identity on any quantum space associate to a site $m\neq n$ and as the $%
2\times 2$ matrix $X$ on the quantum space in the site $n$, while $X_{0}$ is
the $2\times 2$ matrix $X$ on the auxiliary space. Let us consider for
example the following identity:%
\begin{equation}
\left( 
\begin{array}{cc}
0 & 0 \\ 
0 & 1%
\end{array}%
\right) _{n}=X_{n}X_{n}=Y_{n}Z_{n},
\end{equation}%
where:%
\begin{equation}
X_{n}=\left( 
\begin{array}{cc}
0 & 0 \\ 
0 & 1%
\end{array}%
\right) _{n},\text{ }Y_{n}=\left( 
\begin{array}{cc}
0 & 0 \\ 
1 & 0%
\end{array}%
\right) _{n}\text{ \ and \ }Z_{n}=\left( 
\begin{array}{cc}
0 & 1 \\ 
0 & 0%
\end{array}%
\right) _{n},
\end{equation}%
then the identity $\left( \ref{8v-recomb2}\right) $ simply follows by using
for the first $X_{n}$ and the $Y_{n}$ the reconstruction $\left( \ref{8v-re1}%
\right) $ while for the second $X_{n}$ and the $Z_{n}$ the reconstruction $%
\left( \ref{8v-re2}\right) $. All the other identities in this lemma are
proven similarly by taking the product of a couple of local operators and
using for them the two reconstructions.
\end{proof}

\subsection{\label{1ch-8v-eigenvalues}On the periodic 8-vertex transfer
matrix eigenvalues}

The previous two lemmas allow to prove some preliminary characterization of
the periodic 8-vertex eigenvalues as presented in the
following proposition:

\begin{proposition}
The periodic 8-vertex transfer matrix of a chain with $\mathsf{N}$ quantum
sites satisfies the following properties:%
\begin{equation}
\mathsf{T}^{\mathsf{(8V)}}(\xi _{n}^{(0)})\mathsf{T}^{\mathsf{(8V)}}(\xi
_{n}^{(1)})=\det{}_{q}\mathsf{M}^{\mathsf{(8V)}}(\xi _{n}^{(0)})\text{ \ \ }%
\forall n\in \{1,...,\mathsf{N}\},  \label{8v-T-R1}
\end{equation}%
and%
\begin{equation}
\mathsf{T}^{\mathsf{(8V)}}(\lambda +\pi )=\left( -1\right) ^{\mathsf{N}}%
\mathsf{T}^{\mathsf{(8V)}}(\lambda ),\text{ }\mathsf{T}^{\mathsf{(8V)}%
}(\lambda +\pi \omega )=\left( -e^{-i\left( 2\lambda +\pi w\right) }\right)
^{\mathsf{N}}e^{-2i\left(t_{\text{\textbf{0}}}-\sum_{a=1}^{\mathsf{N}}\xi _{a}\right) }\mathsf{T}^{\mathsf{(8V)}}(\lambda ).  \label{8v-T-R2}
\end{equation}%
Then the 8-vertex eigenvalues are elliptic polynomials (or theta functions)
of degree $\mathsf{N}$ and character $e^{-2i\left( t_{\text{\textbf{0}}%
}-\sum_{a=1}^{\mathsf{N}}\xi _{a}\right) }$ and they admit the following
interpolation formula:%
\begin{equation}
\mathsf{t}_{\mathsf{8V}}(\lambda )=\sum_{a=1}^{\mathsf{N}}\frac{\theta (t_{%
\text{\textbf{0}}}-\lambda +\xi _{a})}{\theta (t_{\text{\textbf{0}}})}%
\prod_{b\neq a}\frac{\theta (\lambda -\xi _{b})}{\theta (\xi _{a}-\xi _{b})}%
\mathsf{t}_{\mathsf{8V}}(\xi _{a}),  \label{Eigenv8v-ch1}
\end{equation}%
where the $\mathsf{t}_{\mathsf{8V}}(\xi _{a})$ are solutions of the discrete
system of equations:%
\begin{equation}
\mathsf{t}_{\mathsf{8V}}(\xi _{a}^{(0)})\mathsf{t}_{\mathsf{8V}}(\xi
_{a}^{(1)})=\text{\textsc{a}}(\xi _{a}^{(0)})\text{\textsc{d}}(\xi
_{a}^{(1)}),\text{ \ \ }\forall a\in \{1,...,\mathsf{N}\}.
\label{Eigenv8v-ch2}
\end{equation}
\end{proposition}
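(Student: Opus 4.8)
The plan is to deduce \rf{8v-T-R1} from the two preceding lemmas, to obtain the quasi-periodicity \rf{8v-T-R2} from the transformation laws of the theta functions building the R-matrix, and then to read off the elliptic-polynomial structure and the discrete system as consequences. First I would prove \rf{8v-T-R1}. Writing $\mathsf{T}^{\mathsf{(8V)}}(\lambda)=\mathsf{A}^{\mathsf{(8V)}}(\lambda)+\mathsf{D}^{\mathsf{(8V)}}(\lambda)$ and expanding $\mathsf{T}^{\mathsf{(8V)}}(\xi_n^{(0)})\mathsf{T}^{\mathsf{(8V)}}(\xi_n^{(1)})$ into its four terms, I would invoke Lemma \ref{8v-AI}: its $h=k$ case supplies the annihilation identities $\mathsf{A}^{\mathsf{(8V)}}(\xi_n^{(0)})\mathsf{A}^{\mathsf{(8V)}}(\xi_n^{(1)})=0$ and $\mathsf{D}^{\mathsf{(8V)}}(\xi_n^{(0)})\mathsf{D}^{\mathsf{(8V)}}(\xi_n^{(1)})=0$, which remove the two diagonal terms, while \rf{8v-recomb2} rewrites the surviving cross term $\mathsf{D}^{\mathsf{(8V)}}(\xi_n^{(0)})\mathsf{A}^{\mathsf{(8V)}}(\xi_n^{(1)})$ as $-\mathsf{B}^{\mathsf{(8V)}}(\xi_n^{(0)})\mathsf{C}^{\mathsf{(8V)}}(\xi_n^{(1)})$. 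Together with the remaining $\mathsf{A}^{\mathsf{(8V)}}(\xi_n^{(0)})\mathsf{D}^{\mathsf{(8V)}}(\xi_n^{(1)})$ this reassembles exactly $\det{}_q\mathsf{M}^{\mathsf{(8V)}}(\xi_n^{(0)})$ in its first form (recall $\xi_n^{(1)}=\xi_n^{(0)}-\eta$), which is \rf{8v-T-R1}.

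Next I would establish \rf{8v-T-R2}. Since $\text{a},\text{b},\text{c},\text{d}$ are built from $\theta_1(\cdot|2\omega)$ and $\theta_4(\cdot|2\omega)$, the standard shift laws of these functions let me read off the behaviour of the R-matrix. Under $\lambda\to\lambda+\pi$ one finds that $\text{a},\text{b}$ change sign while $\text{c},\text{d}$ are unchanged, i.e. $R_{0n}^{\mathsf{(8V)}}(\lambda+\pi)=-\sigma_0^z R_{0n}^{\mathsf{(8V)}}(\lambda)\sigma_0^z$; under the half-period shift $\lambda\to\lambda+\pi\omega$ the pairs $\text{a}\leftrightarrow\text{b}$ and $\text{c}\leftrightarrow\text{d}$ get exchanged up to a common scalar $\phi(\lambda)$, i.e. $R_{0n}^{\mathsf{(8V)}}(\lambda+\pi\omega)=\phi(\lambda-\xi_n)\,\sigma_0^x R_{0n}^{\mathsf{(8V)}}(\lambda)\sigma_0^x$. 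Substituting either factorization into $\mathsf{M}_0^{\mathsf{(8V)}}(\lambda)=R_{0\mathsf{N}}^{\mathsf{(8V)}}\cdots R_{01}^{\mathsf{(8V)}}$, the auxiliary-space Pauli matrices telescope because they square to the identity, giving $\mathsf{M}_0^{\mathsf{(8V)}}(\lambda+\pi)=(-1)^{\mathsf{N}}\sigma_0^z\mathsf{M}_0^{\mathsf{(8V)}}(\lambda)\sigma_0^z$ and, for the other shift, the analogous relation with the scalar $\prod_{n}\phi(\lambda-\xi_n)$ pulled out. Applying $tr_0$ and using cyclicity together with $(\sigma_0^z)^2=(\sigma_0^x)^2=1$ yields the two relations in \rf{8v-T-R2}, the product of the $\phi(\lambda-\xi_n)$ collapsing to the stated prefactor and character $e^{-2i(t_{\text{\textbf{0}}}-\sum_a\xi_a)}$.

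Finally I would read off the structural claim. Each entry of $\mathsf{M}_0^{\mathsf{(8V)}}(\lambda)$, hence $\mathsf{t}_{\mathsf{8V}}(\lambda)$, is a product of $\mathsf{N}$ first-order theta factors in $\lambda$, so it is an elliptic polynomial of degree $\mathsf{N}$, and \rf{8v-T-R2} fixes its quasi-periods, i.e. the sign $(-1)^{\mathsf{N}}$ and the character. The space of theta functions with these quasi-periodicities is $\mathsf{N}$-dimensional, so $\mathsf{t}_{\mathsf{8V}}(\lambda)$ is determined by its values at the $\mathsf{N}$ points $\xi_1,\dots,\xi_{\mathsf{N}}$; the Lagrange-type sum in \rf{Eigenv8v-ch1}, whose basis functions lie in the same space and reduce to Kronecker deltas at the $\xi_a$, is precisely this reconstruction. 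The system \rf{Eigenv8v-ch2} then follows by passing to eigenvalues in the operator identity \rf{8v-T-R1}, since the transfer matrices at $\xi_a^{(0)}$ and $\xi_a^{(1)}$ belong to the commuting family and the central quantum determinant equals $\text{\textsc{a}}(\xi_a^{(0)})\text{\textsc{d}}(\xi_a^{(1)})$.

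I expect the delicate step to be the $\pi\omega$ quasi-periodicity in \rf{8v-T-R2}: one must apply the half-period identities for modulus $2\omega$ correctly to each of $\text{a},\text{b},\text{c},\text{d}$, identify $\phi(\lambda)$, and then track the accumulated exponential through the $\mathsf{N}$-fold product so that it matches the character $e^{-2i(t_{\text{\textbf{0}}}-\sum_a\xi_a)}$ exactly. The sign and exponent bookkeeping there is the only genuinely error-prone point, the remaining steps being essentially forced by the two lemmas and by trace cyclicity.
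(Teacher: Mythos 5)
Your proposal is correct and follows essentially the same route as the paper: the annihilation identities and the recombination identity \rf{8v-recomb2} from Lemma \ref{8v-AI} to reduce $\mathsf{T}^{\mathsf{(8V)}}(\xi_n^{(0)})\mathsf{T}^{\mathsf{(8V)}}(\xi_n^{(1)})$ to the quantum determinant; the theta-function shift laws giving $R^{\mathsf{(8V)}}(\lambda+\pi)=-\sigma_0^{z}R^{\mathsf{(8V)}}(\lambda)\sigma_0^{z}$ and $R^{\mathsf{(8V)}}(\lambda+\pi\omega)=\phi(\lambda)\,\sigma_0^{x}R^{\mathsf{(8V)}}(\lambda)\sigma_0^{x}$, telescoped through the monodromy product and traced with cyclicity; and finally the $\mathsf{N}$-dimensional theta-function interpolation plus passage to eigenvalues in the operator identity. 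The only cosmetic difference is that the paper cites the interpolation formula of \cite{FarXYZPRL-08} where you argue the Lagrange-type reconstruction directly, which changes nothing of substance.
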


\begin{proof}
The Lemma \ref{8v-AI} in particular implies the annihilation identities:%
\begin{equation}
\mathsf{A}^{\mathsf{(8V)}}(\xi _{n}^{(0)})\mathsf{A}^{\mathsf{(8V)}}(\xi
_{n}^{(1)})=\mathsf{D}^{\mathsf{(8V)}}(\xi _{n}^{(0)})\mathsf{D}^{\mathsf{%
(8V)}}(\xi _{n}^{(1)})=0,
\end{equation}%
from which we can write:%
\begin{equation}
\mathsf{T}^{\mathsf{(8V)}}(\xi _{n}^{(0)})\mathsf{T}^{\mathsf{(8V)}}(\xi
_{n}^{(1)})=\mathsf{A}^{\mathsf{(8V)}}(\xi _{n}^{(0)})\mathsf{D}^{\mathsf{%
(8V)}}(\xi _{n}^{(1)})+\mathsf{D}^{\mathsf{(8V)}}(\xi _{n}^{(0)})\mathsf{A}^{%
\mathsf{(8V)}}(\xi _{n}^{(1)})
\end{equation}
and then eliminating in the above equation $\mathsf{D}^{\mathsf{(8V)}}(\xi
_{n}^{(0)})\mathsf{A}^{\mathsf{(8V)}}(\xi _{n}^{(1)})$ by using $\left( \ref%
{8v-recomb1}\right) $ or $\mathsf{A}^{\mathsf{(8V)}}(\xi _{n}^{(0)})\mathsf{D%
}^{\mathsf{(8V)}}(\xi _{n}^{(1)})$ by using $\left( \ref{8v-recomb2}\right) $
we get the identity $\left( \ref{8v-T-R1}\right) $.

Let us observe now that by using the identities\footnote{%
See the equations 8.182-1, 8.182-3 and 8.183-5, 8.183-6 at page 878 of \cite%
{FarXYZTables of integrals}.}:%
\begin{eqnarray}
&&\theta _{1}(x+\pi |2\omega )=-\theta _{1}(x|2\omega ),\text{ \ }\theta
_{1}(x+\pi \omega |2\omega )=ie^{-i\left( \lambda +\pi w/2\right) }\theta
_{4}(x|2\omega ) \\
&&\theta _{4}(x+\pi |2\omega )=\theta _{4}(x|2\omega ),\text{ \ \ \ }\theta
_{4}(x+\pi \omega |2\omega )=ie^{-i\left( \lambda +\pi w/2\right) }\theta
_{1}(x|2\omega )
\end{eqnarray}%
it is simple to show that the coefficients of the 8-vertex $R$-matrix
satisfy the following transformation properties:%
\begin{align}
\text{a}(\lambda +\pi \omega )& =-e^{-i\left( 2\lambda +\pi w\right)
}e^{-i\eta }\text{b}(\lambda ),\text{ \ a}(\lambda +\pi )=-\text{a}(\lambda
),\text{ \ d}(\lambda +\pi )=\text{d}(\lambda ), \\
\text{c}(\lambda +\pi \omega )& =-e^{-i\left( 2\lambda +\pi w\right)
}e^{-i\eta }\text{d}(\lambda ),\text{ \ b}(\lambda +\pi )=-\text{b}(\lambda
),\text{ \ c}(\lambda +\pi )=\text{c}(\lambda ),
\end{align}%
which are equivalent to the following identities on the 8-vertex $R$-matrix:%
\begin{equation}
R_{0a}^{\mathsf{(8V)}}(\lambda +\pi \omega )=-e^{-i\left( 2\lambda +\pi
w\right) }e^{-i\eta }\sigma _{0}^{(x)}R_{0a}^{\mathsf{(8V)}}(\lambda )\sigma
_{0}^{(x)},\text{ \ }R_{0a}^{\mathsf{(8V)}}(\lambda +\pi )=-\sigma
_{0}^{(z)}R_{0a}^{\mathsf{(8V)}}(\lambda )\sigma _{0}^{(z)}.
\end{equation}%
Then, the monodromy matrix satisfy the identities:%
\begin{eqnarray}
\mathsf{M}_{0}^{\mathsf{(8V)}}(\lambda +\pi w) &=&\left( -e^{-i\left(
2\lambda +\pi w\right) }\right) ^{\mathsf{N}}e^{-2i\left( t_{\text{\textbf{0}%
}}-\sum_{a=1}^{\mathsf{N}}\xi _{a}\right) }\sigma _{0}^{(x)}\mathsf{M}_{0}^{%
\mathsf{(8V)}}(\lambda )\sigma _{0}^{(x)}, \\
\mathsf{M}_{0}^{\mathsf{(8V)}}(\lambda +\pi ) &=&\left( -1\right) ^{\mathsf{N%
}}\sigma _{0}^{(z)}\mathsf{M}_{0}^{\mathsf{(8V)}}(\lambda )\sigma _{0}^{(z)},
\end{eqnarray}%
from which $\left( \ref{8v-T-R1}\right) $ follows by the cyclicity of the
trace. The formula $\left( \ref{Eigenv8v-ch1}\right) $ is the interpolation formula 
\cite{FarXYZPRL-08}\ for elliptic polynomials of degree $\mathsf{N}$ and
character $e^{-2i\left( t_{\text{\textbf{0}}}-\sum_{a=1}^{\mathsf{N}}\xi
_{a}\right) }$ while $\left( \ref{Eigenv8v-ch2}\right) $ is the rewiting of $%
\left( \ref{8v-T-R1}\right) $ for the periodic 8-vertex eigenvalues.
\end{proof}

\textbf{Remark 1.} Let us notice that the above results hold for both the
even and the odd quantum chains. In the odd case these results implies that
the set of the periodic 8-vertex transfer matrix eigenvalues is contained in
the set of the antiperiodic dynamical 6-vertex transfer matrix eigenvalues.
As we have proven that the antiperiodic dynamical 6-vertex transfer matrix
has simple spectrum for general values of the inhomogeneities, then
differences in these sets of eigenvalues can be only produced from a
degeneracy of the periodic 8-vertex transfer matrix eigenvalues. A
preliminary analysis based on direct diagonalization shows that the periodic 8-vertex transfer matrix spectrum is
double degenerate for $\mathsf{N}=1,$ $3$.\ These cases are considered in
the appendix where it is also given a direct verification of the statement
proven in Theorem \ref{FarXYZC:T-eigenstates} that the system of equations
defined by $\left( \ref{FarXYZset-t}\right) $ and $\left( \ref%
{FarXYZI-Functional-eq}\right) $ characterize the complete set of the
antiperiodic dynamical 6-vertex transfer matrix eigenvalues.

\subsection{Gauge transformation from 8-vertex to dynamical 6-vertex models}

In the case of an even chain the spectral problem of the 8-vertex transfer
matrix $\mathsf{T}^{\mathsf{(8V)}}(\lambda )$ has been reduced to the one of
the periodic dynamical 6-vertex transfer matrix by the gauge transformations
introduced by Baxter in \cite{FarXYZBa72-2}. In detail the following gauge
transformation exists: 
\begin{equation}\label{8V-6VD-GT0}
R_{0a}^{\mathsf{(8V)}}(\lambda _{12})S_{0}(\lambda _{1}|\tau )S_{a}(\lambda
_{2}|\tau +\eta \sigma _{0}^{z})=S_{a}(\lambda _{2}|\tau )S_{0}(\lambda
_{1}|\tau +\eta \sigma _{a}^{z})R_{0a}^{\mathsf{(6VD)}}(\lambda _{12}|\tau ),
\end{equation}%
which for the monodromy matrices reads:%
\begin{equation}
\mathsf{M}_{0}^{\mathsf{(8V)}}(\lambda )S_{0}(\lambda |\tau )S_{q}(\tau
+\eta \sigma _{0}^{z})=S_{q}(\tau )S_{0}(\lambda |\tau +\eta \text{$\mathsf{S%
}$})\mathsf{M}_{0}^{\mathsf{(6VD)}}(\lambda |\tau ),  \label{FarXYZp-gauge}
\end{equation}%
where:%
\begin{equation}
S_{0}(\lambda |\tau )\equiv \left( 
\begin{array}{cc}
\theta _{2}(-\lambda +\tau |2w) & \theta _{2}(\lambda +\tau |2w) \\ 
\theta _{3}(-\lambda +\tau |2w) & \theta _{3}(\lambda +\tau |2w)%
\end{array}%
\right) _{0},
\end{equation}%
and:%
\begin{equation}
S_{q}(\tau )\equiv S_{1}(\xi _{1}|\tau )\cdots S_{\mathsf{N}}(\xi _{\mathsf{N%
}}|\tau +\eta \sum_{a=1}^{\mathsf{N}-1}\sigma _{a}^{z}).
\end{equation}
Then we can prove:

\begin{lemma}
\label{FarXYZsimilarity-L}In a chain with an odd number of quantum sites the
periodic 8-vertex transfer matrix has the following right action on the
states of $\mathbb{\bar{D}}_{\mathsf{(6VD)},\mathsf{N}}^{\mathcal{R}}$:%
\begin{equation}
\mathsf{T}^{\mathsf{(8V)}}(\lambda )S_{q}(\tau )=S_{q}(\tau -\eta )\mathsf{C}%
(\lambda |\tau -\eta )+S_{q}(\tau +\eta )\mathsf{B}(\lambda |\tau +\eta ).
\label{FarXYZP-ris-r}
\end{equation}
\end{lemma}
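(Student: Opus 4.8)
The plan is to feed the monodromy gauge relation \rf{FarXYZp-gauge} into the trace defining $\mathsf{T}^{\mathsf{(8V)}}$. First I would solve \rf{FarXYZp-gauge} for the $8$-vertex monodromy matrix by inverting the two right factors: $S_q(\tau+\eta\sigma_0^z)$ is diagonal in the auxiliary space $0$, equal to $\mathrm{diag}(S_q(\tau+\eta),S_q(\tau-\eta))_0$, so its inverse is immediate, while $S_0(\lambda|\tau)$ is an invertible $2\times2$ matrix of theta functions. This gives
\[
\mathsf{M}_0^{\mathsf{(8V)}}(\lambda)=S_q(\tau)\,S_0(\lambda|\tau+\eta\mathsf{S})\,\mathsf{M}_0^{\mathsf{(6VD)}}(\lambda|\tau)\,[S_q(\tau+\eta\sigma_0^z)]^{-1}[S_0(\lambda|\tau)]^{-1}.
\]
Then I would form $\mathsf{T}^{\mathsf{(8V)}}(\lambda)S_q(\tau)=\mathrm{tr}_0[\mathsf{M}_0^{\mathsf{(8V)}}(\lambda)]\,S_q(\tau)=\mathrm{tr}_0[\mathsf{M}_0^{\mathsf{(8V)}}(\lambda)\,S_q(\tau)]$, moving $S_q(\tau)$ inside the partial trace (it is a scalar in space $0$) and rearranging the space-$0$ factors. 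All the reorderings I need are legitimate because $S_0(\lambda|\tau)$ and $S_q(\tau)$ depend on the dynamical variable only through $\tau$, which commutes with the $\tau$-independent operator $\mathsf{M}^{\mathsf{(8V)}}$ and with every other function of $\tau$ and $\mathsf{S}$; this is what justifies the cyclic manipulation under $\mathrm{tr}_0$.

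The decisive step is to restrict to $\mathbb{\bar{D}}_{\mathsf{(6VD)},\mathsf{N}}^{\mathcal{R}}$, where by definition $\mathsf{S}_{\tau}=\eta\mathsf{S}+2\tau=0$, hence $\tau+\eta\mathsf{S}=-\tau$. Using that $\theta_2$ and $\theta_3$ are even, the two columns of $S_0(\lambda|-\tau)$ are exactly the two columns of $S_0(\lambda|\tau)$ in the opposite order, so that
\[
S_0(\lambda|\tau+\eta\mathsf{S})=S_0(\lambda|-\tau)=S_0(\lambda|\tau)\,\sigma_0^x\quad\text{on }\mathbb{\bar{D}}_{\mathsf{(6VD)},\mathsf{N}}^{\mathcal{R}},
\]
and therefore $[S_0(\lambda|\tau)]^{-1}S_0(\lambda|\tau+\eta\mathsf{S})=\sigma_0^x$. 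This is the algebraic shadow of the $\sigma^x$-twist defining the antiperiodic monodromy \rf{FarXYZanti-p-6vD-M}: it converts $\mathsf{M}_0^{\mathsf{(6VD)}}$ into $\sigma_0^x\mathsf{M}_0^{\mathsf{(6VD)}}=\mathsf{\bar{M}}_0^{\mathsf{(6VD)}}$, whose auxiliary-space diagonal is $(\mathsf{C},\mathsf{B})$ rather than $(\mathsf{A},\mathsf{D})$. Consequently, after taking $\mathrm{tr}_0$ the diagonal generators $\mathsf{A},\mathsf{D}$ drop out and only $\mathsf{B}$ and $\mathsf{C}$ survive, which already produces the structure of the right-hand side of \rf{FarXYZP-ris-r}.

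It then remains to reinstate the dynamical shifts carried by $[S_q(\tau+\eta\sigma_0^z)]^{-1}$. In the two surviving matrix elements the upper and lower components of this diagonal factor supply $S_q(\tau+\eta)^{-1}$ and $S_q(\tau-\eta)^{-1}$ respectively; combined with the overall $S_q(\tau)$ and with the commutation of the $\mathsf{S}$-dependence of $S_0$ past $\mathsf{B},\mathsf{C}$, which shift $\mathsf{S}$ by $\pm2$ and $\tau$ by $\mp\eta$ according to \rf{FarXYZXsc-S-spin}, \rf{FarXYZss-ABCD-tau-com} and \rf{FarXYZDyn-op-comm}, these should collapse so that $\mathsf{C}$ is dressed on the left by $S_q(\tau-\eta)$ and $\mathsf{B}$ by $S_q(\tau+\eta)$, with the arguments of $\mathsf{C},\mathsf{B}$ correspondingly shifted to $\tau-\eta$ and $\tau+\eta$, yielding exactly \rf{FarXYZP-ris-r}. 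The oddness of $\mathsf{N}$ enters only to ensure that $\mathbb{\bar{D}}_{\mathsf{(6VD)},\mathsf{N}}^{\mathcal{R}}$ is a genuine representation space (zero is not a $\tau$-eigenvalue, so the entries of \rf{FarXYZop-L} stay finite) and hence that all operators above are well defined there.

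I expect the real obstacle to be precisely this last bookkeeping. The substitution $S_0(\lambda|\tau+\eta\mathsf{S})=S_0(\lambda|\tau)\sigma_0^x$ holds on $\mathbb{\bar{D}}_{\mathsf{(6VD)},\mathsf{N}}^{\mathcal{R}}$, but inside the trace it is applied with operators ($\mathsf{B},\mathsf{C}$ and the factors $S_q(\tau\pm\eta)^{-1}$) to its right that move states off this subspace; one must therefore commute the $\mathsf{S}$-dependence past $\mathsf{B},\mathsf{C}$ \emph{before} invoking the constraint, and then check that the reassembly of $S_q(\tau)\,\mathsf{B}\,S_q(\tau-\eta)^{-1}S_q(\tau)$ and $S_q(\tau)\,\mathsf{C}\,S_q(\tau+\eta)^{-1}S_q(\tau)$ into the single factors $S_q(\tau+\eta)\mathsf{B}(\lambda|\tau+\eta)$ and $S_q(\tau-\eta)\mathsf{C}(\lambda|\tau-\eta)$ is exact, i.e.\ that the spurious inverse factors genuinely cancel. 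This cancellation is what makes \rf{FarXYZP-ris-r} meaningful despite $S_q$ being non-invertible on $\mathbb{\bar{D}}_{\mathsf{(6VD)},\mathsf{N}}^{\mathcal{R}}$, and it is where I would concentrate the detailed computation.
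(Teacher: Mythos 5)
You have assembled the right raw materials---the gauge relation \rf{FarXYZp-gauge}, the identity $S_{0}(\lambda |-\tau )=S_{0}(\lambda |\tau )\sigma _{0}^{x}$, and the fact that $\tau +\eta \mathsf{S}=-\tau $ on $\mathbb{\bar{D}}_{\mathsf{(6VD)},\mathsf{N}}^{\mathcal{R}}$---but your argument has a genuine gap, and it is exactly the one you flag at the end. In your arrangement $S_{0}(\lambda |\tau +\eta \mathsf{S})$ stands to the \emph{left} of $\mathsf{M}_{0}^{\mathsf{(6VD)}}(\lambda |\tau )$, with $[S_{q}(\tau +\eta \sigma _{0}^{z})]^{-1}$, $[S_{0}(\lambda |\tau )]^{-1}$ and $S_{q}(\tau )$ to its right; since $\mathsf{B}$ and $\mathsf{C}$ shift $\mathsf{S}$ by two units while commuting with $\tau $ (cf. \rf{FarXYZXsc-S-spin}, \rf{FarXYZss-ABCD-tau-com}), the states reaching that factor are no longer in $\mathbb{\bar{D}}_{\mathsf{(6VD)},\mathsf{N}}^{\mathcal{R}}$, so the substitution $S_{0}(\lambda |\tau +\eta \mathsf{S})\rightarrow S_{0}(\lambda |\tau )\sigma _{0}^{x}$ is not permitted there. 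Closing the gap means proving operator identities of the type $S_{q}(\tau )\mathsf{C}(\lambda |\tau )S_{q}(\tau +\eta )^{-1}S_{q}(\tau )=S_{q}(\tau -\eta )\mathsf{C}(\lambda |\tau -\eta )$; these are not mere cancellations but essentially the content of the lemma itself, and you defer precisely this computation. Two further points are shaky: (i) your blanket justification of cyclicity is insufficient---for a \emph{partial} trace, $tr_{0}[X_{0}Y_{0}]=tr_{0}[Y_{0}X_{0}]$ requires the entries of the moved factor to commute with the entries of everything else, which holds for matrices built from $\tau $ alone (such as $S_{0}(\lambda |\tau )^{\pm 1}$) but fails for $S_{q}$ (its local factors flip spins, so it commutes neither with functions of $\mathsf{S}$ nor with the monodromy entries) and for $S_{0}(\lambda |\tau +\eta \mathsf{S})$; (ii) your formula requires inverting $S_{q}(\tau \pm \eta )$, which is the delicate object here---the paper stresses that the projected gauge operator $\text{\textsc{S}}_{q}^{\mathcal{R}}$ has a nontrivial kernel---so the existence of these inverses on the unconstrained dynamical space would at least need an argument.

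The paper's proof avoids all of this by one structural move you do not make: instead of solving \rf{FarXYZp-gauge} for $\mathsf{M}_{0}^{\mathsf{(8V)}}$, it multiplies \rf{FarXYZp-gauge} by the inverses of the two \emph{monodromy} matrices, given by the quantum determinant formulae \rf{FarXYZRight-1-dyn-Mon} and \rf{Inv-8v-M}; the equal quantum determinants $\text{\textsc{a}}(\lambda )\text{\textsc{d}}(\lambda -\eta )$ cancel between the two sides, yielding the adjugate form \rf{FarXYZStep-1-g2}. In that form both auxiliary gauge factors sit on the same side of the monodromy: after using $S_{0}(\lambda |\tau )=S_{0}(\lambda |-\tau )\sigma _{0}^{x}$ (which simultaneously swaps the rows, placing $-\mathsf{C}$ and $-\mathsf{B}$ on the diagonal and turning $S_{q}(\tau +\eta \sigma _{0}^{z})$ into $S_{q}(\tau -\eta \sigma _{0}^{z})$) and one legitimate cyclic move of the pure-$\tau $ matrix $S_{0}(\lambda +\eta |-\tau )$, the product $[S_{0}(\lambda +\eta |\tau +\eta \mathsf{S})]^{-1}S_{0}(\lambda +\eta |-\tau )$ ends up \emph{rightmost} in \rf{G-tr-r}, acting directly on $\mathbb{\bar{D}}_{\mathsf{(6VD)},\mathsf{N}}^{\mathcal{R}}$, where the constraint collapses it to the identity, while $\theta (\tau +\eta \mathsf{S})/\theta (\tau )=-1$ fixes the signs; the trace of $S_{q}(\tau -\eta \sigma _{0}^{z})$ against the row-swapped adjugate then gives \rf{FarXYZP-ris-r} with no commutation of $\mathsf{S}$-dependent factors past $\mathsf{B},\mathsf{C}$ and no inversion of $S_{q}$ whatsoever. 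If you wish to keep your route, the postponed bookkeeping is where the entire difficulty lives; the monodromy-inversion trick is the efficient way around it.
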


\begin{proof}
To prove (\ref{FarXYZP-ris-r}) let us first rewrite the gauge transformation
(\ref{FarXYZp-gauge}) as it follows:%
\begin{align}
& S_{0}(\lambda +\eta |\tau )S_{q}(\tau +\eta \sigma _{0}^{z})\left( 
\begin{array}{cc}
\mathsf{D}(\lambda |\tau +\eta ) & -\mathsf{B}(\lambda |\tau +\eta ) \\ 
-\mathsf{C}(\lambda |\tau -\eta ) & \mathsf{A}(\lambda |\tau -\eta )%
\end{array}%
\right) \frac{\theta (\tau +\eta \text{$\mathsf{S}$})}{\theta (\tau )}\left.
=\right.  \notag \\
& \text{ \ \ \ \ \ \ \ \ \ \ \ \ \ \ \ \ \ \ \ \ \ \ \ \ \ \ \ }\left.
=\right. \left( 
\begin{array}{cc}
\mathsf{D}^{\mathsf{(8V)}}(\lambda ) & -\mathsf{B}^{\mathsf{(8V)}}(\lambda )
\\ 
-\mathsf{C}^{\mathsf{(8V)}}(\lambda ) & \mathsf{A}^{\mathsf{(8V)}}(\lambda )%
\end{array}%
\right) S_{q}(\tau )S_{0}(\lambda +\eta |\tau +\eta \text{$\mathsf{S}$}),
\label{FarXYZStep-1-g2}
\end{align}

obtained by multiply both sides of (\ref{FarXYZp-gauge}) from the right by
the inverse of $\mathsf{M}_{0}^{\mathsf{(6VD)}}(\lambda _{1}|\tau )$, as
defined in $\left( \ref{FarXYZRight-1-dyn-Mon}\right) $, and from the left by the inverse of $\mathsf{M}_{0}^{\mathsf{(8V)}}(\lambda )$, as defined in 
$\left( \ref{Inv-8v-M}\right) $, and finally doing the change of variable $%
\lambda \rightarrow \lambda +\eta $. The gauge transformation $\left( \ref%
{FarXYZStep-1-g2}\right) $ can be further rewritten as it follows:%
\begin{align}
& S_{0}(\lambda +\eta |-\tau )S_{q}(\tau -\eta \sigma _{0}^{z})\left( 
\begin{array}{cc}
-\mathsf{C}(\lambda |\tau -\eta ) & \mathsf{A}(\lambda |\tau -\eta ) \\ 
\mathsf{D}(\lambda |\tau +\eta ) & -\mathsf{B}(\lambda |\tau +\eta )%
\end{array}%
\right) \frac{\theta (\tau +\eta \text{$\mathsf{S}$})}{\theta (\tau )}\left.
=\right.   \notag \\
& \text{ \ \ \ \ \ \ \ \ \ \ \ \ \ \ \ \ \ \ \ \ \ \ \ \ \ \ \ }\left.
=\right. \left( 
\begin{array}{cc}
\mathsf{D}^{\mathsf{(8V)}}(\lambda ) & -\mathsf{B}^{\mathsf{(8V)}}(\lambda )
\\ 
-\mathsf{C}^{\mathsf{(8V)}}(\lambda ) & \mathsf{A}^{\mathsf{(8V)}}(\lambda )%
\end{array}%
\right) S_{q}(\tau )S_{0}(\lambda +\eta |\tau +\eta \text{$\mathsf{S}$}),
\end{align}%
by using the identity:%
\begin{equation}
S_{0}(\lambda |-\tau )=S_{0}(-\lambda |\tau )=S_{0}(\lambda |\tau )\sigma
_{0}^{x}.
\end{equation}%
We can take now the trace w.r.t. the auxiliary space $0$ and we get:%
\begin{eqnarray}
\mathsf{T}^{\mathsf{(8V)}}(\lambda )S_{q}(\tau ) &=&tr_{0}\left\{ S_{q}(\tau
-\eta \sigma _{0}^{z})\left( 
\begin{array}{cc}
-\mathsf{C}(\lambda |\tau -\eta ) & \mathsf{A}(\lambda |\tau -\eta ) \\ 
\mathsf{D}(\lambda |\tau +\eta ) & -\mathsf{B}(\lambda |\tau +\eta )%
\end{array}%
\right) \right.   \notag \\
&&\times \left. \frac{\theta (\tau +\eta \text{$\mathsf{S}$})}{\theta (\tau )%
}\left[ S_{0}(\lambda +\eta |\tau +\eta \text{$\mathsf{S}$})\right]
^{-1}S_{0}(\lambda +\eta |-\tau )\right\} ,  \label{G-tr-r}
\end{eqnarray}%
where we have used the commutativity:%
\begin{equation}
\lbrack \mathsf{\bar{M}}_{0}^{\mathsf{(6VD)}}(\lambda |\tau ),\tau ]=0
\end{equation}%
and the cyclicity of the trace to move $S_{0}(\lambda +\eta |-\tau )$. It is
central to remark that $S_{0}(\lambda +\eta |\tau +\eta \mathsf{S}$$)$ is an
invertible matrix in the auxiliary space on any state of $\mathbb{\bar{D}}_{%
\mathsf{(6VD)},\mathsf{N}}^{\mathcal{R}}$ and so the identity (\ref{G-tr-r})
is well defined on $\mathbb{\bar{D}}_{\mathsf{(6VD)},\mathsf{N}}^{\mathcal{R}%
}$ and using it we get our result (\ref{FarXYZP-ris-r}) being the
eigenvalues of $\tau $ and $-\tau -\eta \mathsf{S}$ coinciding on any left
state of $\mathbb{\bar{D}}_{\mathsf{(6VD)},\mathsf{N}}^{\mathcal{R}}$.
\end{proof}

\subsection{On the periodic 8-vertex transfer matrix spectrum by SOV}

\subsubsection{Connections between periodic 8-vertex and antiperiodic
dynamical 6-vertex spectrum}

Let us define $\mathbb{\bar{P}}_{\mathsf{N}}^{\mathcal{R}}\mathbb{\equiv }%
\sum_{\mathsf{s}=-\mathsf{N}}^{\mathsf{N}}\langle t(\mathsf{s})|$, by the
definition of scalar product given in Section \ref{FarXYZRep-space}, the
action of $\mathbb{\bar{P}_{\mathsf{N}}^{\mathcal{R}}}$ reduces the
dynamical-spin vector space $\mathbb{D}_{\mathsf{(6VD)},\mathsf{N}}^{%
\mathcal{R}}$ to the 2$^{\mathsf{N}}$-dimensional spin vector space$\ 
\mathbb{S}_{\mathsf{N}}^{\mathcal{R}}=\mathbb{\bar{P}}_{\mathsf{N}}^{%
\mathcal{R}}\mathbb{\bar{D}}_{\mathsf{(6VD)},\mathsf{N}}^{\mathcal{R}}$ and
relates the generic vectors of their basis by $\mathbb{\bar{P}}_{\mathsf{N}%
}^{\mathcal{R}}\otimes _{n=1}^{\mathsf{N}}|n,h_{n}\rangle \otimes |t_{\bold{\text{h}}}\rangle =\otimes _{n=1}^{\mathsf{N}}|n,h_{n}\rangle $. Moreover, let us
introduce the pure spin operator \textsc{S}$_{q}^{\mathcal{R}}\in $ End($%
\mathbb{S}_{\mathsf{N}}^{\mathcal{R}}$) by the following actions:%
\begin{eqnarray}
\text{\textsc{S}}_{q}^{\mathcal{R}}\otimes _{n=1}^{\mathsf{N}%
}|n,h_{n}\rangle  &\equiv &S_{1}(\xi _{1}|-\frac{\eta }{2}\sum_{a=1}^{%
\mathsf{N}}\sigma _{a}^{z})\cdots S_{n}(\xi _{n}|\frac{\eta }{2}%
\sum_{a=1}^{n-1}\sigma _{a}^{z}-\frac{\eta }{2}\sum_{a=n}^{\mathsf{N}}\sigma
_{a}^{z})  \notag \\
&&\cdots S_{\mathsf{N}}(\xi _{\mathsf{N}}|\frac{\eta }{2}\sum_{a=1}^{\mathsf{%
N}-1}\sigma _{a}^{z}-\frac{\eta }{2}\sigma _{\mathsf{N}}^{z})\otimes _{n=1}^{%
\mathsf{N}}|n,h_{n}\rangle ,
\end{eqnarray}%
on the spin basis of the pure spin quantum space $\mathbb{S}_{\mathsf{N}}^{%
\mathcal{R}}$, then we have:

\begin{proposition}
\label{FarXYZProjector action}\textsf{\newline
I)} On any vector of $\mathbb{\bar{D}}_{\mathsf{(6VD)},\mathsf{N}}^{\mathcal{%
R}}$ it holds:%
\begin{equation}
\mathbb{\bar{P}}_{\mathsf{N}}^{\mathcal{R}}S_{q}(\tau )=\text{\textsc{S}}%
_{q}^{\mathcal{R}}\mathbb{\bar{P}}_{\mathsf{N}}^{\mathcal{R}}.
\label{Id-proj-S}
\end{equation}%
\textsf{II)} The following identities hold:%
\begin{equation}
\mathbb{\bar{P}_{\mathsf{N}}^{\mathcal{R}}}S_{q}(\tau -\eta )\mathsf{C}%
(\lambda |\tau -\eta )=\text{\textsc{S}}_{q}^{\mathcal{R}}\mathbb{\bar{P}_{%
\mathsf{N}}^{\mathcal{R}}}\mathcal{C}(\lambda |\tau ),\text{ \ \ }\mathbb{%
\bar{P}_{\mathsf{N}}^{\mathcal{R}}}S_{q}(\tau +\eta )\mathsf{B}(\lambda
|\tau +\eta )=\text{\textsc{S}}_{q}^{\mathcal{R}}\mathbb{\bar{P}_{\mathsf{N}%
}^{\mathcal{R}}}\mathcal{B}(\lambda |\tau ),  \label{FarXYZRight-projection}
\end{equation}%
on any vector of $\mathbb{\bar{D}}_{\mathsf{(6VD)},\mathsf{N}}^{\mathcal{R}}$
and so it also follows:%
\begin{equation}
\mathsf{T}^{\mathsf{(8V)}}(\lambda )\text{\textsc{S}}_{q}^{\mathcal{R}}%
\mathbb{\bar{P}_{\mathsf{N}}^{\mathcal{R}}}=\text{\textsc{S}}_{q}^{\mathcal{R%
}}\mathbb{\bar{P}_{\mathsf{N}}^{\mathcal{R}}}\overline{\mathcal{T}}^{\mathsf{%
(6VD)}}(\lambda |\tau ).  \label{FarXYZRis-R}
\end{equation}
\end{proposition}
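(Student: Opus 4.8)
The plan is to obtain the final identity \eqref{FarXYZRis-R} by applying the covector $\mathbb{\bar{P}}_{\mathsf{N}}^{\mathcal{R}}$ on the left of the operator relation \eqref{FarXYZP-ris-r} of Lemma \ref{FarXYZsimilarity-L}, and then to rewrite the three resulting terms using parts \textsf{I} and \textsf{II}. Since $\mathsf{T}^{\mathsf{(8V)}}(\lambda )$ acts only on the spin factor $(\mathbb{C}^{2})^{\otimes \mathsf{N}}$ while $\mathbb{\bar{P}}_{\mathsf{N}}^{\mathcal{R}}$ acts only on the dynamical factor, the two commute, so the l.h.s. of \eqref{FarXYZP-ris-r} becomes $\mathsf{T}^{\mathsf{(8V)}}(\lambda )\,\mathbb{\bar{P}}_{\mathsf{N}}^{\mathcal{R}}S_{q}(\tau )=\mathsf{T}^{\mathsf{(8V)}}(\lambda )\,\text{\textsc{S}}_{q}^{\mathcal{R}}\mathbb{\bar{P}}_{\mathsf{N}}^{\mathcal{R}}$ by part \textsf{I}; the two terms on the r.h.s. become $\text{\textsc{S}}_{q}^{\mathcal{R}}\mathbb{\bar{P}}_{\mathsf{N}}^{\mathcal{R}}\mathcal{C}(\lambda |\tau )$ and $\text{\textsc{S}}_{q}^{\mathcal{R}}\mathbb{\bar{P}}_{\mathsf{N}}^{\mathcal{R}}\mathcal{B}(\lambda |\tau )$ by part \textsf{II}; and their sum is $\text{\textsc{S}}_{q}^{\mathcal{R}}\mathbb{\bar{P}}_{\mathsf{N}}^{\mathcal{R}}\overline{\mathcal{T}}^{\mathsf{(6VD)}}(\lambda |\tau )$ because $\overline{\mathcal{T}}^{\mathsf{(6VD)}}=\mathcal{B}+\mathcal{C}$. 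Thus the real content is parts \textsf{I} and \textsf{II}.

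For part \textsf{I} I would argue on the right $\tau$-eigenbasis $v_{\mathbf{h}}\equiv \otimes _{n}|n,h_{n}\rangle \otimes |t_{\mathbf{h}}\rangle $ of $\mathbb{\bar{D}}_{\mathsf{(6VD)},\mathsf{N}}^{\mathcal{R}}$. The operator $S_{q}(\tau )$ is built from the matrices $S_{n}(\xi _{n}|\tau +\eta \sum_{a<n}\sigma _{a}^{z})$, whose entries are theta functions of $\tau $ and of the $\sigma _{a}^{z}$ with $a<n$ only; in particular it contains no shift operator $\mathsf{T}_{\tau }^{\pm }$, hence it preserves the dynamical index and acts on $v_{\mathbf{h}}$ as the pure-spin operator $S_{q}(t_{\mathbf{h}})$ obtained by freezing $\tau $ at its eigenvalue $t_{\mathbf{h}}=-\tfrac{\eta }{2}\mathsf{s}_{\mathbf{h}}$. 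Applying $\mathbb{\bar{P}}_{\mathsf{N}}^{\mathcal{R}}$ then simply strips the dynamical factor, so the identity to check is $S_{q}(t_{\mathbf{h}})|\mathbf{h}\rangle =\text{\textsc{S}}_{q}^{\mathcal{R}}|\mathbf{h}\rangle $, which reduces to the elementary relation $-\tfrac{\eta }{2}\sum_{a=1}^{\mathsf{N}}\sigma _{a}^{z}+\eta \sum_{a<n}\sigma _{a}^{z}=\tfrac{\eta }{2}\sum_{a<n}\sigma _{a}^{z}-\tfrac{\eta }{2}\sum_{a\ge n}\sigma _{a}^{z}$ between the dynamical argument of $S_{n}$ in $S_{q}(t_{\mathbf{h}})$ and the spin argument defining $\text{\textsc{S}}_{q}^{\mathcal{R}}$. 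It is convenient to record this in the slightly more general form $S_{q}(-\tfrac{\eta }{2}s)\,|w\rangle =\text{\textsc{S}}_{q}^{\mathcal{R}}|w\rangle $, valid for every spin vector $|w\rangle $ of total spin $\mathsf{S}=s$, since this is exactly what part \textsf{II} will need.

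For part \textsf{II} I would use the factorizations $\mathcal{C}(\lambda |\tau )=\mathsf{C}(\lambda |\tau )\mathsf{T}_{\tau }^{+}$ and $\mathcal{B}(\lambda |\tau )=\mathsf{B}(\lambda |\tau )\mathsf{T}_{\tau }^{-}$, together with $[\mathsf{C},\tau ]=[\mathsf{B},\tau ]=0$, $[\mathsf{C},\mathsf{S}]=-2\mathsf{C}$, $[\mathsf{B},\mathsf{S}]=2\mathsf{B}$, and the fact (already used to prove the commutativity of $\overline{\mathcal{T}}^{\mathsf{(6VD)}}$) that $\mathcal{C}$ and $\mathcal{B}$ commute with $\mathsf{S}_{\tau }=\eta \mathsf{S}+2\tau $ and hence preserve $\mathbb{\bar{D}}_{\mathsf{(6VD)},\mathsf{N}}^{\mathcal{R}}$. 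Acting again on $v_{\mathbf{h}}$ and tracking the $\tau $-eigenvalue: on the r.h.s. $\mathsf{T}_{\tau }^{+}$ moves the dynamical index so that $\mathsf{C}$ is evaluated at $\tau =t_{\mathbf{h}}-\eta $, giving $\mathbb{\bar{P}}_{\mathsf{N}}^{\mathcal{R}}\mathcal{C}(\lambda |\tau )v_{\mathbf{h}}=\mathsf{C}(\lambda |t_{\mathbf{h}}-\eta )|\mathbf{h}\rangle $; on the l.h.s. $\mathsf{C}(\lambda |\tau -\eta )$ commutes with $\tau $ and so leaves the dynamical index at $\mathsf{s}_{\mathbf{h}}\in \{-\mathsf{N},\dots ,\mathsf{N}\}$, producing the same spin vector $|w\rangle \equiv \mathsf{C}(\lambda |t_{\mathbf{h}}-\eta )|\mathbf{h}\rangle $, after which $S_{q}(\tau -\eta )$ acts as $S_{q}(t_{\mathbf{h}}-\eta )$ and $\mathbb{\bar{P}}_{\mathsf{N}}^{\mathcal{R}}$ strips the dynamical factor. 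The two sides coincide provided $S_{q}(t_{\mathbf{h}}-\eta )|w\rangle =\text{\textsc{S}}_{q}^{\mathcal{R}}|w\rangle $; but $|w\rangle $ has total spin $\mathsf{s}_{\mathbf{h}}+2$ and $t_{\mathbf{h}}-\eta =-\tfrac{\eta }{2}(\mathsf{s}_{\mathbf{h}}+2)$, so this is precisely the generalized form of part \textsf{I}. The $\mathsf{B}$-identity is handled identically, with $\mathsf{T}_{\tau }^{-}$, spin shift $-2$, and $\tau \to t_{\mathbf{h}}+\eta $.

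The main obstacle I expect is conceptual rather than computational, and it is already visible in part \textsf{I}: because $S_{q}$ mixes the local spins, the naive substitution $\tau =-\tfrac{\eta }{2}\mathsf{S}$ inside $S_{q}(\tau )$ is ambiguous, since $\mathsf{S}$ does not commute with the $S_{n}$. The role of $\text{\textsc{S}}_{q}^{\mathcal{R}}$ is exactly to resolve this by freezing the arguments at the eigenvalues of the incoming spin configuration, and one must check that this is consistent with the ordering of the product $S_{1}\cdots S_{\mathsf{N}}$ (the argument of $S_{n}$ involves only $\sigma _{a<n}^{z}$ in $S_{q}(\tau )$, so no self-ordering arises, and the $\sigma _{a\ge n}^{z}$ introduced by the substitution are read off the fixed $\mathbf{h}$). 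The second delicate point is the bookkeeping that collapses part \textsf{II} onto part \textsf{I}: one must verify that the $\mp \eta $ shift carried by $S_{q}(\tau \mp \eta )$ is exactly compensated by the $\pm 2$ shift of $\mathsf{S}$ produced by $\mathsf{C}/\mathsf{B}$, so that the matching condition $\tau =-\tfrac{\eta }{2}\mathsf{S}$ underlying part \textsf{I} is restored on $|w\rangle $. Finally, one checks that $\mathbb{\bar{P}}_{\mathsf{N}}^{\mathcal{R}}$ never loses a boundary term of its finite sum: on the l.h.s. the dynamical index stays at $\mathsf{s}_{\mathbf{h}}\in \{-\mathsf{N},\dots ,\mathsf{N}\}$, while whenever the spin shift would leave this range the spin-raising/lowering character of $\mathsf{C}/\mathsf{B}$ (equivalently the $\mathbb{\bar{D}}$-preservation of $\mathcal{C}/\mathcal{B}$) annihilates both sides simultaneously.
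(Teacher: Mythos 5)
Your proof is correct and follows essentially the same route as the paper's: part \textsf{I} by evaluation on the dynamical-spin basis, part \textsf{II} from the factorizations $\mathcal{C}(\lambda |\tau )=\mathsf{T}_{\tau }^{+}\mathsf{C}(\lambda |\tau -\eta )$ and $\mathcal{B}(\lambda |\tau )=\mathsf{T}_{\tau }^{-}\mathsf{B}(\lambda |\tau +\eta )$ together with the compensation of the $\mp \eta$ dynamical shift by the $\pm 2$ spin shift and the boundary annihilations $\mathcal{C}(\lambda |\tau )|\mathbf{0}\rangle =\mathcal{B}(\lambda |\tau )|\mathbf{1}\rangle =0$, and $\left( \ref{FarXYZRis-R}\right)$ by combining Lemma \ref{FarXYZsimilarity-L} with parts \textsf{I} and \textsf{II}. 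The only organizational difference is that you collapse part \textsf{II} onto a linearity-extended part \textsf{I} applied to the image vector $|w\rangle$, whereas the paper absorbs the residual shift $\mathsf{T}_{\tau }^{-}$ directly under the projector $\mathbb{\bar{P}}_{\mathsf{N}}^{\mathcal{R}}$ using those same two facts; the ingredients are identical.
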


\begin{proof}
The identity $\left( \ref{Id-proj-S}\right) $ follows by computing the
action on the generic elements of the dynamical-spin basis of $\mathbb{\bar{D%
}}_{\mathsf{(6VD)},\mathsf{N}}^{\mathcal{R}}$ and using the orthonormality
of these states.

Let us prove now $\left( \ref{FarXYZRight-projection}\right) $, we use first
the identities:%
\begin{eqnarray}
\mathcal{C}(\lambda |\tau ) &=&\mathsf{C}(\lambda |\tau )\mathsf{T}_{\tau
}^{+}=\mathsf{T}_{\tau }^{+}\mathsf{C}(\lambda |\tau -\eta ), \\
\mathcal{B}(\lambda |\tau ) &=&\mathsf{B}(\lambda |\tau )\mathsf{T}_{\tau
}^{-}=\mathsf{T}_{\tau }^{-}\mathsf{B}(\lambda |\tau +\eta ),
\end{eqnarray}%
to write:%
\begin{equation}
S_{q}(\tau -\eta )\mathsf{C}(\lambda |\tau -\eta )=\mathsf{T}_{\tau
}^{-}S_{q}(\tau )\mathcal{C}(\lambda |\tau ),\text{ \ \ }S_{q}(\tau +\eta )%
\mathsf{B}(\lambda |\tau +\eta )=\mathsf{T}_{\tau }^{+}\mathcal{B}(\lambda
|\tau )S_{q}(\tau ).
\end{equation}%
Then the following identities hold:%
\begin{align}
\mathbb{\bar{P}_{\mathsf{N}}^{\mathcal{R}}}\mathsf{T}_{\tau }^{-}S_{q}(\tau )%
\mathcal{C}(\lambda |\tau )|h_{1},...,h_{\mathsf{N}}\rangle & =\sum_{\mathsf{%
s}=-\mathsf{N}}^{\mathsf{N}}\langle t(\mathsf{s})-\eta |S_{q}(t_{\text{%
\textbf{h}}}-\eta )\mathcal{C}(\lambda |t_{\text{\textbf{h}}})|h_{1},...,h_{%
\mathsf{N}}\rangle  \label{FarXYZprojectC-1} \\
& =\sum_{\mathsf{s}=-\mathsf{N}}^{\mathsf{N}}\langle t(\mathsf{s})|S_{q}(t_{%
\text{\textbf{h}}}-\eta )\mathcal{C}(\lambda |t_{\text{\textbf{h}}%
})|h_{1},...,h_{\mathsf{N}}\rangle  \label{FarXYZprojectC-2} \\
& =\mathbb{\bar{P}_{\mathsf{N}}^{\mathcal{R}}}S_{q}(\tau )\mathcal{C}%
(\lambda |\tau )|h_{1},...,h_{\mathsf{N}}\rangle \\
& =\langle h_{1},...,h_{\mathsf{N}}|\mathcal{C}(\lambda |\tau )\mathbb{\bar{P%
}_{\mathsf{N}}^{\mathcal{L}}}\left[ \text{\textsc{S}}_{q}^{\mathcal{L}}%
\right] ^{-1},
\end{align}%
where $\left( \ref{FarXYZprojectC-2}\right) $ is equal to $\left( \ref%
{FarXYZprojectC-1}\right) $ being $\mathcal{C}(\lambda |\tau
)|h_{1}=0,...,h_{\mathsf{N}}=0\rangle $ zero while the last equality follows
directly by $\left( \ref{Id-proj-S}\right) $ being $\mathcal{C}(\lambda
|\tau )|h_{1},...,h_{\mathsf{N}}\rangle \in \mathbb{\bar{D}}_{\mathsf{(6VD)},%
\mathsf{N}}^{\mathcal{R}}$; similarly one can prove the second identity in $%
\left( \ref{FarXYZRight-projection}\right) $.

The identity $\left( \ref{FarXYZRis-R}\right) $ then is proven by the
following set of equalities which hold on $\mathbb{\bar{D}}_{\mathsf{(6VD)},%
\mathsf{N}}^{\mathcal{R}}$:%
\begin{eqnarray}
&&\mathsf{T}^{\mathsf{(8V)}}(\lambda )\text{\textsc{S}}_{q}^{\mathcal{R}}%
\mathbb{\bar{P}_{\mathsf{N}}^{\mathcal{R}}}\underset{\left( \ref{Id-proj-S}%
\right) }{=}\mathbb{\bar{P}_{\mathsf{N}}^{\mathcal{R}}}\mathsf{T}^{\mathsf{%
(8V)}}(\lambda )S_{q}(\tau ) \\
&&\underset{\left( \ref{Id-proj-S}\right) }{=}\mathbb{\bar{P}_{\mathsf{N}}^{%
\mathcal{R}}}S_{q}(\tau -\eta )\mathsf{C}(\lambda |\tau -\eta )+\mathbb{\bar{%
P}_{\mathsf{N}}^{\mathcal{R}}}S_{q}(\tau +\eta )\mathsf{B}(\lambda |\tau
+\eta ) \\
&&\underset{\left( \ref{FarXYZRight-projection}\right) }{=}\text{\textsc{S}}%
_{q}^{\mathcal{R}}\mathbb{\bar{P}_{\mathsf{N}}^{\mathcal{R}}}\overline{%
\mathcal{T}}^{\mathsf{(6VD)}}(\lambda |\tau ).
\end{eqnarray}
\end{proof}

\subsubsection{On the periodic 8-vertex transfer matrix spectrum by SOV in
odd chains}

It is central to remark that the operator \textsc{S}$_{q}^{\mathcal{R}}\in $
End($\mathbb{S}_{\mathsf{N}}^{\mathcal{R}}$) is not invertible in $\mathbb{S}%
_{\mathsf{N}}^{\mathcal{R}}$. This statement is simply verified observing
that the subspace of $\mathbb{S}_{\mathsf{N}}^{\mathcal{R}}$ generated by
the vectors:%
\begin{equation}
\otimes _{n=1}^{\mathsf{N}-1}|n,h_{n}\rangle \otimes (|\mathsf{N},h_{\mathsf{%
N}}=1\rangle -|\mathsf{N},h_{\mathsf{N}}=0\rangle )\text{ \ for which }%
\sum_{a=1}^{\mathsf{N}-1}h_{a}=\frac{\mathsf{N}-1}{2},
\end{equation}%
belongs to the kernel of \textsc{S}$_{q}^{\mathcal{R}}$ as all these states
are clearly annihilated by the action of the operator $S_{\mathsf{N}}(\xi _{%
\mathsf{N}}|\frac{\eta }{2}\sum_{a=1}^{\mathsf{N}-1}\sigma _{a}^{z}-\frac{%
\eta }{2}\sigma _{\mathsf{N}}^{z})$. Then, it is clear that we cannot use
the identity $\left( \ref{FarXYZRis-R}\right) $ to completely reconstruct
the spectrum of the periodic 8-vertex transfer matrix $\mathsf{T}^{\mathsf{%
(8V)}}(\lambda )$ by using the SOV characterization of the spectrum of the
antiperiodic dynamical 6-vertex transfer matrix $\overline{\mathcal{T}}^{%
\mathsf{(6VD)}}(\lambda |\tau )$. Nevertheless, we can use $\left( \ref%
{FarXYZRis-R}\right) $ to define a first criterion to select eigenvalues of $%
\overline{\mathcal{T}}^{\mathsf{(6VD)}}(\lambda |\tau )$ which are also
eigenvalues of $\mathsf{T}^{\mathsf{(8V)}}(\lambda )$ and to associate to
anyone of these eigenvalues just one eigenvector of $\mathsf{T}^{\mathsf{(8V)%
}}(\lambda )$. In particular, we can prove the following lemma:

\begin{lemma}
\label{Lemma-Ch-8v-SOV}Let us consider a $\mathsf{t}_{\mathsf{6VD}}(\lambda
)\in \Sigma _{\overline{\mathcal{T}}^{\mathsf{(6VD)}}}$ and let $|\mathsf{t}%
_{\mathsf{6VD}}\rangle $\ be the corresponding eigenvector of $\overline{%
\mathcal{T}}^{\mathsf{(6VD)}}(\lambda |\tau )$, then if the vector%
\begin{equation}
\mathbb{\bar{P}_{\mathsf{N}}^{\mathcal{R}}}|\mathsf{t}_{\mathsf{6VD}}\rangle
=\sum_{h_{1},...,h_{\mathsf{N}}=0}^{1}\prod_{a=1}^{\mathsf{N}}Q_{\mathsf{t}%
}(\xi _{a}^{(h_{a})})\det_{\mathsf{N}}\Theta _{ij}^{\left( \text{\textbf{h}}%
\right) }\underline{|h_{1},...,h_{\mathsf{N}}\rangle },\text{ \ with }%
\underline{|h_{1},...,h_{\mathsf{N}}\rangle }=\langle t_{\text{\textbf{h}}%
}|h_{1},...,h_{\mathsf{N}}\rangle
\end{equation}%
does not belong to the kernel of \textsc{S}$_{q}^{\mathcal{R}}$, $\mathsf{t}%
_{\mathsf{6VD}}(\lambda )$ is eigenvalue of $\mathsf{T}^{\mathsf{(8V)}%
}(\lambda )$ and \textsc{S}$_{q}^{\mathcal{R}}\mathbb{\bar{P}_{\mathsf{N}}^{%
\mathcal{R}}}|\mathsf{t}_{\mathsf{6VD}}\rangle $ is one corresponding
eigenvector.
\end{lemma}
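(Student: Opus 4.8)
The plan is to obtain the statement as a direct consequence of the intertwining relation \eqref{FarXYZRis-R}, applied to the given dynamical $6$-vertex eigenvector. First I would note that, by its definition \eqref{FarXYZeigenT-r-D}, the vector $|\mathsf{t}_{\mathsf{6VD}}\rangle$ belongs to $\mathbb{\bar{D}}_{\mathsf{(6VD)},\mathsf{N}}^{\mathcal{R}}$ and, by Theorem \ref{FarXYZC:T-eigenstates}, satisfies $\overline{\mathcal{T}}^{\mathsf{(6VD)}}(\lambda|\tau)|\mathsf{t}_{\mathsf{6VD}}\rangle = \mathsf{t}_{\mathsf{6VD}}(\lambda)|\mathsf{t}_{\mathsf{6VD}}\rangle$ for every $\lambda$. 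Because \eqref{FarXYZRis-R} is an operator identity valid on all of $\mathbb{\bar{D}}_{\mathsf{(6VD)},\mathsf{N}}^{\mathcal{R}}$, I may evaluate both of its sides on this vector.

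Doing so yields
\begin{equation}
\mathsf{T}^{\mathsf{(8V)}}(\lambda)\,\text{\textsc{S}}_{q}^{\mathcal{R}}\mathbb{\bar{P}}_{\mathsf{N}}^{\mathcal{R}}|\mathsf{t}_{\mathsf{6VD}}\rangle = \text{\textsc{S}}_{q}^{\mathcal{R}}\mathbb{\bar{P}}_{\mathsf{N}}^{\mathcal{R}}\overline{\mathcal{T}}^{\mathsf{(6VD)}}(\lambda|\tau)|\mathsf{t}_{\mathsf{6VD}}\rangle = \mathsf{t}_{\mathsf{6VD}}(\lambda)\,\text{\textsc{S}}_{q}^{\mathcal{R}}\mathbb{\bar{P}}_{\mathsf{N}}^{\mathcal{R}}|\mathsf{t}_{\mathsf{6VD}}\rangle ,
\end{equation}
so the vector $|\psi\rangle \equiv \text{\textsc{S}}_{q}^{\mathcal{R}}\mathbb{\bar{P}}_{\mathsf{N}}^{\mathcal{R}}|\mathsf{t}_{\mathsf{6VD}}\rangle \in \mathbb{S}_{\mathsf{N}}^{\mathcal{R}}$ satisfies the eigenvalue equation of $\mathsf{T}^{\mathsf{(8V)}}(\lambda)$ with eigenvalue $\mathsf{t}_{\mathsf{6VD}}(\lambda)$, and this holds simultaneously for all spectral parameters $\lambda$.

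It then remains only to guarantee that $|\psi\rangle$ is a genuine, i.e. nonzero, eigenvector. By construction $|\psi\rangle$ vanishes exactly when $\mathbb{\bar{P}}_{\mathsf{N}}^{\mathcal{R}}|\mathsf{t}_{\mathsf{6VD}}\rangle$ lies in the kernel of $\text{\textsc{S}}_{q}^{\mathcal{R}}$, and the hypothesis of the lemma is precisely that this fails to hold. Hence $|\psi\rangle\neq 0$, which shows at once that $\mathsf{t}_{\mathsf{6VD}}(\lambda)$ is an eigenvalue of $\mathsf{T}^{\mathsf{(8V)}}(\lambda)$ and that $\text{\textsc{S}}_{q}^{\mathcal{R}}\mathbb{\bar{P}}_{\mathsf{N}}^{\mathcal{R}}|\mathsf{t}_{\mathsf{6VD}}\rangle$ is a corresponding eigenvector.

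I do not anticipate a real analytic obstacle here, since all the substantive work has already been absorbed into the intertwining identity \eqref{FarXYZRis-R} and into the observation, recorded just before the lemma, that $\text{\textsc{S}}_{q}^{\mathcal{R}}$ is not invertible on $\mathbb{S}_{\mathsf{N}}^{\mathcal{R}}$. The only point worth emphasizing is conceptual rather than technical: because $\text{\textsc{S}}_{q}^{\mathcal{R}}$ has a nontrivial kernel, the assignment $|\mathsf{t}_{\mathsf{6VD}}\rangle \mapsto |\psi\rangle$ may collapse to zero for certain dynamical $6$-vertex eigenvalues, which is exactly why the lemma provides only a \emph{sufficient} criterion; the non-kernel hypothesis is the minimal condition ruling out this degenerate possibility, and nothing beyond it is needed for the present claim.
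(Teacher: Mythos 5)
Your proof is correct and follows exactly the paper's own argument: apply the intertwining identity \eqref{FarXYZRis-R} to the eigenvector $|\mathsf{t}_{\mathsf{6VD}}\rangle$ and invoke the non-kernel hypothesis to ensure the image \textsc{S}$_{q}^{\mathcal{R}}\mathbb{\bar{P}}_{\mathsf{N}}^{\mathcal{R}}|\mathsf{t}_{\mathsf{6VD}}\rangle$ is a nonzero eigenvector of $\mathsf{T}^{\mathsf{(8V)}}(\lambda)$. Nothing further is needed.
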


\begin{proof}
Under the condition:%
\begin{equation}
\text{\textsc{S}}_{q}^{\mathcal{R}}\mathbb{\bar{P}_{\mathsf{N}}^{\mathcal{R}}%
}|\mathsf{t}_{\mathsf{6VD}}\rangle \neq \text{\b{0}}\in \mathbb{S}_{\mathsf{N%
}}^{\mathcal{R}},
\end{equation}%
the proposition is a simple consequence of the identity $\left( \ref%
{FarXYZRis-R}\right) $; indeed, it hold:%
\begin{equation}
\mathsf{T}^{\mathsf{(8V)}}(\lambda )\text{\textsc{S}}_{q}^{\mathcal{R}}%
\mathbb{\bar{P}_{\mathsf{N}}^{\mathcal{R}}}|\mathsf{t}_{6\text{$VD$}}\rangle
=\text{\textsc{S}}_{q}^{\mathcal{R}}\mathbb{\bar{P}_{\mathsf{N}}^{\mathcal{R}%
}}\overline{\mathcal{T}}^{\mathsf{(6VD)}}(\lambda |\tau )|\mathsf{t}_{6\text{%
$VD$}}\rangle =\text{\textsc{S}}_{q}^{\mathcal{R}}\mathbb{\bar{P}_{\mathsf{N}}^{\mathcal{R}}%
}|\mathsf{t}_{\mathsf{6VD}}\rangle \mathsf{t}_{\mathsf{6VD}%
}(\lambda ).
\end{equation}
\end{proof}

\textbf{Remark 2.} It is worth remarking that we need this criterion as, from
the results in subsection \ref{1ch-8v-eigenvalues}, we only know that
the set $\Sigma _{\mathsf{T}^{\mathsf{(8V)}}}$ of the eigenvalues of $%
\mathsf{T}^{\mathsf{(8V)}}(\lambda )$ is contained in the set $\Sigma _{%
\overline{\mathcal{T}}^{\mathsf{(6VD)}}}$ of the eigenvalues of $\overline{%
\mathcal{T}}^{\mathsf{(6VD)}}(\lambda |\tau )$ in the case of a chain with
an odd number of quantum sites. Moreover, it is important to clarify that
currently we have only proven that the above lemma defines a criterion, i.e.
a sufficient condition for an element of $\Sigma _{\overline{\mathcal{T}}^{%
\mathsf{(6VD)}}}$ to be also an element of $\Sigma _{\mathsf{T}^{\mathsf{(8V)%
}}}$. It will be fundamental to understand if this is also a necessary
condition as in this last case we will get a complete characterization of $%
\Sigma _{\mathsf{T}^{\mathsf{(8V)}}}$ and one eigenstate of $\mathsf{T}^{%
\mathsf{(8V)}}(\lambda )$ for any element of $\Sigma _{\mathsf{T}^{\mathsf{%
(8V)}}}$ just using the SOV characterization of the spectrum of $\overline{%
\mathcal{T}}^{\mathsf{(6VD)}}(\lambda |\tau )$. Finally, let us point out
that the fact that \textsc{S}$_{q}^{\mathcal{R}}\in $ End($\mathbb{S}_{%
\mathsf{N}}^{\mathcal{R}}$) is not invertible in $\mathbb{S}_{\mathsf{N}}^{%
\mathcal{R}}$ is just required to make the identity $\left( \ref{FarXYZRis-R}%
\right) $ compatible with the observed degeneracy of the spectrum of $%
\mathsf{T}^{\mathsf{(8V)}}(\lambda )$  for the cases $\mathsf{N}=1$ and $3$
explicitly analyzed in appendix. As in \ presence of this degeneracy the
proven simplicity of the spectrum of $\overline{\mathcal{T}}^{\mathsf{(6VD)}%
}(\lambda |\tau )$ implies that $\Sigma _{\mathsf{T}^{\mathsf{(8V)}}}$ must
be properly contained in $\Sigma _{\overline{\mathcal{T}}^{\mathsf{(6VD)}}}$.

\section{Conclusion}

In this paper we have focused our attention on the highest weight
representations of the dynamical 6-vertex Yang-Baxter algebra on a generic
spin-1/2 quantum chain with an odd number $\mathsf{N}$ of sites. We have
studied the integrable quantum model associated to the antiperiodic boundary
conditions in the framework of the SOV method. For this integrable quantum
models, we have derived:
\begin{itemize}
\item The complete SOV description of transfer matrix eigenvalues and
eigenstates and the simplicity of the spectrum.
\item Matrix elements of the identity on separate states expressed by one
determinant formulae of $\mathsf{N}\times\mathsf{N}$ matrices with elements
given by sums over the eigenvalues of the quantum separate variables of the
product of the coefficients of the left/right separate states, which holds
in particular for the eigenstates of antiperiodic dynamical 6-vertex
transfer matrix.
\end{itemize}
The results derived in this paper provide the required setup to compute
matrix elements on transfer matrix eigenstates of local operators. The
analysis of the following steps:
\begin{itemize}
\item local operator reconstructions in terms of Sklyanin's quantum separate
variables,
\item form factors of the local operators on the transfer matrix eigenstates
in determinant form,
\end{itemize}
will be presented in a paper which is currently under completion in
collaboration with Levy-Bencheton and Terras \cite{FarXYZ?NT12}. The study
of correlation functions done in \cite{FarXYZ?Terras12} for the periodic
dynamical 6-vertex chain even if developed in the framework of the algebraic
Bethe ansatz is relevant also for the current analysis. Indeed, the
reconstruction of local operators of \cite{FarXYZ?Terras12} can be adapted
to the antiperiodic dynamical 6-vertex chain to get reconstruction of local
operators in terms of the quantum separate variables. Then the computation
of the form factors for the antiperiodic dynamical 6-vertex proceed in a
similar way to that of the standard 6-vertex quantum chain with antiperiodic
boundary conditions as derived in \cite{FarXYZN12-0} in the SOV framework.
It is also worth mentioning that the knowledge of the form factors of local
operators represents also an efficient tools for controlled numerical
analysis of correlation functions. Indeed, the decomposition of the identity 
$\left( \ref{FarXYZId-decomp-T-eigenstates}\right) $ allows to rewrite the
correlation functions in terms of form factors and then it is a priori
possible to apply the same kind of approach developed in \cite{FarXYZCM05}
in the ABA framework\footnote{%
See the series of papers \cite{FarXYZCM05}-\cite{FarXYZCCS07} where the
dynamical structure factors, observable by neutron scattering experiments 
\cite{FarXYZBloch36}-\cite{FarXYZBalescu75}, were numerically evaluated.}
also in our SOV framework to get numerical evaluations of correlation
functions.

We have moreover shown that the existence of gauge transformations allows
us to use the antiperiodic dynamical 6-vertex transfer matrix as a tool to
further analyze the spectral problem of the periodic 8-vertex transfer
matrix in the case of a chain with $\mathsf{N}$ odd and for general values
of the coupling constant $\eta $ (non restricted to the \textit{elliptic
roots of unit}). The potential relevance of this analysis is made clear
observing that the standard Bethe ansatz analysis developed in \cite%
{FarXYZBa72-1, FarXYZBa72-2, FarXYZBa72-3,FarXYZFT79} does not apply to this
case. More in detail, we have shown that the gauge transformations allow to
define a criterion to select eigenvalues of the antiperiodic
dynamical 6-vertex transfer matrix which are also eigenvalues of the
periodic 8-vertex transfer matrix, moreover associating to any one of them
one nonzero periodic 8-vertex eigenstate. Finally, let us stress the importance to understand if this criterion also define a necessary condition for a antiperiodic dynamical 6-vertex eigenvalue to be also a periodic 8-vertex eigenvalue. Indeed, in this case the SOV characterization of the antiperiodic
dynamical 6-vertex spectrum will also allow the complete characterization of
the periodic 8-vertex eigenvalues and the construction of one of its
eigenstates for anyone of its eigenvalues. The answer to this fundamental
question requires a systematic and simultaneous analysis of the degeneracy
of the periodic 8-vertex spectrum and of the dimension of the kernel of the
operator \textsc{S}$_{q}^{\mathcal{R}}$ which we will try to address
elsewhere. \bigskip 

\textbf{Acknowledgments}\thinspace\ The author gratefully acknowledge B.
McCoy for the many stimulating discussions on the 8-vertex model and the
interesting questions on quantum separation of variables which have strongly
inspired and motivated the author to develop the present paper. The author
would also like to thank N. Kitanine, K. K. Kozlowski and J. M. Maillet for
their interest and D. Levy-Bencheton and V. Terras for their interest,
attentive reading and remarks on a first draft of this paper. The author is
supported by National Science Foundation grants PHY-0969739 and gratefully
acknowledges the YITP Institute of Stony Brook for the opportunity to
develop his research programs. The author would also like to thank for their
hospitality the Theoretical Physics Group of the Laboratory of Physics at
ENS-Lyon and the Mathematical Physics Group at IMB of the Dijon University
(under support ANR-10-BLAN-0120-04-DIADEMS).
\appendix
\section{Appendix}

Here we analyze explicitly the spectral problem of the transfer matrices $%
\mathsf{T}^{\mathsf{(8V)}}(\lambda )$ and $\overline{\mathcal{T}}^{\mathsf{%
(6VD)}}(\lambda |\tau )$ for the trivial cases of chains with $\mathsf{N}=1$
and $\mathsf{N}=3$. The aim is to make clear some of the above
statements and to have some basic analysis about the degeneracy of the $%
\mathsf{T}^{\mathsf{(8V)}}(\lambda )$ spectrum.

\subsection{Spectrum of $\mathsf{T}^{\mathsf{(8V)}}(\protect\lambda )$ and $%
\overline{\mathcal{T}}^{\mathsf{(6VD)}}(\protect\lambda |\protect\tau )$ for 
$\mathsf{N}=1$}

In the one site case it holds:%
\begin{equation}
\mathsf{T}^{\mathsf{(8V)}}(\lambda )=(\text{a}(\lambda )+\text{b}(\lambda
))\left( 
\begin{array}{cc}
1 & 0 \\ 
0 & 1%
\end{array}%
\right) \text{ \ on }\mathbb{S}_{\mathsf{N}=1}^{\mathcal{R}},
\end{equation}%
and from:%
\begin{equation}
\overline{\mathcal{T}}^{\mathsf{(6VD)}}(\lambda |\tau )=\mathsf{T}_{\tau
}^{+}\mathsf{C}(\lambda |\tau -\eta )+\mathsf{T}_{\tau }^{-}\mathsf{B}%
(\lambda |\tau +\eta )
\end{equation}%
it holds%
\begin{equation}
\overline{\mathcal{T}}^{\mathsf{(6VD)}}(\lambda |\tau )=c(\lambda |\eta
/2)\left( 
\begin{array}{cc}
0 & \mathsf{T}_{\tau }^{+} \\ 
\mathsf{T}_{\tau }^{-} & 0%
\end{array}%
\right) \text{ \ on }\mathbb{D}_{\mathsf{(6VD)},\mathsf{N}=1}^{\mathcal{R}}.
\end{equation}%
Then, we have that the spectrum of $\overline{\mathcal{T}}^{\mathsf{(6VD)}%
}(\lambda |\tau )$ is simple and characterized by:%
\begin{equation}
\overline{\mathcal{T}}^{\mathsf{(6VD)}}(\lambda |\tau )|\mathsf{t}_{\mathsf{%
6VD}}^{\left( \pm \right) }\rangle =|\mathsf{t}_{\mathsf{6VD}}^{\left( \pm
\right) }\rangle \mathsf{t}_{\mathsf{6VD}}^{\left( \pm \right) }(\lambda )
\end{equation}%
where we have defined: 
\begin{equation}
\mathsf{t}_{\mathsf{6VD}}^{\left( \pm \right) }(\lambda )\equiv \pm
c(\lambda |\eta /2),\text{ }|\mathsf{t}_{\mathsf{6VD}}^{\left( \pm \right)
}\rangle \equiv \left( 
\begin{array}{l}
\pm |t(1)\rangle  \\ 
|t(-1)\rangle 
\end{array}%
\right) \in \mathbb{D}_{\mathsf{(6VD)},\mathsf{N}=1}^{\mathcal{R}}\text{,}
\end{equation}%
and\ $|t(a)\rangle $ is the $\tau $ eigenstate with eigenvalue $t(a)\equiv
-\eta a/2$. Instead, the spectrum of $\mathsf{T}^{\mathsf{(8V)}}(\lambda )$ is double
degenerate with eigenvalue $\mathsf{t}_{\mathsf{8V}}(\lambda )=$a$(\lambda )+
$b$(\lambda )$, then the identity:%
\begin{equation}
\text{a}(\lambda )+\text{b}(\lambda )=c(\lambda |\eta /2)
\end{equation}%
implies the proper set inclusion $\Sigma _{\mathsf{T}^{\mathsf{(8V)}%
}}\subset \Sigma _{\overline{\mathcal{T}}^{\mathsf{(6VD)}}}$ for the $%
\mathsf{N}=1$ case. Let us now observe that:%
\begin{equation}
\text{\textsc{S}}_{q}^{\mathcal{R}}=S_{1}(\lambda |-\eta /2\sigma
_{1}^{z})\equiv \left( 
\begin{array}{cc}
\theta _{2}(\lambda +\eta /2|2w) & \theta _{2}(\lambda +\eta /2|2w) \\ 
\theta _{3}(\lambda +\eta /2|2w) & \theta _{3}(\lambda +\eta /2|2w)%
\end{array}%
\right) _{0}
\end{equation}%
and:
\begin{equation}
\text{\textsc{S}}_{q}^{\mathcal{R}}\mathbb{\bar{P}}_{\mathsf{N}=1}^{\mathcal{%
R}}|\mathsf{t}_{\mathsf{6VD}}^{(\pm )}\rangle =\left( 
\begin{array}{c}
\theta _{2}(\lambda +\eta /2|2w)(1\pm 1) \\ 
\theta _{3}(\lambda +\eta /2|2w)(1\pm 1)%
\end{array}%
\right) \in \mathbb{S}_{\mathsf{N}}^{\mathcal{R}}.
\end{equation}%
Then, in agreement with the Lemma \ref{Lemma-Ch-8v-SOV}, we have that $%
\mathsf{t}_{\mathsf{6VD}}^{\left( +\right) }(\lambda )$ is $\mathsf{T}^{%
\mathsf{(8V)}}(\lambda )$ eigenvalue and:%
\begin{equation}
\text{\textsc{S}}_{q}^{\mathcal{R}}\mathbb{\bar{P}}_{\mathsf{N}=1}^{\mathcal{%
R}}|\mathsf{t}_{\mathsf{6VD}}^{(+)}\rangle =2\left( 
\begin{array}{c}
\theta _{2}(\lambda +\eta /2|2w) \\ 
\theta _{3}(\lambda +\eta /2|2w)%
\end{array}%
\right) \in \mathbb{S}_{\mathsf{N}}^{\mathcal{R}},
\end{equation}%
is one corresponding eigenstate. It is also interesting to remark that the
eigenvalue $\mathsf{t}_{\mathsf{6VD}}^{\left( -\right) }(\lambda )$ of $%
\overline{\mathcal{T}}^{\mathsf{(6VD)}}(\lambda |\tau )$ for which it holds 
\textsc{S}$_{q}^{\mathcal{R}}\mathbb{\bar{P}}_{\mathsf{N}=1}^{\mathcal{R}}|%
\mathsf{t}_{\mathsf{6VD}}^{(-)}\rangle =$\b{0} is not an eigenvalue of $%
\mathsf{T}^{\mathsf{(8V)}}(\lambda )$.

\subsection{Spectrum of $\mathsf{T}^{\mathsf{(8V)}}(\protect\lambda )$ and $%
\overline{\mathcal{T}}^{\mathsf{(6VD)}}(\protect\lambda |\protect\tau )$ for 
$\mathsf{N}=3$}

\subsubsection{General statements on the spectrum for $\mathsf{N}=3$}

The system of equations which completely characterize the spectrum
(eigenvalues and eigenvectors) of $\overline{\mathcal{T}}^{\mathsf{(6VD)}%
}(\lambda |\tau )$ reads:%
\begin{equation}
x_{n}\left( \sum_{a=1}^{\mathsf{N}}J_{n,a}x_{a}\right) -q_{n}=0\text{ \ \ \
\ }\forall n\in \{1,...,\mathsf{N}\}.  \label{SOV-Ch-Spectrum_6VD}
\end{equation}%
It is an inhomogeneous system of $\mathsf{N}$ quadratic equations in the $%
\mathsf{N}$ unknown $x_{n}=\mathsf{t}_{\mathsf{6VD}}(\xi _{n})$ with
coefficients characterized by:%
\begin{equation}
J_{n,a}=\frac{\theta (t_{\text{\textbf{0}}}-\xi _{n}+\xi _{a}+\eta )}{\theta
(t_{\text{\textbf{0}}})}\prod_{b\neq a}\frac{\theta (\xi _{n}-\xi _{b}-\eta )%
}{\theta (\xi _{a}-\xi _{b})},\text{ \ \ }q_{n}=\text{\textsc{a}}(\xi
_{n}^{(0)})\text{\textsc{d}}(\xi _{n}^{(1)}),\text{ \ \ }\forall n\in
\{1,...,\mathsf{N}\},  \label{Coeff-SOV-Ch-Spectrum_6VD}
\end{equation}%
as it is simple to derive substituting in $\left( \ref{FarXYZset-t}\right) $
the interpolation formula $\left( \ref{FarXYZI-Functional-eq}\right) $ for $%
\mathsf{t}_{\mathsf{6VD}}(\xi _{n}^{(1)})$. It is trivial to observe that
the set of the solutions to this system has a $Z_{2}$ symmetry; i.e. if z$%
_{_{\mathsf{N}}}^{(+)}\equiv \{x_{1},...,x_{n},...,x_{\mathsf{N}}\}$ is a
solution of it then also z$_{_{\mathsf{N}}}^{\left( -\right) }\equiv
\{-x_{1},...,-x_{n},...,-x_{\mathsf{N}}\}$ is a solution. So from the
interpolation formula $\left( \ref{FarXYZI-Functional-eq}\right) $, it
follows that if $\mathsf{t}_{\mathsf{6VD}}^{(+)}(\lambda )\in \Sigma _{%
\overline{\mathcal{T}}^{\mathsf{(6VD)}}}$ then also $\mathsf{t}_{\mathsf{6VD}%
}^{\left( -\right) }(\lambda )=\left( -\mathsf{t}_{\mathsf{6VD}%
}^{(+)}(\lambda )\right) \in \Sigma _{\overline{\mathcal{T}}^{\mathsf{(6VD)}%
}}$.

In the case $\mathsf{N}=3$, we have verified that the system (\ref{SOV-Ch-Spectrum_6VD}) has $2^{3}$ distinct solutions. Then $\Sigma
_{\overline{\mathcal{T}}^{\mathsf{(6VD)}}}$ is composed by $2^{3}$ distinct
elliptic polynomials $\mathsf{t}_{\mathsf{6VD}}^{\left( \pm ,a\right)
}(\lambda )$ with $a\in \{1,...,4\}$ and so $\overline{\mathcal{T}}^{\mathsf{%
(6VD)}}(\lambda |\tau )$ has simple spectrum as proven in this paper.
Moreover, in the case of $\mathsf{N}=3$, we have studied the spectrum of $%
\mathsf{T}^{\mathsf{(8V)}}(\lambda )$ and we have observed that it has 4
distinct eigenvalues each one being double degenerate. We have observed that
the set $\Sigma _{\mathsf{T}^{\mathsf{(8V)}}}\equiv \{\mathsf{t}_{\mathsf{8V}%
}^{\left( 1\right) }(\lambda ),\mathsf{t}_{\mathsf{8V}}^{\left( 2\right)
}(\lambda ),\mathsf{t}_{\mathsf{8V}}^{\left( 3\right) }(\lambda ),\mathsf{t}%
_{\mathsf{8V}}^{\left( 4\right) }(\lambda )\}$ coincide with the set of
elliptic polynomials which are generated by using 4 distinct solutions $z_{_{%
\mathsf{N}}}^{(a)}$ of the system (\ref{SOV-Ch-Spectrum_6VD})
and\ the interpolation formula $\left( \ref{FarXYZI-Functional-eq}\right) $.
Moreover, the solutions $z_{_{\mathsf{N}}}^{(a)}$ with $a\in \{1,...,4\}$
used to construct $\Sigma _{\mathsf{T}^{\mathsf{(8V)}}}$ appear not to be
related by the $Z_{2}$ symmetry of the system (\ref{SOV-Ch-Spectrum_6VD}); in fact, it holds:%
\begin{equation}
z_{_{\mathsf{N}}}^{(a)}\neq -z_{_{\mathsf{N}}}^{(b)}\text{ \ for any }a,b\in
\{1,...,4\}.
\end{equation}%
Let us fix the notation:%
\begin{equation}
\mathsf{t}_{\mathsf{6VD}}^{\left( \pm ,a\right) }(\lambda )\equiv \pm 
\mathsf{t}_{\mathsf{8V}}^{\left( a\right) }(\lambda )\text{ for any }a\in
\{1,...,4\},
\end{equation}%
then we can summarize the above observations on the spectrum of $\overline{%
\mathcal{T}}^{\mathsf{(6VD)}}(\lambda |\tau )$ and $\mathsf{T}^{\mathsf{(8V)}%
}(\lambda )$ by saying that to the two nondegenerate eigenvalues $\mathsf{t}%
_{\mathsf{6VD}}^{\left( +,a\right) }(\lambda )$ and $\mathsf{t}_{\mathsf{6VD}%
}^{\left( -,a\right) }(\lambda )$ of $\overline{\mathcal{T}}^{\mathsf{(6VD)}%
}(\lambda |\tau )$ it is associates just the double degenerate eigenvalue $%
\mathsf{t}_{\mathsf{8V}}^{\left( a\right) }(\lambda )$ of $\mathsf{T}^{%
\mathsf{(8V)}}(\lambda )$. Our analysis shows that this statement holds for $%
\mathsf{N}=1$ and $\mathsf{N}=3$ to be able to verify if it persists for a
generic odd $\mathsf{N}$ can be one central step toward the characterization
of the spectrum of $\mathsf{T}^{\mathsf{(8V)}}(\lambda )$ in terms of the
SOV characterization of the $\overline{\mathcal{T}}^{\mathsf{(6VD)}}(\lambda
|\tau )$ spectrum.

\subsubsection{Some numerical data for $\mathsf{N}=3$}

For convenience of the reader, we write explicitly the periodic 8-vertex
transfer matrix $\mathsf{T}^{\mathsf{(8V)}}(\lambda )$ in the $\mathsf{N}=3$
case: 
\begin{equation*}
\text{ \ }{\tiny \left( 
\begin{array}{cccccccc}
\text{a}_{1}\text{a}_{2}\text{a}_{3}+\text{b}_{1}\text{b}_{2}\text{b}_{3} & 0
& 0 & \text{b}_{1}\text{d}_{2}\text{c}_{3}+\text{a}_{1}\text{c}_{2}\text{d}%
_{3} & 0 & \text{d}_{1}\text{a}_{2}\text{c}_{3}+\text{c}_{1}\text{b}_{2}%
\text{d}_{3} & \text{d}_{1}\text{c}_{2}\text{b}_{3}+\text{c}_{1}\text{d}_{2}%
\text{a}_{3} & 0 \\ 
0 & \text{b}_{1}\text{b}_{2}\text{a}_{3}+\text{a}_{1}\text{a}_{2}\text{b}_{3}
& \text{a}_{1}\text{c}_{2}\text{c}_{3}+\text{b}_{1}\text{d}_{2}\text{d}_{3}
& 0 & \text{c}_{1}\text{b}_{2}\text{c}_{3}+\text{d}_{1}\text{a}_{2}\text{d}%
_{3} & 0 & 0 & \text{d}_{1}\text{c}_{2}\text{a}_{3}+\text{c}_{1}\text{d}_{2}%
\text{b}_{3} \\ 
0 & \text{b}_{1}\text{c}_{2}\text{c}_{3}+\text{a}_{1}\text{d}_{2}\text{d}_{3}
& \text{a}_{1}\text{b}_{2}\text{a}_{3}+\text{b}_{1}\text{a}_{2}\text{b}_{3}
& 0 & \text{c}_{1}\text{c}_{2}\text{a}_{3}+\text{d}_{1}\text{d}_{2}\text{b}%
_{3} & 0 & 0 & \text{d}_{1}\text{b}_{2}\text{c}_{3}+\text{c}_{1}\text{a}_{2}%
\text{d}_{3} \\ 
\text{a}_{1}\text{d}_{2}\text{c}_{3}+\text{b}_{1}\text{c}_{2}\text{d}_{3} & 0
& 0 & \text{b}_{1}\text{a}_{2}\text{a}_{3}+\text{a}_{1}\text{b}_{2}\text{b}%
_{3} & 0 & \text{c}_{1}\text{c}_{2}\text{b}_{3}+\text{d}_{1}\text{d}_{2}%
\text{a}_{3} & \text{c}_{1}\text{a}_{2}\text{c}_{3}+\text{d}_{1}\text{b}_{2}%
\text{d}_{3} & \text{0} \\ 
0 & \text{c}_{1}\text{a}_{2}\text{c}_{3}+\text{d}_{1}\text{b}_{2}\text{d}_{3}
& \text{c}_{1}\text{c}_{2}\text{b}_{3}+\text{d}_{1}\text{d}_{2}\text{a}_{3}
& 0 & \text{b}_{1}\text{a}_{2}\text{a}_{3}+\text{a}_{1}\text{b}_{2}\text{b}%
_{3} & 0 & 0 & \text{a}_{1}\text{d}_{2}\text{c}_{3}+\text{b}_{1}\text{c}_{2}%
\text{d}_{3} \\ 
\text{d}_{1}\text{b}_{2}\text{c}_{3}+\text{c}_{1}\text{a}_{2}\text{d}_{3} & 0
& 0 & \text{c}_{1}\text{c}_{2}\text{a}_{3}+\text{d}_{1}\text{d}_{2}\text{b}%
_{3} & 0 & \text{a}_{1}\text{b}_{2}\text{a}_{3}+\text{b}_{1}\text{a}_{2}%
\text{b}_{3} & \text{b}_{1}\text{c}_{2}\text{c}_{3}+\text{a}_{1}\text{d}_{2}%
\text{d}_{3} & 0 \\ 
\text{d}_{1}\text{c}_{2}\text{a}_{3}+\text{c}_{1}\text{d}_{2}\text{b}_{3} & 0
& 0 & \text{c}_{1}\text{b}_{2}\text{c}_{3}+\text{d}_{1}\text{a}_{2}\text{d}%
_{3} & 0 & \text{a}_{1}\text{c}_{2}\text{c}_{3}+\text{b}_{1}\text{d}_{2}%
\text{d}_{3} & \text{b}_{1}\text{b}_{2}\text{a}_{3}+\text{a}_{1}\text{a}_{2}%
\text{b}_{3} & 0 \\ 
0 & \text{d}_{1}\text{c}_{2}\text{b}_{3}+\text{c}_{1}\text{d}_{2}\text{a}_{3}
& \text{d}_{1}\text{a}_{2}\text{c}_{3}+\text{c}_{1}\text{b}_{2}\text{d}_{3}
& 0 & \text{b}_{1}\text{d}_{2}\text{c}_{3}+\text{a}_{1}\text{c}_{2}\text{d}%
_{3} & 0 & 0 & \text{a}_{1}\text{a}_{2}\text{a}_{3}+\text{b}_{1}\text{b}_{2}%
\text{b}_{3}%
\end{array}%
\bigskip \right) ,}
\end{equation*}%
where we have used the notations:%
\begin{equation*}
\text{a}_{n}=\text{a}(\lambda -\xi _{n}),\text{ b}_{n}=\text{b}(\lambda -\xi
_{n}),\text{ c}_{n}=\text{c}(\lambda -\xi _{n}),\text{ d}_{n}=\text{d}%
(\lambda -\xi _{n}).
\end{equation*}%
To verify the statements done in the previous subsection it is only need to
write in Mathematica the previous $8\times 8$ matrix and the system of
equations (\ref{SOV-Ch-Spectrum_6VD}) for $\mathsf{N}=3$ and
solve the eigenvalue problem and the system for generic values of the 5
parameters $(\xi _{1},\xi _{2},\xi _{3},\eta ,t=e^{i\pi w})$. Here we report
just few numerical data as a confirmation that we have really implemented
this numerical exercise. Defined w$_{3}^{(a)}=\{\mathsf{t}_{\mathsf{8V}%
}^{\left( a\right) }(\xi _{1}),\mathsf{t}_{\mathsf{8V}}^{\left( a\right)
}(\xi _{2}),\mathsf{t}_{\mathsf{8V}}^{\left( a\right) }(\xi _{3})\}$, it
holds:

1) For $\xi _{1}=5.7,\xi _{2}=1.5,\xi _{3}=0.22,\eta =0.7,t=0.26$:
\begin{equation*}
\begin{array}{l}
\text{z}_{3}^{(\pm ,1)}=\pm
\{2.4648971133384494,0.5263660613291964,-0.0461646762536026\} \\ 
\text{z}_{3}^{(\pm ,2)}=\pm
\{0.16746377944367666,0.09438584696000717,-3.7893847598813264\} \\ 
\text{z}_{3}^{(\pm ,3)}=\pm
\{0.15697838428546823,0.5124574129431847,-0.7445585159876167\} \\ 
\text{z}_{3}^{(\pm ,4)}=\pm
\{0.02568158650662899,3.433163601035112,-0.679328947667353\} \\ 
\text{w}_{3}^{(1)}=\{2.46489711333845,0.5263660613291976,-0.0461646762536022%
\} \\ 
\text{w}_{3}^{(2)}=\{0.167463779423851,0.0943858469664461,-3.789384759881333%
\} \\ 
\text{w}_{3}^{(3)}=%
\{0.15697838428547273,0.5124574129431814,-0.7445585159876165\} \\ 
\text{w}_{3}^{(4)}=%
\{0.025681586506630664,3.4331636010351154,-0.6793289476673527\}.%
\end{array}%
\end{equation*}

2) For $\xi _{1}=2.5,\xi _{2}=3.1,\xi _{3}=1.33,\eta =0.3,t=0.45$:
\begin{equation*}
\begin{array}{l}
\text{z}_{3}^{(\pm ,1)}=\pm
\{-2.3672052885387806,-0.03421683553328285,0.560404707906603\} \\ 
\text{z}_{3}^{(\pm ,2)}=\pm
\{0.1607220217069632,7.959749585813279,0.03548156343430941\} \\ 
\text{z}_{3}^{(\pm ,3)}=\pm
\{0.14344459641406113,0.5655603642746968,0.5595184106850913\} \\ 
\text{z}_{3}^{(\pm ,4)}=\pm
\{0.009963704747040916,0.5039536632240319,9.03990912589408\} \\ 
\text{w}_{3}^{(1)}=%
\{-2.367205288523499,-0.034216835529656396,0.5604047079065965\} \\ 
\text{w}_{3}^{(2)}=\{0.1607220217069637,7.95974958581329,0.03548156343431045%
\} \\ 
\text{w}_{3}^{(3)}=%
\{0.14344459639912585,0.5655603642711194,0.5595184106850958\} \\ 
\text{w}_{3}^{(4)}=%
\{0.009963750993033916,0.5039536669291063,0.5595184106850958\}.%
\end{array}%
\end{equation*}

3) For $\xi _{1}=1.7,\xi _{2}=3.5,\xi _{3}=5.22,\eta =4.7,t=0.05$:
\begin{equation*}
\begin{array}{l}
\text{z}_{3}^{(\pm ,1)}=\pm
\{0.9071447507669119,0.0010355130798548361,-0.6163903868766624\} \\ 
\text{z}_{3}^{(\pm ,2)}=\pm
\{-0.18602724783757033,-0.02888852650572982,-0.10774226124070294\} \\ 
\text{z}_{3}^{(\pm ,3)}=\pm
\{0.13725423857934435,-0.024752594653532196,0.1704282336621456\} \\ 
\text{z}_{3}^{(\pm ,4)}=\pm
\{-0.04740255397294748,0.8919753005921505,0.013694099141681645\} \\ 
\text{w}_{3}^{(1)}=%
\{0.907144750766913,0.001035513079898853,-0.6163903868766655\} \\ 
\text{w}_{3}^{(2)}=%
\{-0.18602724783757013,-0.028888526505732478,-0.10774226124070306\} \\ 
\text{w}_{3}^{(3)}=%
\{0.13725423857934346,-0.02475259465352673,0.17042823366214616\} \\ 
\text{w}_{3}^{(4)}=%
\{-0.04740255397294748,0.8919753005921487,0.013694099141681883\}.%
\end{array}%
\end{equation*}

4) For $\xi _{1}=49.7,\xi _{2}=10.5,\xi _{3}=12.22,\eta =5.87,t=0.726$:
\begin{equation*}
\begin{array}{l}
\text{z}_{3}^{(\pm ,1)}=\pm
\{0.158866785906656,-0.002317414600871322,0.004665001427754174\} \\ 
\text{z}_{3}^{(\pm ,2)}=\pm
\{0.004163560745980359,-0.13352504997041553,0.0030893063326063934\} \\ 
\text{z}_{3}^{(\pm ,3)}=\pm
\{0.0027572370077268236,-7.693461066977195,0.00008096415168424851\} \\ 
\text{z}_{3}^{(\pm ,4)}=\pm
\{-0.001396539108516703,-0.13352504998006434,-0.009210278823835091\} \\ 
\text{w}_{3}^{(1)}=%
\{0.15886678590666517,-0.0023174146009546297,0.0046650014277542385\} \\ 
\text{w}_{3}^{(2)}=%
\{0.004163560745980381,-0.13352504997042003,0.003089306332606317\} \\ 
\text{w}_{3}^{(3)}=%
\{0.002757237007726877,-7.693461066977227,0.00008096415168424613\} \\ 
\text{w}_{3}^{(4)}=%
\{-0.001396539108516455,-0.133525049979987,-0.009210278823835037\}.%
\end{array}%
\end{equation*}

5) For $\xi _{1}=11.2,\xi _{2}=1.1,\xi _{3}=0.82,\eta =3.3,t=0.096$:
\begin{equation*}
\begin{array}{l}
\text{z}_{3}^{(\pm ,1)}=\pm
\{-0.13845098667904934,-0.04279356398629822,0.017867992946492404\} \\ 
\text{z}_{3}^{(\pm ,2)}=\pm
\{0.12350539448737866,0.022662651149136445,0.03782279719611843\} \\ 
\text{z}_{3}^{(\pm ,3)}=\pm
\{0.11482851797211138,-0.02822854036213841,-0.032659693368688764\} \\ 
\text{z}_{3}^{(\pm ,4)}=\pm
\{-0.10167300872962227,0.052191832632450655,-0.019949961538809933\} \\ 
\text{w}_{3}^{(1)}=%
\{-0.13845098667905043,-0.04279356398629837,0.01786799294649241\} \\ 
\text{w}_{3}^{(2)}=%
\{0.1235053944873589,0.022662651149137868,0.03782279719611853\} \\ 
\text{w}_{3}^{(3)}=%
\{0.11482851797211588,-0.02822854036213898,-0.032659693368688944\} \\ 
\text{w}_{3}^{(4)}=%
\{-0.10167300872962239,0.05219183263245088,-0.019949961538809936\}.%
\end{array}%
\end{equation*}%
In all these cases (up to small numerical errors) it is possible to observe
that any eigenvalue of $\mathsf{T}^{\mathsf{(8V)}}(\lambda )$ is double
degenerate and that the following identities hold:%
\begin{equation}
\text{z}_{3}^{(+,a)}=\text{w}_{3}^{(a)}\text{ for any }a\in \{1,...,4\}.
\end{equation}

\begin{small}

\end{small}

\end{document}